\theoremstyle{plain}
\newtheorem{theorem}{Theorem}
\newtheorem{lemma}{Lemma}
\newtheorem{proposition}{Proposition}
\theoremstyle{definition}
\def\le{\leqslant}
\def\ge{\geqslant}
\def\T{\mathsf{T}}
\def\Z{\mathbb{Z}}
\def\R{\mathbb{R}}
\def\a{\boldsymbol{a}}
\def\cc{\boldsymbol{c}}
\def\e{\boldsymbol{e}}
\def\x{\boldsymbol{x}}
\def\y{\boldsymbol{y}}
\def\z{\boldsymbol{z}}
\def\w{\boldsymbol{w}}
\def\vv{\boldsymbol{v}}
\def\calA{\mathcal{A}}
\def\calL{\mathcal{L}}
\def\med{\mathrm{med}~}
\def\bigL{\mathsf{L}}
\def\K{\mathsf{K}}
\def\id{\operatorname{id}}
\newcommand{\frederic}[1]{\textcolor{blue}{#1}}
\begin{document}
\title[]{Envy-free division of multi-layered cakes}
\author{Ayumi Igarashi and Fr\'ed\'eric Meunier}
\keywords{Resource allocation; envy-freeness; multi-layered cake-cutting}

\maketitle

\begin{abstract}
We study the problem of dividing a multi-layered cake under non-overlapping constraints. 
This problem, recently proposed by Hosseini et al. (IJCAI, 2020), captures several natural scenarios such as the allocation of multiple facilities over time where each agent can utilize at most one facility simultaneously, and the allocation of tasks over time where each agent can perform at most one task simultaneously. We establish the existence of an envy-free multi-division that is both non-overlapping and contiguous within each layer when the number of agents is a prime power and the number of layers is at most the number of agents, providing a positive partial answer to an open question by Hosseini et al. To achieve this, we employ a new approach based on a general fixed point theorem, originally proven by Volovikov (\emph{Mathematical Notes}, 1996), and recently applied by Joji\'{c}, Panina, and {\v{Z}}ivaljevi\'{c} (\emph{SIAM Journal on Discrete Mathematics}, 2020) to the envy-free division problem of a cake. We further design a fully polynomial-time approximation scheme (FPTAS) to find an approximate envy-free solution that is both non-overlapping and contiguous for a two-layered cake division among three agents with monotone preferences. 

More generally, we establish all the results for divisions among groups of almost the same size.  
When there are three groups and one layer, our FPTAS is actually able to deal with groups of any predetermined sizes, still under the monotone preference assumptions. For three groups, this provides thus an algorithmic version of a recent theorem by Segal-Halevi and Suksompong (\emph{The American Mathematical Monthly}, 2021). A classical lemma from topology ensures that every self-map of a simplex mapping each facet to itself is surjective. This lemma has often been used in economics theory and our algorithm can be interpreted as an FPTAS counterpart of its two-dimensional version.
\end{abstract}

\section{Introduction}
Imagine $n$ students taking a gymnasium course. The professor of the course wishes to divide the students into groups and assign $m$ activities over a given period of time. Students may have different opinions about the time slot and activity to which they would like to be assigned; for instance, some may prefer to swim in the morning and play basketball in the afternoon, while the others may want to play ping pong as long as possible. 

The problem of fairly distributing a resource has been often studied in the classical cake-cutting model \cite{Steinhaus48}, where the cake, represented by the unit interval, is to be divided among heterogenous agents. A variety of cake-cutting techniques have been developed over the past decades; in particular, the existence and algorithmic questions concerning an \emph{envy-free} division, where each agent receives her first choice under the given division, 
have turned out to involve highly non-trivial arguments \cite{DubinsSpanier,stromquist1980cut,su1999rental,woodall1980dividing}. 

In the above example of assigning multiple activities, however, one cannot trivially reduce the problem to the one-dimensional case. Indeed, if we merely divide the $m$ time intervals independently, this may result in a division that is not feasible, i.e., it may assign overlapping time intervals to the same agent who can perform at most one activity at a given time. 
In order to capture such constraints, Hosseini, Igarashi, and Searns \cite{hosseini2020fair} have recently introduced the multi-layered cake-cutting problem. There, $n$ agents divide a cake formed by $m$ different layers under the \emph{feasibility} constraint: the pieces of different layers assigned to the same agent should be non-overlapping, i.e., these pieces should have disjoint interiors. Besides an application to assign activities, the model can capture a plethora of real-life situations. For instance, 
consider a situation with several tasks running all day long in parallel (e.g., a fast-food), and workers turning between these tasks and who cannot be assigned to more than one task simultaneously.

For the special case of two layers and two agents, Hosseini et al. showed that the cut-and-choose protocol achieves envy-free division that is both feasible and contiguous within each layer by a single cut located at the same position over the two layers; in other words, the division is obtained by a ``long knife."
For a more general combination of positive integers $m$ and $n$ with $m \le n$, it has remained an open question whether there is an envy-free multi-division that is both contiguous and feasible, even in the special case when $m=2$ and $n=3$. 
Though, when the contiguity requirement is relaxed, they showed the existence of envy-free feasible multi-divisions~\cite{hosseini2020arxiv}. 
Note that when the number of layers strictly exceeds the number of agents, i.e., $m>n$, there is no way to allocate the entire layered cake while satisfying feasibility.

Now, returning to our first example, recall that in that example, we wish to assign activities to \emph{groups} of students, instead of \emph{individuals}. 
In the above example with workers assigned to tasks, it also makes sense to consider a variant with groups of workers staying together all day long. In this paper, we consider a ``group'' generalization of the multi-layered cake-cutting problem introduced by Hosseini et al.~\cite{hosseini2020fair}. The aim is to divide a multi-layered cake into $q$ pieces, partition $n$ agents into $q$ groups of almost equal size, and assign the pieces to groups in a fair manner.
Our focus is on envy-free divisions of a multi-layered cake under feasible and contiguous constraints. 

We first show that when $q$ is a prime power and $m \le q\le n$, an envy-free multi-division that is both feasible and contiguous exists under mild conditions on preferences (especially they are not required to be monotone). Our proof relies on a general Borsuk--Ulam-type theorem proven by Volovikov~\cite{volovikov1996topological}, whose application for this type of problems has been initiated by Joji{\'c}, Panina, and {\v{Z}}ivaljevi\'{c} \cite{jojic2019splitting}; we comment on this proof technique in the beginning of Section~\ref{sec:long}. Further, such divisions can be obtained using $q-1$ cuts in each layer located at the same positions, i.e., the divisions are obtained by $q-1$ long knives. 
(In the initial example of a gymnasium, we can ensure for free that all activities begin and end at the same time.) Note that when $q=n$, our problem reduces to the setting of Hosseini et al., which means that the present paper settles their open question partially. 
Unfortunately, our existence result is the best one could hope to obtain, under the general preference model with choice functions: 
Avvakumov and Karasev \cite{Avvakumov_2020} and Panina and {\v{Z}}ivaljevi\'c \cite{panina2021envyfree} provided examples of a cake-cutting instance with choice functions for which no envy-free division exists, for every choice of $q$ that is not a prime power.

Our existence result concerning envy-freeness also answers another open question raised by Hosseini et al.: 
By a recursive procedure initialized with the above existence result, we prove the existence of a {\em proportional} multi-division that is feasible and contiguous for any $q$ with $m \le q  \le n$ when agents have additive valuations, and thus properly generalizes the known existence result when $m$ is a power of $2$, $m \le n$, and $q=n$~\cite{hosseini2020fair}; see Section \ref{sec:concluding} for details.

Hosseini et al. also discussed the question of a procedure achieving an envy-free multi-division in the two-layered case. They considered divisions obtained by cutting the top layer with one ``short knife'' and dividing the rest with one ``long knife.'' For this particular problem, we observe that one can encode such divisions by the points of the unit square. This way, a Sperner-type argument turns out to be applicable: there is an envy-free multi-division using only one short knife and one long knife when agents have monotone preferences. By exploiting the monotonicity that arises when fixing the long knife position, we further devise a fully polynomial-time approximation scheme (FPTAS) to compute an approximate envy-free multi-division. Moreover, when there is only one layer, the algorithm can handle the case where the sizes of the groups can be set arbitrarily, which corresponds to the algorithmic version of the three-group case of the recent existence result by Segal-Halevi and Suksompong~\cite{Segal2021}. 

Most results of the paper are actually stated and proved for \emph{birthday cake multi-divisions}, i.e., divisions of the cake such that whichever piece a \emph{birthday agent} selects, there is an envy-free assignment of the remaining pieces to the remaining agents.

\smallskip
\noindent
{\bf Related work}
Early literature on cake-cutting has established the existence of an envy-free contiguous division under \emph{closed} (if the $i$-th piece is preferred in a convergent sequence of divisions, it is preferred in the limit) and \emph{hungry} preferences (pieces of nonzero-length are preferred over pieces of zero-length). While classical works~\cite{stromquist1980cut,woodall1980dividing} applied non-constructive topological proofs, Su~\cite{su1999rental} provided a more combinatorial argument by explicitly using Sperner's lemma. 
Recently, the problem of dividing a partially unappetizing cake has attracted a great deal of attention. Here, some agents may find that a part of the cake is unappetizing and prefer nothing, while others may find it tasty (agents may have \emph{non-hungry} preferences). 
Even in such cases, an envy-free division only using $n-1$ cuts has been shown to exist for a particular number $n$ of agents under the assumption of closed preferences \cite{Avvakumov_2020,jojic2019splitting,meunier2019envy, panina2021envyfree,Halevi2018}. The most general result obtained so far is the one by Avvakumov and Karasev \cite{Avvakumov_2020}, who showed the existence of an envy-free division for the case when $n$ is a prime power.
Joji{\'c}, Panina, and {\v{Z}}ivaljevi\'{c} \cite{jojic2019splitting} gave an alternative proof of the result of Avvakumov and Karasev, 
by using Volovikov's theorem \cite{volovikov1996topological}.
Pania and {\v{Z}}ivaljevi\'{c}~\cite{panina2021envyfree} pushed further this technique to refine this kind of results allowing non-hungry preferences. 

An important tool in the proof of Avvakumov and Karasev, as well as in the proof of Joji{\'c}, Panina, and {\v{Z}}ivaljevi\'{c}, is {\em Gale's averaging trick}, introduced by Gale~\cite{gale1984equilibrium}, in the context of an exchange economy. As far as we know, the first paper applying Gale's averaging trick in the context of envy-free cake-cutting is a paper by Asada et al.~\cite{Asada2018}. Roughly speaking, it consists in considering an aggregated preference function to which a topological result is applied, and then in recovering information for each agent by applying some flow argument. In the present work, we use Volovikov's theorem and the Gale's averaging trick.

In his classical work on cake-cutting, Woodall~\cite{woodall1980dividing} also proved that an envy-free division can be obtained without knowing one agent's preference. 
For example, the cut-and-choose protocol does not need the chooser's preference to obtain an envy-free division. More generally, for any number $n$ of agents, there is a division of the cake into $n$ contiguous pieces such that whichever piece a birthday agent selects, there is an envy-free assignment of the remaining pieces to the remaining agents. Asada et al.~\cite{Asada2018} gave a simple combinatorial proof that shows the existence of such division. 

The vast majority of the literature on cake-cutting is concerned with an allocation among single agents. A notable exception is the work of Segal-Halevi and Suksompong~\cite{SSgroup}, who introduced the standard cake-cutting problem among groups of agents. They established the existence of an envy-free division among groups of varying sizes, showing that it is possible to partition a cake into $q$ contiguous pieces as well as agents into $q$ groups of any desired sizes, and assign the pieces to the groups so that no agent prefers a piece assigned to another group to the piece assigned to her own group.

In general, there is no finite protocol that computes an exact envy-free division even for three agents \cite{Stro08a}, though such protocol exists when relaxing the contiguity requirement \cite{AzizM16}. Nevertheless, for three agents with monotone valuations, Deng et al.~\cite{Deng2012} proved that an $\varepsilon$-approximate envy-free division can be computed in logarithmic time of $\frac{1}{\varepsilon}$, while obtaining PPAD-hardness of the same problem for choice functions whose choice is given explicitly by polynomial time algorithms. Our method for establishing the FPTAS exploits monotonicity in agents' preferences and employs a divide-and-conquer approach similar to that of Deng et al. A difference is that while Deng et al. used a triangulation of the two-dimensional standard simplex, we subdivide the unit square into small squares and compute by binary-search thinner and thinner full-height rectangles containing an approximate solution. Our algorithmic result holds for a more general version, considering simultaneously a birthday agent, groups, and two layers. 

In a different context, Deligkas et al.~\cite{Deligkas2021} also observed that partial monotonicity is useful to design a similar binary-search algorithm to find a consensus-halving among two agents. There, one agent has a monotone preference and another agent has a continuous preference. We would like to emphasize, however, that our result for a two-layered cake does not assume continuity of the birthday agent while they assume continuity.

Several papers also studied the fair division problem in which agents divide multiple cakes~\cite{cloutier2010two,lebert2013envy,nyman2020fair,Segal2021}. This model requires each agent to receive at least one nonempty piece of each cake \cite{cloutier2010two,lebert2013envy,nyman2020fair} or receive pieces on as few cakes as possible~\cite{Segal2021}. On the other hand, our setting requires the allocated pieces to be non-overlapping. Thus, the existence/non-existence of envy-free divisions in one setting does not imply those for another.

\section{Model}\label{sec:prelim}
We consider the setting of Hosseini, Igarashi, and Searns \cite{hosseini2020fair}, except that we allow slightly more general preferences and that we aim to obtain a division among groups of agents. We are given $m$ layers, $n$ agents, and a positive integer $q$ with $1 \le q \le n$. A \emph{cake} is the unit interval $[0,1]$. A {\em piece} of cake is a union of finitely many disjoint closed subintervals of $[0,1]$. We refer to a subinterval of $[0,1]$ as a {\em contiguous piece} of cake. An {\em $m$-layered cake} is a sequence of $m$ cakes $[0,1]$, each being a {\em layer}. A {\em layered piece} is a sequence $\calL=(L_{\ell})_{\ell \in [m]}$ of pieces of each layer $\ell$; a layered piece is {\em contiguous} if each $L_{\ell}$ is a contiguous piece of layer $\ell \in [m]$. 
A layered piece $\calL$ is {\em non-overlapping} if no two pieces from different layers overlap, i.e., $L_{\ell} \cap L'_{\ell}$ is empty or formed by finitely many points for every pair of distinct layers $\ell,\ell'$. 
The {\em length} of a layered piece is the sum of the lengths of its pieces in each layer. 
A {\em multi-division} $\calA=(\calA_1,\calA_2,\ldots,\calA_q)$ is a $q$-tuple forming a partition of the $m$-layered cake into $q$ layered pieces.
(Here, ``partition'' is used in a slightly abusive way: while the collection covers the layered cake and the layered pieces have disjoint interiors, we allow the latter to share endpoints.) A multi-division $\calA$ is 
\begin{itemize}
\item {\em contiguous} if $\calA_i$ is contiguous for each $i \in [q]$. 
\item {\em feasible} if $\calA_i$ is non-overlapping for each $i \in [q]$. 
\end{itemize}
We focus in this work on \emph{complete} multi-divisions where the entire layered cake must be allocated.

Each agent $i$ has a {\em choice function} $c_i$ that, given a multi-division, returns the set of {\em preferred} layered pieces (among which the agent is indifferent). This function returns the same set of pieces over all permutations of the entries of the multi-division:
\[
c_i(\calA_1,\calA_2,\ldots,\calA_q) = c_i(\calA_{\rho(1)},\calA_{\rho(2)},\ldots,\calA_{\rho(q)}) \quad \forall \rho \in \mathcal{S}_q \, .
\]
The choice function model is used in \cite{MeunierSu,su1999rental} and more general than the valuation model while the latter is more standard in fair division.

An agent $i$ \emph{weakly prefers} a layered piece $\calA_j$ to another layered piece $\calA_{j'}$ within a multi-division $\calA$ if
\[
\calA_{j'} \in c_i(\calA) \qquad \Longrightarrow \qquad \calA_{j} \in c_i(\calA) \, .
\]
An agent has {\em hungry preferences} if in any multi-division every layered piece of nonzero-length is weakly preferred to every layered piece of zero-length. An agent $i$ has {\em monotone preferences} if every pair $\calA, \calA'$ of multi-divisions with a $j$ such that $\calA_j \subseteq \calA'_j$ and $\calA_{j'} \supseteq \calA'_{j'}$ for all $j' \neq j$ satisfies
\[
\calA_j \in c_i(\calA) \qquad \Longrightarrow \qquad \calA'_j \in c_i(\calA') \, .
\]
An agent $i$ has {\em closed preferences} if the following holds: for every sequence $(\calA^{(t)})_{t\in\Z_+}$ of multi-divisions converging to a multi-division $\calA^{(\infty)}$, we have
\[
\calA_j^{(t)} \in c_i(\calA^{(t)})\quad \forall t\in\Z_+ \qquad \Longrightarrow \qquad \calA_j^{(\infty)} \in c_i(\calA^{(\infty)}) \, .
\]
The convergence of layered pieces is considered according to the pseudo-metric $d(\calL,\calL')=\mu(\calL \triangle \calL')$; a sequence of multi-divisions is converging if each of its layered pieces converges. Here, $\mu$ is the Lebesgue measure and $\calL \triangle \calL'=( (L_{\ell} \setminus L'_{\ell}) \cup (L'_{\ell} \setminus L_{\ell}))_{\ell \in [m]}$.
A multi-division $\calA=(\calA_1,\calA_2,\ldots,\calA_q)$ is {\em envy-free} if there exists a surjective assignment $\pi\colon[n] \rightarrow [q]$ such that $\calA_{\pi(i)} \in c_i(\calA)$ for all $i \in [n]$. Note that when $q=n$, our definition coincides with the standard definition of an envy-free division that assigns each piece to a single agent. 
In general, our results are stated with a \emph{birthday agent}, who is not taken into account for defining or computing multi-divisions, but is still considered for the overall envy-freeness.

We also consider a setting where each agent can specify the valuation of each layered piece. Each agent $i$ has a {\em valuation function} $v_i$ that assigns a real value $v_i(\calL)$ to any layered piece $\calL$. 
A valuation function naturally gives rise to a choice function that among several layered pieces, returns the most valuable layered pieces.
A valuation function $v_i$ satisfies 
\begin{itemize}
    \item {\em monotonicity} if $v_i(\calL) \le v_i(\calL')$ for any pair of layered pieces $\calL, \calL'$ such that $L_{\ell} \subseteq L'_{\ell}$ for every $\ell \in [m]$. 
     \item {\em the Lipschitz condition} if there exists a fixed constant $K$ such that for every pair of layered pieces $\calL,\calL'$,  $|v_i(\calL)-v_i(\calL')| \le K \times \mu(\calL \triangle \calL')$.
\end{itemize}
It is easy to see that monotonicity along with the Lipschitz condition implies that the hungry assumption is satisfied: the Lipschitz condition implies that all layered pieces of zero-length have the same value; monotonicity then implies that every layered piece has a value at least that value.

For an instance with agents' valuation functions, one can define concepts of approximate envy-freeness. 
A multi-division $\calA=(\calA_1,\calA_2,\ldots,\calA_q)$ is \emph{$\varepsilon$-envy-free} if 
there exists a surjective assignment $\pi\colon[n] \rightarrow [q]$ such that for all $i \in [n]$, $v_i(\calA_{\pi(i)})+\varepsilon \ge \max_{i' \in [n]}v_i(\calA_{i'})$.
For an instance with agents' valuation functions, we assume that $v_i(\calL)$ can be accessed in constant time for any agent $i$ and layered piece $\calL$. 

There are several ways to achieve feasibility and contiguity constraints. The easiest way is probably with \emph{long knives} that cut all layers simultaneously. But we can also use \emph{short knives} that cut only a single layer at a time. See Figure~\ref{fig:long:short} for an illustration. For our result ensuring the existence of an envy-free multi-division for any number of layers, we will consider only long knives, while for the FPTAS for the two-layered cake, the multi-division will be obtained with one short and one long knife.

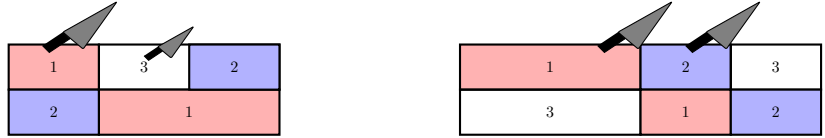
\begin{figure}[htb]
\centering
\begin{tikzpicture}[scale=0.6, transform shape]
       \draw[thick,fill=red!30] (0,0) rectangle (2,1);
        \draw[thick] (2,0) rectangle (6,1);
        \draw[thick,fill=blue!30] (0,-1) rectangle (2,0);
        \draw[thick,fill=red!30] (2,-1) rectangle (6,0);
        \draw[thick,fill=blue!30] (4,0) rectangle (6,1);
        
        \node at (1,0.5) {$1$};
        \node at (4,-0.5) {$1$};
        \node at (5,0.5) {$2$};
        \node at (1,-0.5) {$2$};
        \node at (3,0.5) {$3$};
        
        \begin{scope}[scale=1.5,xshift=0.5cm,yshift=1.4cm]
		   \draw [rounded corners=0.2mm,fill=black!50] (0.6,-0.8)--(1.1,-0.1)--(0.3,-0.55)--cycle;
		   \draw [fill=black] (0.3,-0.55) -- (0.4,-0.65) -- (0.1,-0.85)-- (0,-0.75);
		\end{scope}
		
        \begin{scope}[scale=1,xshift=3cm,yshift=1.5cm]
		   \draw [rounded corners=0.2mm,fill=black!50] (0.6,-0.8)--(1.1,-0.1)--(0.3,-0.55)--cycle;
		   \draw [fill=black] (0.3,-0.55) -- (0.4,-0.65) -- (0.1,-0.85)-- (0,-0.75);
		\end{scope}
        
        \begin{scope}[xshift=10cm,yshift=0cm]
            \draw[thick,fill=red!30] (0,0) rectangle (4,1);
            \draw[thick,fill=blue!30] (4,0) rectangle (6,1);
            \draw[thick] (6,0) rectangle (8,1);
            
            \draw[thick] (0,-1) rectangle (4,0);
            \draw[thick,fill=red!30] (4,-1) rectangle (6,0);
            \draw[thick,fill=blue!30] (6,-1) rectangle (8,0);
            
            \node at (2,0.5) {$1$};
            \node at (5,-0.5) {$1$};
            \node at (5,0.5) {$2$};
            \node at (7,-0.5) {$2$};
            \node at (7,0.5) {$3$};
            \node at (2,-0.5) {$3$};
        \end{scope}        
        
        \begin{scope}[scale=1.5,xshift=8.7cm,yshift=1.4cm]
		   \draw [rounded corners=0.2mm,fill=black!50] (0.6,-0.8)--(1.1,-0.1)--(0.3,-0.55)--cycle;
		   \draw [fill=black] (0.3,-0.55) -- (0.4,-0.65) -- (0.1,-0.85)-- (0,-0.75);
		\end{scope}
		
        \begin{scope}[scale=1.5,xshift=10cm,yshift=1.4cm]
		   \draw [rounded corners=0.2mm,fill=black!50] (0.6,-0.8)--(1.1,-0.1)--(0.3,-0.55)--cycle;
		   \draw [fill=black] (0.3,-0.55) -- (0.4,-0.65) -- (0.1,-0.85)-- (0,-0.75);
		\end{scope}
\end{tikzpicture}
\caption{Multi-divisions of a two-layered cake, obtained by one long knife and one short knife and by two long knives (pictured left-to-right).}
\label{fig:long:short}
\end{figure}

We assume basic knowledge in algebraic topology. Definitions of abstract and geometric simplicial complexes, the fact that they are somehow equivalent, and other related notions are reminded in Appendix~\ref{sec:basics}. The reader might consult the book by De Longueville~\cite{de2012course} or the one by Matou\v sek~\cite{matouvsek2003using}, especially Chapters 1 and 6 of the latter book, for complementary material. In the sequel, we will identify geometric and abstract simplicial complexes without further mention.

\section{Envy-free division using $q-1$ long knives}\label{sec:long}
Now, we formally present the first main result of this paper, stating that an envy-free multi-division using $q-1$ long knives exists when $n$ is a prime power. 

\begin{theorem}\label{thm:group-birth}
Consider an instance of the multi-layered cake-cutting problem with $m$ layers and $n$ agents, $m\le n$, with closed preferences. Let $q$ be an integer such that $m\le q\le n$. If $q$ is a prime power, then there exists a feasible and contiguous multi-division into $q$ layered pieces so that no matter which layered piece the birthday agent chooses, the remaining agents can be assigned to the layered pieces while satisfying the following two properties:
\begin{itemize}
    \item each of the remaining agents is assigned to one of her preferred layered pieces.
    \item the number of agents assigned to each layered piece, including the birthday agent, differs by at most one.
\end{itemize}
Moreover, it can be achieved with $q-1$ long knives. 
\end{theorem}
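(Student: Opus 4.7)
The plan is to extend the approach of Joji\'{c}, Panina, and {\v{Z}}ivaljevi\'{c}, which combines Gale's averaging trick with Volovikov's fixed-point theorem, to the multi-layered, group, and birthday-agent setting. Write $q=p^k$ with $p$ prime, set $G=(\Z/p)^k$, and identify $[q]$ with the underlying set of $G$ so that $G$ acts freely and transitively on $[q]$; this is the symmetry we will exploit.

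First I would set up the configuration space. A multi-division obtained by $q-1$ long knives is encoded by the vector $\xx=(x_1,\ldots,x_q)\in\Delta^{q-1}$ of slot lengths, together with, for each layer $\ell\in[m]$, a bijection $\phi_\ell\colon[q]\to[q]$ indicating which slot is allocated to which group. A labeling $(\phi_1,\ldots,\phi_m)$ is \emph{feasible} exactly when, for every group $j\in[q]$, the slot indices $\phi_1^{-1}(j),\ldots,\phi_m^{-1}(j)$ are pairwise distinct, which is consistent with $m\le q$. Gluing one copy of $\Delta^{q-1}$ per feasible labeling along faces where a slot has zero length yields a simplicial complex $K$, equipped with the $G$-action that simultaneously permutes group labels in every layer and is free away from a low-dimensional subcomplex.

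Second, for every agent $i\in[n]$ and every group index $j\in[q]$, let $U_{i,j}\subseteq K$ be the set of configurations in which agent $i$ weakly prefers the layered piece currently labeled $j$; closedness of $c_i$ makes $U_{i,j}$ closed, and by construction $\bigcup_j U_{i,j}=K$. Gale's averaging trick then turns these $n$ individual coverings into a single $G$-equivariant aggregated map $F\colon K\to Y$ into a test $G$-space $Y$ designed so that hitting a distinguished point forces, at the corresponding configuration $\xx^*$, each group label $j$ to be collectively ``demanded'' by a prescribed number of agents matching the near-equal group-size profile. Volovikov's theorem, applied using the high connectivity of $K$ together with the $G$-action, then produces such an $\xx^*$.

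Finally, I would convert the aggregate statement at $\xx^*$ into an actual surjective assignment $\pi\colon[n]\to[q]$ via Hall's marriage theorem on the bipartite graph where agent $i$ is connected to group $j$ iff $\xx^*\in U_{i,j}$, with capacities $\lceil n/q\rceil$ and $\lfloor n/q\rfloor$. The birthday-agent property is handled in the spirit of Asada et al.: I would show that at $\xx^*$, for every possible choice $j^*\in[q]$ of the birthday agent, Hall's condition still holds for the remaining $n-1$ agents with the capacity at $j^*$ reduced by one. The main obstacle I expect is the joint design of $Y$ and the Gale aggregate so that a single equivariant fixed point simultaneously encodes feasibility across the $m$ layers, the near-equal group sizes, and the universal birthday-agent property, while ensuring that the feasibility restriction on labelings does not destroy enough of the connectivity of $K$ to meet the hypotheses of Volovikov's theorem.
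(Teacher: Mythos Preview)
Your high-level strategy---Volovikov plus Gale's averaging plus a matching argument for the final assignment---matches the paper's. The genuine gap is in the configuration space, which is precisely the step you flag as the ``main obstacle'' and do not resolve.

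You propose gluing one copy of $\Delta^{q-1}$ for each \emph{feasible labeling} $(\phi_1,\ldots,\phi_m)$, where each $\phi_\ell$ is an arbitrary bijection subject only to the non-overlap constraint. This space is enormous (its facets are indexed by $m\times q$ Latin rectangles), its connectivity is unknown to you, and the $G$-action you describe is not obviously free. The paper sidesteps all of this by using the chessboard complex $\mathsf{\Delta}_{2q-1,q}$, whose $(q-2)$-connectivity is a classical fact, with the free $G$-action on the second coordinate. The key idea you are missing is the \emph{encoding}: rather than an independent bijection per layer, the paper fixes once and for all an injection $h\colon[m]\to G$, and for a point $\x$ in a maximal simplex with row indices $r_1,\ldots,r_q$, uses the single permutation $\rho$ sorting the $r_j$'s to give the $j$-th slot in layer $\ell$ the bundle-name $\eta(\rho(j))+h(\ell)\in G$. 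Feasibility is then automatic from injectivity of $h$, equivariance is Lemma~\ref{lem:eq-multi}, and no connectivity needs to be re-proved. The target space is simply $\Delta^G$ with the obvious action, and Volovikov forces the averaged preference map to hit the barycenter.

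Two smaller remarks. First, the paper extracts the integral assignment not via Hall but via total unimodularity (Lemma~\ref{lem:TUM}): the fractional weights $w_{ij}=\bar f^{(i)}(\x_0)\cdot\e_{\eta(j)}$ lie in a transportation polytope, and an integral vertex with the correct value at $j^*$ gives the birthday assignment directly; this handles all choices of $j^*$ at once without a separate Hall verification. Second, continuity of the encoding $\x\mapsto\calA(\x)$ across facets of the chessboard complex is not automatic and is checked separately (Lemma~\ref{lem:converg}); your gluing ``along faces where a slot has zero length'' would require an analogous argument, which you do not supply.
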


More formally (and with the birthday agent being labeled with $n$), Theorem~\ref{thm:group-birth} ensures the existence of a feasible and contiguous multi-division $\calA$ into $q$ layered pieces with the following property: for every $j^*\in[q]$, there is an assignment $\pi_{j^*}\colon[n]\rightarrow[q]$ with
\begin{itemize}
    \item $\pi_{j^*}(n)=j^*$,
    \item for each $i\in [n-1]$, $\calA_{\pi_{j^*}(i)}\in c_i(\calA)$, and
    \item for each $j\in [q]$, $|\pi^{-1}_{j^*}(j)|\in\{\lfloor n/q\rfloor, \lceil n/q\rceil \}$.
\end{itemize}

Let us comment briefly on the special case when $m=1$ and $q=n$. Since the agents might prefer zero-length pieces in our setting, our theorem boils down then to a recent result of Avvakumov and Karasev~\cite{Avvakumov_2020}. They showed that when $n$ is a prime power, there always exists an envy-free division, even if we do not assume that the agents are hungry, and that this is not true anymore if $n$ is not a prime power. (The Avvakumov--Karasev theorem was first proved for $n=3$ by Segal-Halevi~\cite{Halevi2018}---who actually initiated the study of envy-free divisions with non-necessarily hungry agents---and for prime $n$ by Meunier and Zerbib~\cite{meunier2019envy}.)

We note that the standard proof showing the existence of an envy-free division via Sperner's lemma due to Su~\cite{su1999rental} may not work in the multi-layered setting even when $q=n$. In the model of standard cake-cutting, the divisions into $q$ parallel pieces of lengths $x_i$ $(i=1,2,\ldots,q)$ can be represented by the points of the standard simplex $\Delta_{q-1}$, which is then triangulated with the vertices of each simplex being labeled with distinct owner agents, and colored in such a way that each ``owner'' agent colors the vertex with the index of her favorite bundle of the ``owned'' division. When agents always prefer nonzero-length pieces to zero-length ones, the coloring satisfies the boundary condition of Sperner's lemma. This lemma guarantees then the existence of a colorful triangle, which corresponds to an approximate envy-free division.

In the same spirit of Su's approach, one may attempt to encode feasible multi-divisions using $q-1$ long knives, by the points of the standard simplex $\Delta_{q-1}$ and apply the usual method by using Sperner's lemma to show the existence of an envy-free division. For instance, each $q$-tuple $(x_1,x_2,\ldots,x_{q})$ can represent a feasible and contiguous multi-division $(\calA_1,\ldots,\calA_q)$ where the $\ell$-th layered piece of the $i$-th bundle $\calA_i$ is given by the $\eta(i,\ell)$-piece of length $x_{\eta(i,\ell)}$
where $\eta(i,\ell)=i+\ell-1$ (modulo $m$). Unfortunately, this approach may fail to work for the multi-layered cake-cutting: even when the agents have monotone preferences over the pieces, the coloring described in the previous paragraph does not satisfy in general the boundary condition of Sperner's lemma.

We thus employ a new approach of using a general Borsuk--Ulam-type theorem, originally proven by Volovikov \cite{volovikov1996topological}, and recently applied by Joji\'{c}, Panina, and {\v{Z}}ivaljevi\'{c} \cite{jojic2019splitting,panina2021envyfree} on the envy-free division of a cake. Volovikov's theorem considers a topological space $X$ and a sphere, both on which a group of the form $((\Z_p)^k,+)$ acts (with $p$ being prime), and a map from $X$ to the sphere commuting with the action. Under some assumptions on the connectivity of $X$ and fixed-point freeness of the action on $X$, the theorem prevents the dimension of the sphere to be too small. For $q$ being a prime power $p^k$, we consider a ``configuration'' space whose points encode at the same time feasible pieces obtained by $q-1$ long knives and all possible assignments. The configuration space is actually the \emph{chessboard complex} $\mathsf{\Delta}_{2q-1,q}$, which is guaranteed to be $(q-2)$-connected. Roughly speaking, mapping this sufficiently connected space $X=\mathsf{\Delta}_{2q-1,q}$ to the $(q-1)$-dimensional simplex recording the popularity among the $q$ pieces, Volovikov's theorem shows that the center of the simplex cannot be missed, and thus that an almost equal popularity of the pieces can be achieved. With this technique, the existence of an envy-free multi-division is shown for a general class of preferences that are not necessarily monotone. 

\subsection{Tools from equivariant topology}\label{sec:tools}
We introduce now a specific abstract simplicial complex that will play a central role in the proof of Theorem~\ref{thm:group-birth}. The {\em chessboard complex} $\mathsf{\Delta}_{2q-1,q}$ is the abstract simplicial complex whose ground set is $[2q-1] \times [q]$ and whose simplices are the subsets $\sigma \subseteq [2q-1] \times [q]$ such that for every two distinct pairs $(r,j)$ and $(r',j')$ in $\sigma$ we have $r\neq r'$ and $j\neq j'$. The name comes from the following: If we interpret $[2q-1] \times [q]$ as a $(2q-1)\times q$ chessboard, the simplices are precisely the configurations of pairwise non-attacking rooks. See Figures \ref{subfig:chessboard1} and \ref{subfig:chessboard2} for an illustration of the chessboard complex $\mathsf{\Delta}_{2q-1,q}$ when $q=2$ and $q=3$.

Given an additive group $G$ of order $q$, we get a natural action $(\varphi_g)_{g\in G}$ of $G$ on $\mathsf{\Delta}_{2q-1,q}$ by identifying $[q]$ with $G$ via a bijection $\eta\colon [q]\rightarrow G$: this natural action is defined by $\varphi_g(r,j)=g\cdot(r,j)\coloneqq(r, \eta^{-1}(g+\eta(j)))$. 
This action is {\em free}, namely the orbit of each point in any geometric realization of $\mathsf{\Delta}_{2q-1,q}$ is of size $q$. Equivalently, the relative interiors of $\varphi_g(\sigma)$ and $\sigma$ are disjoint for every simplex $\sigma$ of $\mathsf{\Delta}_{2q-1,q}$ and every element $g$ of $G$ distinct from the neutral element; see~\cite[Chapter 6]{matouvsek2003using}.

The following lemma is an immediate consequence of Volovikov's theorem~\cite{volovikov1996topological}, which has found many applications in topological combinatorics. For an additive group $G$, we denote by $\Delta^G$ the standard simplex whose vertices are the unit vectors $\e_g$ (where $g$ ranges over $G$) of $\R^G$. The group acts naturally on $\Delta^G$ by setting $\varphi_{g'}(\e_g) \coloneqq \e_{g+g'}$ and by extending the action affinely on each face of $\Delta^G$.
When $G$ does not satisfy the condition of the lemma, its conclusion does not necessarily hold. Already for $G=\Z_6$, counterexamples are known; see~\cite{vzivaljevic1998user}.

(Given two topological spaces $X$ and $Y$ on which the group $G$ acts, a map $h\colon X\rightarrow Y$ is {\em $G$-equivariant} if $h(g\cdot x)=g\cdot h(x)$ for all $ x\in X$ and all $g\in G$.)

\begin{lemma}\label{lem:vol-surj}
Let $q=p^k$, where $p$ is a prime number and $k$ a positive integer. Denote by $G$ the additive group $\left((\Z_p)^k,+\right)$. For any $G$-equivariant continuous map $f\colon \mathsf{\Delta}_{2q-1,q} \rightarrow \Delta^G$, there exists $\x_0\in \|\mathsf{\Delta}_{2q-1,q}\|$ such that 
$f(\x_0)=\frac 1 q \sum_{g\in G}\e_g$.
\end{lemma}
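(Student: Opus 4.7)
The plan is to argue by contradiction: assume the barycenter $\b \coloneqq \tfrac{1}{q}\sum_{g\in G}\e_g$ is missed by $f$, and use the avoidance to produce a $G$-equivariant map into $\partial \Delta^G$ that contradicts Volovikov's theorem. The two ingredients needed are (i) a connectivity estimate for $\mathsf{\Delta}_{2q-1,q}$ and (ii) the identification of $\b$ as the only $G$-fixed point of $\Delta^G$.

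First I would pin down the $G$-fixed point structure of $\Delta^G$. Because $G$ acts on the vertex set $\{\e_g : g\in G\}$ by translation, i.e., regularly, the only non-empty $G$-invariant subset of vertices is $G$ itself, so the only $G$-invariant face is $\Delta^G$. Within that face, the fixed-point set of the affine action is a singleton, namely $\{\b\}$. Consequently, if $\b \notin f(\|\mathsf{\Delta}_{2q-1,q}\|)$, the radial projection $\rho\colon \Delta^G \setminus \{\b\} \to \partial \Delta^G$ that sends $\x$ to the unique intersection of the ray from $\b$ through $\x$ with $\partial \Delta^G$ is well-defined, continuous, and $G$-equivariant (the center $\b$ being $G$-fixed). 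The composition $\tilde f \coloneqq \rho \circ f \colon \|\mathsf{\Delta}_{2q-1,q}\| \to \partial \Delta^G$ is therefore $G$-equivariant, and the target is a topological $(q-2)$-sphere carrying a $G$-action with no global fixed point.

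Next I would invoke the classical estimate of Bj\"orner, Lov\'asz, Vre\'cica, and \v{Z}ivaljevi\'c, which gives that $\mathsf{\Delta}_{m,n}$ is $\left(\min(m,n,\lfloor(m+n+1)/3\rfloor)-2\right)$-connected; specialized to $m=2q-1$ and $n=q$ this yields $(q-2)$-connectivity of $\mathsf{\Delta}_{2q-1,q}$, a fact already flagged in the excerpt. Since the $G$-action on $\mathsf{\Delta}_{2q-1,q}$ described just above the lemma is free, Volovikov's theorem applies in the form used by Joji\'c, Panina, and \v{Z}ivaljevi\'c: a $G$-equivariant map from an $n$-connected free $(\Z_p)^k$-space to a mod-$p$ cohomology sphere of dimension $m$ without global $G$-fixed points requires $m \ge n+1$. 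Here $n = q-2$ and $m = q-2$, so $q-2 \ge q-1$, a contradiction.

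The main delicate point, rather than topological, is the matching of hypotheses: one has to be sure the weaker "no global $G$-fixed point" condition (which is what we get on $\partial \Delta^G$, since isotropy groups of faces can be non-trivial subgroups of $G$) is enough to run Volovikov's theorem, and that $\partial \Delta^G$ is a genuine mod-$p$ cohomology $(q-2)$-sphere (which it is, being literally a $(q-2)$-sphere). Both points are standard in this line of work and can be invoked directly from the formulation used by Joji\'c, Panina, and \v{Z}ivaljevi\'c; once they are in place, the contradiction above is immediate and completes the proof.
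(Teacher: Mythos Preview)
Your proposal is correct and follows essentially the same route as the paper: assume the barycenter is missed, radially project onto $\partial\Delta^G$, and contradict Volovikov's theorem using the $(q-2)$-connectivity of $\mathsf{\Delta}_{2q-1,q}$ together with the fixed-point freeness of the $G$-action on both spaces. The paper's proof is slightly terser---it verifies fixed-point freeness on $\partial\Delta^G$ directly by moving simplices rather than via the uniqueness of the barycenter as a $G$-fixed point---but the argument is the same.
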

\begin{proof}
Volovikov's theorem states the following; see~\cite[Section 6.2, Notes]{matouvsek2003using}. {\em Let $G$ be the additive group $\left((\Z_p)^k,+\right)$ and let $X$ and $Y$ be two topological spaces on which $G$ acts in a fixed-point free way. If $X$ is $d$-connected and $Y$ is a $d$-dimensional sphere, then there is no $G$-equivariant continuous map $X\rightarrow Y$.} An action is {\em fixed-point free} if each orbit has at least two elements. The simplicial complex $\mathsf{\Delta}_{2q-1,q}$ is $(q-2)$-connected (see~\cite{bjorner1994chessboard}) and the action is fixed-point free because it is free. The simplicial complex $\partial\Delta^G$ is a $(q-2)$-dimensional sphere and the action is fixed-point free as we explain now. It is enough to show that for every simplex $\sigma$ there is an element $g \in G$ such that $g \cdot \sigma$ is distinct from $\sigma$ because by considering the support of any point $\x$, this would show that the image of $\x$ by $g$ would be distinct from $\x$. Consider any simplex $\sigma$, pick an arbitrary vertex $g'\in G$ of $\sigma$ and an arbitrary vertex $g''\in G$ of $\partial\Delta^G$ not in $\sigma$ (which exists because the full simplex is not present). Set $g\coloneqq g''-g'$. The image of $\sigma$ by $g$ contains $g''$ and is thus a simplex distinct from $\sigma$.

Suppose for a contradiction that the image of ${f}$ misses $\frac 1 q \sum_{g\in G}\e_g$, which is actually the barycenter of $\Delta^G$. Define then $h(\x)$ as the intersection of $\partial \Delta^G$ with the half-line originating at the barycenter of $\Delta^G$ and going through $f(\x)$.  
This map contradicts Volovikov's theorem.
\end{proof}

\subsection{Encoding divisions via the chessboard complex}\label{subsec:encode-chess}
The proof of Theorem~\ref{thm:group-birth} uses a ``configuration space'' encoding some possible contiguous and feasible multi-divisions with $q-1$ long knives. 
This configuration space is the simplicial complex $\mathsf{\Delta}_{2q-1,q}$ introduced in Section~\ref{sec:tools}, with $q=p^k$ and $G=\left((\Z_p)^k,+\right)$ acting on it. The elements in $G$ will be used to identify the layered pieces. We choose an arbitrary bijection $\eta\colon [q] \rightarrow G$ to ease this identification. (In case $k=1$, it is certainly most intuitive to set $\eta(j)=j$; note that when $k\neq 1$, this definition does not make sense.) Moreover, we fix an arbitrary injective map
$h\colon [m] \rightarrow G$ and a geometric realization of $\mathsf{\Delta}_{2q-1,q}$. We denote by $\vv_{r,j}$ the realization of the vertex $(r,j)$. We explain now how each point of $\|\mathsf{\Delta}_{2q-1,q}\|$ encodes a multi-division with $q-1$ long knives.

We assign to each point $\x$ of $\|\mathsf{\Delta}_{2q-1,q}\|$ a $(q-1)$-dimensional simplex of $\mathsf{\Delta}_{2q-1,q}$ containing it. A tie can occur, e.g., for $q=3$, when $\x$ belongs to the interior of the edge $\vv_{3,1}$-$\vv_{1,2}$; there are three triangles containing this edge in $\mathsf{\Delta}_{5,3}$; each of them contains the vertices $\vv_{3,1}$ and $\vv_{1,2}$; the third vertex can be any of  $\vv_{2,3}$, $\vv_{4,3}$, and $\vv_{5,3}$. We make this assignment in such a way that all points with same support are assigned to the same $(q-1)$-dimensional simplex. Moreover, we make this assignment ``equivariant'': given any $g\in G$, the simplex assigned to $g\cdot\x$ is the image by $\varphi_g$ of the simplex assigned to $\x$. This is possible because the action of $G$ on $\mathsf{\Delta}_{2q-1,q}$ is free. We call this assignment the procedure $\mathfrak{P}$.

Consider any point $\x$ in $\|\mathsf{\Delta}_{2q-1,q}\|$. Let $\vv_{r_1,1}, \vv_{r_2,2}, \ldots, \vv_{r_q,q}$ be the vertices of the simplex assigned to $\x$ by $\mathfrak{P}$. We write then $\x$ as $\sum_{j=1}^q x_{r_j} \vv_{r_j,j}$. We set $x_k=0$ for every $k\notin\{r_1,\ldots,r_q\}$. The values of $x_1,\ldots,x_{2q-1}$ do not depend on the choice made by $\mathfrak{P}$: only the vertices $\vv_{r,j}$ spanning the minimal simplex of $\mathsf{\Delta}_{2q-1,q}$ containing $\x$ get nonzero coefficients, and these coefficients are then the barycentric coordinates in this face.

Let $\rho$ be the permutation in $\mathcal{S}_q$ such that $r_{\rho(1)} < r_{\rho(2)} < \cdots < r_{\rho(q)}$. We interpret $x_{r_{\rho(j)}}$ as the length of the $j$-th piece: in a way similar to the traditional encoding of the divisions (see, e.g.,~\cite{su1999rental}), the $j$-th piece in any layer is of length $x_{r_{\rho(j)}}$. We give then the $j$-th piece of the $\ell$-th layer the element $\eta(\rho(j))+h(\ell)$ of $G$ as its ``bundle-name.'' We get a non-overlapping layered piece by considering all pieces with a same bundle-name: 
if it were overlapping, it would contain two pieces named $\eta(\rho(j))+h(\ell)=\eta(\rho(j))+h(\ell')$ with $\ell \neq \ell'$, which is not possible because $h$ is injective.  
The non-overlapping layered pieces obtained this way form the multi-division encoded by $\x$, which we denote by $\calA(\x)=(\calA_1(\x),\calA_2(\x),\ldots,\calA_q(\x))$, where $\calA_{j}(\x)$ is the layered piece formed by the pieces with bundle-name $\eta(j)\in G$. Clearly, each $\calA(\x)$ is a feasible and contiguous multi-division that uses $q-1$ long knives. 
Figures \ref{fig:chessboard1}, \ref{fig:chessboard2}, and~\ref{fig:chessboard3} illustrate the chessboard complex $\mathsf{\Delta}_{2q-1,q}$ for $q=2$ and $q=3$ and associated multi-divisions corresponding to points of $\|\mathsf{\Delta}_{2q-1,q}\|$. 

%

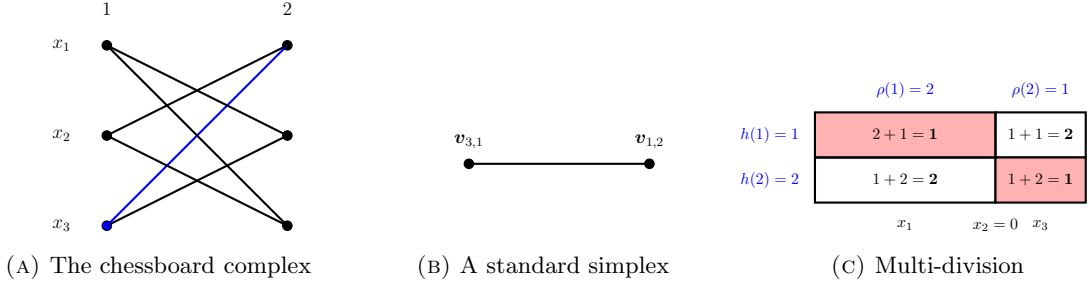
\begin{figure*}[htb]
\centering
\begin{subfigure}[t]{.33\linewidth}
\begin{tikzpicture}[scale=0.6, transform shape]
\node at (0,0) {};
\begin{scope}[xshift=3cm]
    \draw[fill=blue] (0,0) circle (3pt);
    \draw[fill=black] (0,2) circle (3pt);
    \draw[fill=black] (0,4) circle (3pt);
    
    \draw[fill=black] (4,0) circle (3pt);
    \draw[fill=black] (4,2) circle (3pt);
    \draw[fill=black] (4,4) circle (3pt);
    
    
    \draw[thick] (0,0) -- (4,2); 
    \draw[thick,blue] (0,0) -- (4,4);
    \draw[thick] (0,2) -- (4,0); 
    \draw[thick] (0,2) -- (4,4);
    \draw[thick] (0,4) -- (4,0); 
    \draw[thick] (0,4) -- (4,2);

    \node at (0,4.8) {\Large $1$};
    \node at (4,4.8) {\Large $2$};
    \node at (-1,4) {\Large $x_1$}; 
    \node at (-1,2) {\Large $x_2$}; 
    \node at (-1,0) {\Large $x_3$}; 
        
\end{scope}
\end{tikzpicture}
\subcaption{The chessboard complex}\label{subfig:chessboard1}
\end{subfigure}%
\begin{subfigure}[t]{.33\linewidth}
    \begin{tikzpicture}[scale=0.6, transform shape]  
    \node at (0,0) {};
\begin{scope}[xshift=-3.5cm,yshift=-2.5cm]  
    \draw[fill=black] (6,4) circle (3pt);
    \draw[fill=black] (10,4) circle (3pt);
    \draw[thick] (6,4) -- (10,4);
    \node at (6,4.5) {\Large $\vv_{3,1}$};
    \node at (10,4.5) {\Large $\vv_{1,2}$};
\end{scope}
\end{tikzpicture}
\subcaption{A standard simplex}\label{subfig:simplex1}
\end{subfigure}%
\begin{subfigure}[t]{.33\linewidth}
    \begin{tikzpicture}[scale=0.6, transform shape]
\begin{scope}[xshift=15cm,yshift=2cm]  
\draw[thick,fill=red!30] (0,0) rectangle (4,1);
        \draw[thick] (4,0) rectangle (6,1);
        \draw[thick] (0,-1) rectangle (4,0);
        \draw[thick,fill=red!30] (4,-1) rectangle (6,0);
        
        \node[blue] at (2,1.5) {$\rho(1)=2$};
        \node[blue] at (5,1.5) {$\rho(2)=1$};
        
        \node[blue] at (-1,0.5) {$h(1)=1$};
        \node[blue] at (-1,-0.5) {$h(2)=2$};
    
        \node at (2,0.5) {$2+1={\bf 1}$};
        \node at (5,0.5) {$1+1={\bf 2}$};
        \node at (2,-0.5) {$1+2={\bf 2}$};
        \node at (5,-0.5) {$1+2={\bf 1}$};
        
        \node at (2,-1.5) {$x_1$};
        \node at (4,-1.5) {$x_2=0$};
        \node at (5,-1.5) {$x_3$};
\end{scope}
\end{tikzpicture}
\subcaption{Multi-division}\label{subfig:multi-division1}
\end{subfigure}%
\caption{Illustration of the chessboard complex $\mathsf{\Delta}_{2q-1,q}$ when $q=2$. The blue edge of Figure~\ref{subfig:chessboard1} corresponds to a standard simplex, depicted on Figure~\ref{subfig:simplex1}, which represents the set of multi-divisions of the form described in Figure~\ref{subfig:multi-division1} when $m=2$ and $h(\ell)=\ell$ for the $\ell$-th layer. 
In Figure~\ref{subfig:multi-division1}, we have $G=(\Z_2,+)$ and $r_{\rho(1)}=r_2=1 < r_{\rho(2)}=r_1=3$.}
\label{fig:chessboard1}
\end{figure*}%

The multi-divisions $\calA(\x)$ enjoy some ``equivariant'' property.

\begin{lemma}\label{lem:eq-multi}
We have $\calA_j(\x) = \calA_{\eta^{-1}(g+\eta(j))}(g\cdot\x)$ for all $\x\in\|\mathsf{\Delta}_{2q-1,q}\|$, $j\in[q]$, and $g\in G$.
\end{lemma}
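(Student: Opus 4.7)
The plan is to unwind the definitions and track how all the data associated with $\x$ transform under the action of $g \in G$. The key input is that the procedure $\mathfrak{P}$ is constructed equivariantly: the $(q-1)$-simplex $\mathfrak{P}(g\cdot\x)$ is exactly $\varphi_g(\mathfrak{P}(\x))$. Since $\varphi_g$ sends a vertex $\vv_{r,j}$ to $\vv_{r,\eta^{-1}(g+\eta(j))}$, only the ``column index'' $j$ is shifted by $g$ (under the identification $\eta$), while the ``row index'' $r$ stays put.

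More concretely, writing $\x=\sum_{j=1}^q x_{r_j}\vv_{r_j,j}$ with $\mathfrak{P}(\x)$ spanned by $\vv_{r_1,1},\dots,\vv_{r_q,q}$, I obtain
\[
g\cdot\x \;=\; \sum_{j=1}^q x_{r_j}\,\vv_{r_j,\,\eta^{-1}(g+\eta(j))}.
\]
In particular, the multiset $\{r_1,\dots,r_q\}$, hence the sorted sequence of knife positions $x_{r_{\rho(1)}},\dots,x_{r_{\rho(q)}}$, is the same for $\x$ and $g\cdot\x$. Thus the geometric pieces produced by the $q-1$ long knives are identical in the two multi-divisions; only the labelling (the bundle-names) can differ.

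Next I track the labelling. Let $\rho,\rho'\in\mathcal{S}_q$ be the permutations attached to $\x$ and $g\cdot\x$ respectively by the procedure in Section~\ref{subsec:encode-chess}. From the formula above, the column containing row $r_{\rho(j)}$ in $\mathfrak{P}(g\cdot\x)$ is $\eta^{-1}\bigl(g+\eta(\rho(j))\bigr)$, so
\[
\rho'(j)\;=\;\eta^{-1}\bigl(g+\eta(\rho(j))\bigr)\qquad\text{for every }j\in[q].
\]
The bundle-name given to the $j$-th piece of the $\ell$-th layer in $g\cdot\x$ is therefore
\[
\eta(\rho'(j))+h(\ell)\;=\;g+\eta(\rho(j))+h(\ell),
\]
i.e., exactly $g$ plus the bundle-name it carries in $\x$.

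It only remains to read off the conclusion. By definition, $\calA_j(\x)$ is the union (across layers) of pieces whose bundle-name equals $\eta(j)\in G$. The identity of the underlying pieces together with the $g$-shift of bundle-names means that exactly the same physical pieces carry bundle-name $g+\eta(j)$ in $g\cdot\x$, which is to say they form $\calA_{\eta^{-1}(g+\eta(j))}(g\cdot\x)$. Hence $\calA_j(\x)=\calA_{\eta^{-1}(g+\eta(j))}(g\cdot\x)$, as desired. The main bookkeeping difficulty is not missing a subtlety when $\x$ lies in a face of positive codimension (so that the choice $\mathfrak{P}(\x)$ is genuinely needed to write coordinates): this is handled precisely because $\mathfrak{P}$ was built to satisfy $\mathfrak{P}(g\cdot\x)=\varphi_g(\mathfrak{P}(\x))$, and because the barycentric coordinates of $\x$ in its minimal supporting face are intrinsic.
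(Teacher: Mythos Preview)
Your proof is correct and follows essentially the same approach as the paper's: both use the equivariance of $\mathfrak{P}$ to conclude that the row indices (hence the piece lengths) are unchanged while the column indices are shifted by $g$, yielding $\eta(\rho'(j))=g+\eta(\rho(j))$ and hence a uniform $g$-shift of all bundle-names. The only cosmetic difference is that the paper reindexes by introducing $r'_j=r_{\eta^{-1}(-g+\eta(j))}$ and then derives the relation between $\rho$ and $\rho'$ from $r_{\rho(j)}=r'_{\rho'(j)}$, whereas you read off $\rho'(j)=\eta^{-1}(g+\eta(\rho(j)))$ directly from which column the row $r_{\rho(j)}$ lands in under $\varphi_g$; these are the same computation.
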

\begin{proof}
Let $\x\in\|\mathsf{\Delta}_{2q-1,q}\|$ and $g\in G$. Write $\x$ as $\sum_{j=1}^q x_{r_j} \vv_{r_j,j}$. By the definition of the action of $G$ on $\mathsf{\Delta}_{2q-1,q}$, we have $g\cdot\vv_{r,j} = \vv_{r,\eta^{-1}(g+\eta(j))}$ and thus we have
\begin{equation}\label{eq:gx}
g\cdot\x = \sum_{j=1}^q x_{r'_j}\vv_{r'_j,j},\quad\text{with}\quad r'_j= r_{\eta^{-1}(-g+\eta(j))}\, .
\end{equation}
(Note that the sets of $q$ distinct indices $r_j$ and $r'_{j'}$ are the same up to a permutation.) The equivariance of $\mathfrak{P}$ allows the vertices of the simplex chosen by $\mathfrak{P}$ for $g\cdot\x$ to be the $\vv_{r'_j,j}$. Denoting by $\rho'$ the permutation such that $r'_{\rho'(1)}<\cdots<r'_{\rho'(q)}$, the $j$-th piece of layer $\ell$ gets thus $\eta(\rho'(j))+h(\ell)$ as bundle-name in $\calA(g\cdot\x)$. Since the sets of indices $r_j$ and $r'_{j'}$ are the same, we have $r_{\rho(j)}=r'_{\rho'(j)}$ for all $j$, which implies with Equation~\eqref{eq:gx} that
$\rho(j)=\eta^{-1}\big(-g+\eta(\rho'(j))\big)$. 
Hence, the $j$-th piece of layer $\ell$ gets $g+\eta(\rho(j))+h(\ell)$ as bundle-name in $\calA(g\cdot\x)$.
Moreover, in both multi-divisions $\calA(\x)$ and $\calA(g\cdot\x)$, 
the $j$-th piece of layer $\ell$ is of the same length $x_{r_{\rho(j)}}=x_{r'_{\rho'(j)}}$. 
The layered piece with bundle-name $\eta(j)$ for $j\in[q]$ in $\calA(\x)$ gets thus bundle-name $g+\eta(j)$ in $\calA(g\cdot\x)$.
\end{proof}

See Figures~\ref{subfig:multi-division2} and~\ref{subfig:multi-division3} for an illustration of multi-divisions $\calA(\x)$ and $\calA(g\cdot\x)$. 

\begin{figure*}[htbt]
\centering
\begin{subfigure}[t]{.26\linewidth}
\begin{tikzpicture}[scale=0.6, transform shape]
    \node at (0,0) {};
\begin{scope}[xshift=2cm,yshift=0.5cm]
    \draw[fill=black] (0,0) circle (3pt);
    \draw[fill=black] (0,1) circle (3pt);
    \draw[fill=black] (0,2) circle (3pt);
    \draw[fill=black] (0,3) circle (3pt);
    \draw[fill=black] (0,4) circle (3pt);
    \draw[fill=black] (2,0) circle (3pt);
    \draw[fill=black] (2,1) circle (3pt);
    \draw[fill=black] (2,2) circle (3pt);
    \draw[fill=black] (2,3) circle (3pt);
    \draw[fill=black] (2,4) circle (3pt);     
    \draw[fill=black] (4,0) circle (3pt);
    \draw[fill=black] (4,1) circle (3pt);
    \draw[fill=black] (4,2) circle (3pt);
    \draw[fill=black] (4,3) circle (3pt);
    \draw[fill=black] (4,4) circle (3pt);     
    \node at (0,4.8) {\Large $1$};
    \node at (2,4.8) {\Large $2$};
    \node at (4,4.8) {\Large $3$};
    \node at (-1,4) {\Large $x_1$}; 
    \node at (-1,3) {\Large $x_2$}; 
    \node at (-1,2) {\Large $x_3$}; 
    \node at (-1,1) {\Large $x_4$}; 
    \node at (-1,0) {\Large $x_5$}; 
    \draw[thick,blue] (0,1) -- (4,2) -- (2,0) -- (0,1);
\end{scope}
\end{tikzpicture}
\subcaption{The chessboard complex}\label{subfig:chessboard2}
\end{subfigure}%
\begin{subfigure}[t]{.28\linewidth}
    \begin{tikzpicture}[scale=0.5, transform shape]  
\node at (0,0) {};
\begin{scope}[xshift=-6.5cm,yshift=0cm]
    \draw[fill=black] (8,0) circle (3pt);
    \draw[fill=black] (14,0) circle (3pt);
    \draw[fill=black] (11,5) circle (3pt);
    \draw[thick] (8,0) -- (14,0) -- (11,5) -- (8,0);

    \draw[fill=black] (11,2.5) circle (3pt);
    \node at (11,3) {\Large $\x$};
    \node at (7.5,0.5) {\Large $\vv_{5,2}$};
    \node at (14.5,0.5) {\Large $\vv_{4,1}$}; 
    \node at (11,5.5) {\Large $\vv_{3,3}$}; 
\end{scope}
\end{tikzpicture}
\subcaption{A standard simplex}\label{subfig:simplex2}
\end{subfigure}%
\begin{subfigure}[t]{.4\linewidth}
    \begin{tikzpicture}[scale=0.58, transform shape]      

    \begin{scope}[xshift=20cm,yshift=2cm]
            \draw[thick,fill=red!30] (0,0) rectangle (4,1);
            \draw[thick,fill=blue!30] (4,0) rectangle (6,1);
            \draw[thick] (6,0) rectangle (8,1);
            
            \draw[thick,fill=blue!30] (0,-1) rectangle (4,0);
            \draw[thick] (4,-1) rectangle (6,0);
            \draw[thick,fill=red!30] (6,-1) rectangle (8,0);
            
            \draw[thick] (0,-2) rectangle (4,-1);
            \draw[thick,fill=red!30] (4,-2) rectangle (6,-1);
            \draw[thick,fill=blue!30] (6,-2) rectangle (8,-1);
            
            \node at (2,-2.5) {$x_3$};
            \node at (0,-2.5) {$x_1=x_2=0$};
            \node at (5,-2.5) {$x_4$};
            \node at (7,-2.5) {$x_5$};
            
            \node[blue] at (2,1.5) {$\rho(1)=3$};
            \node[blue] at (5,1.5) {$\rho(2)=1$};
            \node[blue] at (7,1.5) {$\rho(3)=2$};
            
            \node[blue] at (-1,0.5) {$h(1)=1$};
            \node[blue] at (-1,-0.5) {$h(2)=2$};
            \node[blue] at (-1,-1.5) {$h(3)=3$};
            
            \node at (2,0.5) {$3+1={\bf 1}$};
            \node at (5,0.5) {$1+1={\bf 2}$};
            \node at (7,0.5) {$2+1={\bf 3}$};
            
            \node at (2,-0.5) {$3+2={\bf 2}$};
            \node at (5,-0.5) {$1+2={\bf 3}$};
            \node at (7,-0.5) {$2+2={\bf 1}$};

            \node at (2,-1.5) {$3+3={\bf 3}$};
            \node at (5,-1.5) {$1+3={\bf 1}$};
            \node at (7,-1.5) {$2+3={\bf 2}$};            
            
        \end{scope}    
\end{tikzpicture}
\subcaption{Multi-division~$\mathcal{A}(\x)$}\label{subfig:multi-division2}
\end{subfigure}%
\caption{Illustration of the division encoding and of the statement of Lemma~\ref{lem:eq-multi} in the case of $G=(\Z_3,+)$. 
Figure~\ref{subfig:chessboard2} is an illustration of the chessboard complex $\mathsf{\Delta}_{2q-1,q}$. The blue edges of Figure~\ref{subfig:chessboard2} correspond to a standard simplex, depicted on Figure~\ref{subfig:simplex2}, which represents multi-divisions of the form described in Figure~\ref{subfig:multi-division2}. In Figure~\ref{subfig:multi-division2}, $r_{\rho(1)}=r_3=3< r_{\rho(2)}=r_1=4 < r_{\rho(3)}=r_2=5$. The number in the $j$-th piece of the $\ell$-th layer corresponds to $\eta(\rho(j))+h(\ell)$, where $\eta$ is the identity map (because when $k=1$, we can identify $(\Z_p)^k$ and $[p]$). }
\label{fig:chessboard2}
\end{figure*}
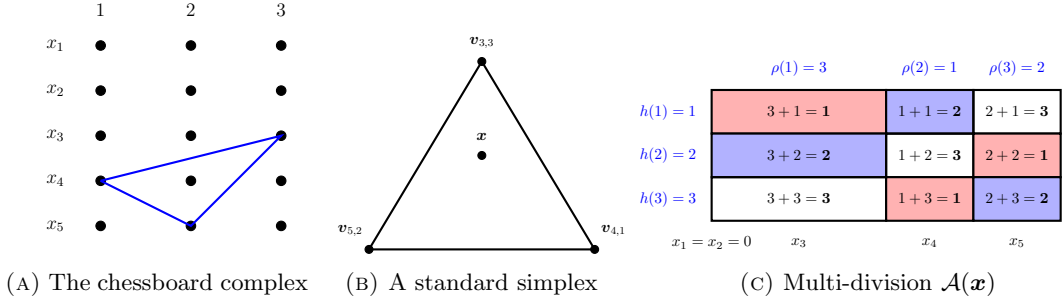

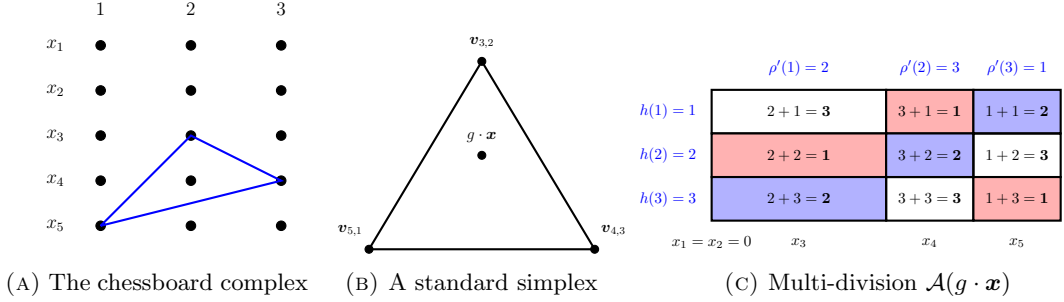
\begin{figure*}[htbt]
\centering
\begin{subfigure}[t]{.26\linewidth}
\begin{tikzpicture}[scale=0.6, transform shape]
    \node at (0,0) {};
\begin{scope}[xshift=2cm,yshift=0.5cm]
    \draw[fill=black] (0,0) circle (3pt);
    \draw[fill=black] (0,1) circle (3pt);
    \draw[fill=black] (0,2) circle (3pt);
    \draw[fill=black] (0,3) circle (3pt);
    \draw[fill=black] (0,4) circle (3pt);
    \draw[fill=black] (2,0) circle (3pt);
    \draw[fill=black] (2,1) circle (3pt);
    \draw[fill=black] (2,2) circle (3pt);
    \draw[fill=black] (2,3) circle (3pt);
    \draw[fill=black] (2,4) circle (3pt);     
    \draw[fill=black] (4,0) circle (3pt);
    \draw[fill=black] (4,1) circle (3pt);
    \draw[fill=black] (4,2) circle (3pt);
    \draw[fill=black] (4,3) circle (3pt);
    \draw[fill=black] (4,4) circle (3pt);     
    \node at (0,4.8) {\Large $1$};
    \node at (2,4.8) {\Large $2$};
    \node at (4,4.8) {\Large $3$};
    \node at (-1,4) {\Large $x_1$}; 
    \node at (-1,3) {\Large $x_2$}; 
    \node at (-1,2) {\Large $x_3$}; 
    \node at (-1,1) {\Large $x_4$}; 
    \node at (-1,0) {\Large $x_5$}; 
    \draw[thick,blue] (0,0) -- (2,2) -- (4,1) -- (0,0);
\end{scope}
\end{tikzpicture}
\subcaption{The chessboard complex}\label{subfig:chessboard3}
\end{subfigure}%
\begin{subfigure}[t]{.28\linewidth}
    \begin{tikzpicture}[scale=0.5, transform shape]  
\node at (0,0) {};
\begin{scope}[xshift=-6.5cm,yshift=0cm]
    \draw[fill=black] (8,0) circle (3pt);
    \draw[fill=black] (14,0) circle (3pt);
    \draw[fill=black] (11,5) circle (3pt);
    \draw[thick] (8,0) -- (14,0) -- (11,5) -- (8,0);

    \draw[fill=black] (11,2.5) circle (3pt);
    \node at (11,3) {\Large $g \cdot \x$};
    \node at (7.5,0.5) {\Large $\vv_{5,1}$};
    \node at (14.5,0.5) {\Large $\vv_{4,3}$}; 
    \node at (11,5.5) {\Large $\vv_{3,2}$}; 
\end{scope}
\end{tikzpicture}
\subcaption{A standard simplex}\label{subfig:simplex3}
\end{subfigure}%
\begin{subfigure}[t]{.4\linewidth}
    \begin{tikzpicture}[scale=0.58, transform shape]     

    \begin{scope}[xshift=20cm,yshift=2cm]
            \draw[thick] (0,0) rectangle (4,1);
            \draw[thick,fill=red!30] (4,0) rectangle (6,1);
            \draw[thick,fill=blue!30] (6,0) rectangle (8,1);
            
            \draw[thick,fill=red!30] (0,-1) rectangle (4,0);
            \draw[thick,fill=blue!30] (4,-1) rectangle (6,0);
            \draw[thick] (6,-1) rectangle (8,0);
            
            \draw[thick,fill=blue!30] (0,-2) rectangle (4,-1);
            \draw[thick] (4,-2) rectangle (6,-1);
            \draw[thick,fill=red!30] (6,-2) rectangle (8,-1);
            
            \node at (2,-2.5) {$x_3$};
            \node at (0,-2.5) {$x_1=x_2=0$};
            \node at (5,-2.5) {$x_4$};
            \node at (7,-2.5) {$x_5$};
            
            \node[blue] at (2,1.5) {$\rho'(1)=2$};
            \node[blue] at (5,1.5) {$\rho'(2)=3$};
            \node[blue] at (7,1.5) {$\rho'(3)=1$};
            
            \node[blue] at (-1,0.5) {$h(1)=1$};
            \node[blue] at (-1,-0.5) {$h(2)=2$};
            \node[blue] at (-1,-1.5) {$h(3)=3$};
            
            \node at (2,0.5) {$2+1={\bf 3}$};
            \node at (5,0.5) {$3+1={\bf 1}$};
            \node at (7,0.5) {$1+1={\bf 2}$};
            
            \node at (2,-0.5) {$2+2={\bf 1}$};
            \node at (5,-0.5) {$3+2={\bf 2}$};
            \node at (7,-0.5) {$1+2={\bf 3}$};

            \node at (2,-1.5) {$2+3={\bf 2}$};
            \node at (5,-1.5) {$3+3={\bf 3}$};
            \node at (7,-1.5) {$1+3={\bf 1}$};            
            
        \end{scope}    
\end{tikzpicture}
\subcaption{Multi-division~$\mathcal{A}(g \cdot \x)$}\label{subfig:multi-division3}
\end{subfigure}%
\caption{
Illustration of the division encoding and of the statement of Lemma~\ref{lem:eq-multi} in the case of $G=(\Z_3,+)$. 
The blue edges of Figure~\ref{subfig:chessboard3} correspond to the image of the standard simplex of Figure~\ref{subfig:chessboard2} when $g=2$. This image is depicted on Figure~\ref{subfig:simplex3} and corresponds to multi-divisions of the form described in Figure~\ref{subfig:multi-division3}. In Figure~\ref{subfig:multi-division3}, $r'_{\rho'(1)}=r'_2=3< r'_{\rho'(2)}=r'_3=4 < r'_{\rho'(3)}=r'_1=5$ (with the notation of the proof). The number in the $j$-th piece of the $\ell$-th layer corresponds to $\eta(\rho'(j))+h(\ell)$, where $\eta$ is the identity map (because when $k=1$, we can identify $(\Z_p)^k$ and $[p]$).
}
\label{fig:chessboard3}
\end{figure*}

Another important point is that $\calA(\x)$ depends continuously on $\x$ as stated by the following lemma. The convergence of multi-divisions is defined according to the pseudo-metric $d(\cdot,\cdot)$.

\begin{lemma}\label{lem:converg}
Let $\left(\x^{(t)}\right)_{t\in\Z_+}$ be a sequence of points of $\|\mathsf{\Delta}_{2q-1,q}\|$ converging to some limit point $\x^{(\infty)}$. Then $\left(\calA(\x^{(t)})\right)_{t\in\Z_+}$ converges to $\calA(\x^{(\infty)})$.
\end{lemma}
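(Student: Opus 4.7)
The plan is to reduce the convergence of multi-divisions to the continuity of finitely many scalar functions of $\x$, namely the barycentric coordinates on $\|\mathsf{\Delta}_{2q-1,q}\|$, and to absorb the discontinuity introduced by the procedure $\mathfrak{P}$ into the observation that $\mathfrak{P}$'s choice on the boundary only shuffles zero-length pieces.

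First I would note that for each vertex $\vv_{r,j}$ of $\mathsf{\Delta}_{2q-1,q}$, the barycentric coordinate $\alpha_{r,j}(\x)$ is continuous in $\x$; hence so are $x_r(\x) := \sum_{j\in[q]} \alpha_{r,j}(\x)$ (a sum with at most one nonzero term, by the chessboard constraint) and the cumulative sums $L_k(\x) := \sum_{r\le k} x_r(\x)$. Unpacking the construction of $\calA(\x)$, one sees that for every bundle $j \in [q]$ and every layer $\ell \in [m]$,
\[
\calA_j(\x)_\ell = [L_{r_c-1}(\x),\, L_{r_c}(\x)],
\]
where $c := \eta^{-1}(\eta(j)-h(\ell))$ and $r_c$ is the unique row such that $(r_c, c)$ is a vertex of the simplex chosen by $\mathfrak{P}$ for $\x$ (uniqueness follows from the chessboard constraint on columns); this interval has length $x_{r_c}(\x)$, which may be zero.

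Now assume $\x^{(t)} \to \x^{(\infty)}$ and let $\tau$ be the support of $\x^{(\infty)}$. Continuity of the $\alpha_{r,j}$'s forces $\mathrm{supp}(\x^{(t)}) \supseteq \tau$ for $t$ large enough. I would distinguish two cases according to whether $\tau$ contains a vertex in column $c$ or not. If it does, then $(r_c^{(\infty)}, c) \in \tau$ lies in the simplex chosen by $\mathfrak{P}$ for $\x^{(t)}$ as well (for $t$ large), so $r_c^{(t)} = r_c^{(\infty)}$, and both endpoints of $\calA_j(\x^{(t)})_\ell$ converge to those of $\calA_j(\x^{(\infty)})_\ell$ by continuity of $L_k$. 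Otherwise, $x_{r_c^{(\infty)}}(\x^{(\infty)}) = 0$ and $\calA_j(\x^{(\infty)})_\ell$ reduces to a single point; the length $x_{r_c^{(t)}}(\x^{(t)})$ of $\calA_j(\x^{(t)})_\ell$ is then bounded above by $1 - \sum_{r \in \tau_{\mathrm{rows}}} x_r(\x^{(t)})$, which tends to $0$ since the sum over $\tau_{\mathrm{rows}}$ tends to $1$.

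In either case, the symmetric difference in layer $\ell$ has Lebesgue measure tending to $0$. Summing over the $m$ layers yields $d(\calA_j(\x^{(t)}), \calA_j(\x^{(\infty)})) \to 0$ for every $j \in [q]$, which is the claimed convergence of multi-divisions. The main obstacle is the discontinuity of $\mathfrak{P}$ on the boundary of simplices; the saving observation is that this discontinuity can only alter which zero-length piece carries which bundle-name, a freedom to which the pseudo-metric $d$ is oblivious.
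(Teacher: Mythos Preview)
Your proof is correct and takes a genuinely different route from the paper's.

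The paper passes to subsequences: for each maximal simplex $\sigma$ whose interior catches infinitely many $\x^{(t)}$, it introduces an auxiliary multi-division $\widetilde{\calA}^\sigma$ (the division obtained by pretending $\mathfrak{P}$ chose $\sigma$ at $\x^{(\infty)}$), then shows both $d(\calA(\x^{(t)}),\widetilde{\calA}^\sigma)\to 0$ along that subsequence and $d(\widetilde{\calA}^\sigma,\calA(\x^{(\infty)}))=0$ by matching up the nonzero-length pieces in the two $\rho$-orderings. Your argument instead isolates global continuous invariants $x_r(\x)=\sum_j\alpha_{r,j}(\x)$ and $L_k(\x)=\sum_{r\le k}x_r(\x)$, writes each layer-$\ell$ piece of bundle $j$ explicitly as $[L_{r_c-1},L_{r_c}]$, and then handles the only remaining discontinuous datum $r_c$ by the dichotomy on whether column $c$ meets the limiting support $\tau$. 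This avoids subsequences and the auxiliary $\widetilde{\calA}^\sigma$ entirely, and makes transparent exactly where $\mathfrak{P}$'s choice matters (namely on pieces of zero length, which the pseudo-metric ignores). The paper's approach, on the other hand, is perhaps more self-contained in that it never needs to unwind the piece formula into cumulative sums.

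One small step you left implicit in Case~2 deserves a sentence: the bound $x_{r_c^{(t)}}(\x^{(t)})\le 1-\sum_{r\in\tau_{\text{rows}}}x_r(\x^{(t)})$ needs $r_c^{(t)}\notin\tau_{\text{rows}}$ for large $t$. This holds because, once $\operatorname{supp}(\x^{(t)})\supseteq\tau$, the simplex chosen by $\mathfrak{P}$ for $\x^{(t)}$ contains $\tau$; since that simplex has distinct rows and its column-$c$ vertex is $(r_c^{(t)},c)$, if $r_c^{(t)}$ were a row of $\tau$ then $\tau$ would have a vertex in column $c$, contrary to the case hypothesis.
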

\begin{proof}
Consider a $(q-1)$-dimensional simplex $\sigma$ of $\mathsf{\Delta}_{2q-1,q}$ whose interior contains infinitely many $\x^{(t)}$. (There can actually be several simplices of this kind---this is what makes the proof not completely obvious---but they all contain $\x^{(\infty)}$.) Denote by $\vv_{r_1,1}, \vv_{r_2,2}, \ldots, \vv_{r_q,q}$ its vertices. Let $\rho$ be the permutation such that $r_{\rho(1)} < r_{\rho(2)} < \cdots < r_{\rho(q)}$.
We will define soon a multi-division $\widetilde \calA^{\sigma}$ such that $d{\left(\widetilde\calA^{\sigma},\calA(\x^{(\infty)})\right)}=0$, and that
$\lim d{\left(\calA(\x^{(t)}),\widetilde\calA^{\sigma}\right)}=0$, where the limit is taken over the $t$ for which $\x^{(t)}\in\sigma$. Since this will hold for every simplex $\sigma$ containing infinitely many $\x^{(t)}$ in its interior, we will get the desired conclusion.

The multi-division $\widetilde\calA^{\sigma}$ is defined as if $\sigma$ were the $(q-1)$-dimensional simplex chosen by $\mathfrak{P}$ for $\x^{(\infty)}$: the quantity $x_{r_{\rho(j)}}^{(\infty)}$ is the length of the $j$-th piece in any layer (note that $\rho$ depends on $\sigma$); the $j$-th piece of the $\ell$-th layer gets $\eta(\rho(j))+h(\ell)$ as a bundle-name;
$\widetilde \calA_{j}^{\sigma}$ is obtained by taking all pieces with bundle-name $\eta(j)$. 
Denote by $L^j_{\ell}(\x^{(t)})$ the $j$-th piece in the $\ell$-th layer in multi-division $\calA(\x^{(t)})$ and by $\widetilde L_{\ell}^j$ the $j$-th piece in the $\ell$-th layer in multi-division $\widetilde\calA^{\sigma}$. We have 
\[
d{\left(L^j_{\ell}(\x^{(t)}), \widetilde L_{\ell}^j\right)}\le \left|\sum_{j'=1}^{j-1} \left(x_{r_{\rho(j')}}^{(t)}- x_{r_{\rho(j')}}^{(\infty)}\right)\right|+\left|\sum_{j'=1}^j \left(x_{r_{\rho(j')}}^{(t)}-x_{r_{\rho(j')}}^{(\infty)}\right)\right|
\]
(by comparing the positions of the left- and right-hand endpoints of $L^j_{\ell}(\x^{(t)})$ and $\widetilde L_{\ell}^j$), whose right-hand side converges to $0$ when $t$ goes to $+\infty$ by definition of the $x_i^{(\infty)}$'s. Since $L^j_{\ell}(\x^{(t)})$ and $\widetilde L_{\ell}^j$ get the same bundle-name, we have $\lim d{\left(\calA(\x^{(t)}),\widetilde\calA^{\sigma}\right)}=0$, where the limit is taken over the $t$ for which $\x^{(t)}\in\sigma$.

The point $\x^{(\infty)}$ belongs to the interior of a unique face $\tau$ of $\sigma$.
Let $\vv_{r'_1,1}, \vv_{r'_2,2}, \ldots, \vv_{r'_q,q}$ be the vertices of the $(n-1)$-dimensional simplex chosen by $\mathfrak{P}$ for $\x^{(\infty)}$. Denote by $J$ the set of indices $j$ for which $\vv_{r'_j,j}$ is a vertex of $\tau$. (Note that $|J|=\dim\tau +1$.) Since $\tau$ is unique and its vertices uniquely determined, we have $r_j = r'_j$ for all $j \in J$. Let $\rho'$ be the permutation so that $r'_{\rho'(1)} < r'_{\rho'(2)} < \cdots < r'_{\rho'(q)}$. Among them, we have the elements $r'_j$ with $j \in J$, which are the same as the elements $r_j$ with $j \in J$. Denoting by $j_1 < \cdots < j_{|J|}$ the elements from $\rho^{-1}(J)$ and by $j'_1 < \cdots < j'_{|J|}$ the elements from ${\rho'}{}^{-1}(J)$, we have thus $r_{\rho(j_a)} = r'_{\rho'(j'_a)}$ for every $a \in [|J|]$. This implies, since $r_j = r'_j$ for all $j\in J$, that $\rho(j_a) = \rho'(j'_a)$ for all $a\in[|J|]$. Therefore, the $a$-th piece of nonzero length (of any layer) in $\widetilde\calA^{\sigma}$ has length $x^{(\infty)}_{r_{\rho(j_a)}}$, which equals $x^{(\infty)}_{r'_{\rho'(j'_a)}}$, length of the $a$-th piece of nonzero length (of any layer) in $\widetilde\calA(\x^{(\infty)})$. Its bundle-name in $\widetilde\calA^{\sigma}$ is $\eta(\rho(j_a))+h(\ell)$, which is equal to $\eta(\rho'(j'_a))+h(\ell)$, its bundle-name in $\calA(\x^{(\infty)})$. Hence, the pieces of nonzero length in any layer are the same in $\calA(\x^{(\infty)})$ and $\widetilde\calA^{\sigma}$ and get the same bundle-names. The pseudo-metric $d(\cdot,\cdot)$ between two pieces of zero-length is $0$. Therefore, $d{\left(\widetilde\calA^{\sigma},\calA(\x^{(\infty)})\right)}=0$. 
\end{proof}

\subsection{Proof of Theorem~\ref{thm:group-birth}}


Before proceeding to the proof of Theorem~\ref{thm:group-birth}, we prove the following auxiliary lemma, which will also be used in Section~\ref{sec:twolayers}. In the applications of this lemma, the vertices of the graph $H$ will represent on one side the (non-birthday) agents and on the other side the pieces, and its edges will represent the acceptable assignments (in terms of preferences). We denote by $\delta_H(i)$ the edges incident to a vertex $i$ in the graph $H$.

\begin{lemma}\label{lem:TUM}
Let $n,q$ be positive integers. 
Let $a_1,a_2,\ldots,a_{q}$ be nonnegative real numbers summing up to $n-1$. 
Consider a bipartite graph $H=([n-1],[q];E)$ with nonnegative weights $w_e$ on its edges $e \in E$. Suppose $\sum_{e \in \delta_H(i)}w_e=1$ for each $i \in [n-1]$ and $\sum_{e \in \delta_H(j)}w_e=a_j$ for each $j \in [q]$. Then for every $j^* \in [q]$ there is an assignment $\pi_{j^*} \colon [n] \rightarrow [q]$ such that 
\begin{itemize}
    \item $\pi_{j^*}(n)=j^*$,
    \item for each $i \in [n-1]$, the vertex $\pi_{j^*}(i)$ is a neighbor of $i$ in $H$, 
    \item $|\pi_{j^*}^{-1}(j^*)| = \lfloor a_{j^*} \rfloor +1$, and
    \item for each $j \in [q]$, we have $|\pi_{j^*}^{-1}(j)| \in \{ \lfloor a_j \rfloor, \lceil a_j \rceil  \}$.
\end{itemize}
\end{lemma}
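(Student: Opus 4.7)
The plan is to realize the desired assignment as a $\{0,1\}$-valued point of a bipartite degree-constrained polytope, to use total unimodularity to guarantee that the polytope has integer vertices, and to use the given weights $w$ as a fractional feasibility certificate.

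Concretely, I would consider the polytope
\[
P \coloneqq \Bigl\{x \in \R^{E}_{\ge 0} \,:\, \sum_{e \in \delta_H(i)} x_e = 1\ \forall i \in [n-1];\ \sum_{e \in \delta_H(j^*)} x_e = \lfloor a_{j^*}\rfloor;\ \lfloor a_j\rfloor \le \sum_{e \in \delta_H(j)} x_e \le \lceil a_j\rceil\ \forall j\neq j^*\Bigr\}.
\]
Any $\{0,1\}$-valued $x\in P$ is the incidence vector of an assignment $\pi\colon[n-1]\to[q]$ supported on edges of $H$ with $|\pi^{-1}(j^*)|=\lfloor a_{j^*}\rfloor$ and $|\pi^{-1}(j)|\in\{\lfloor a_j\rfloor,\lceil a_j\rceil\}$ for $j\neq j^*$; setting $\pi_{j^*}(n)\coloneqq j^*$ and $\pi_{j^*}|_{[n-1]}\coloneqq\pi$ then yields the map required by the lemma. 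The first step is the integrality of $P$: after splitting each range constraint into a pair of one-sided inequalities, the constraint matrix of $P$ is, up to sign flips and row duplication, the vertex--edge incidence matrix of the bipartite graph $H$, which is totally unimodular; together with integer right-hand sides, this forces every vertex of $P$ to be $\{0,1\}$-valued.

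The second step is to show that $P\neq\emptyset$ using $w$. If $a_{j^*}\in\Z$, then $w$ itself lies in $P$. Otherwise, letting $\alpha\coloneqq a_{j^*}-\lfloor a_{j^*}\rfloor\in(0,1)$, I would rescale $w_e$ on edges $e\in\delta_H(j^*)$ by the factor $\lfloor a_{j^*}\rfloor/a_{j^*}$, bringing the column-$j^*$ sum exactly to $\lfloor a_{j^*}\rfloor$ at the cost of creating row deficits $\delta_i=w_{(i,j^*)}\alpha/a_{j^*}$ at the rows $i\in N_H(j^*)$, summing to $\alpha$. These deficits must be redistributed along edges of $H$ onto columns $j\neq j^*$, each of which has absorbing capacity $\lceil a_j\rceil-a_j$. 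Because $\sum_j a_j=n-1\in\Z$ forces $\sum_j\{a_j\}\in\Z_{\ge 0}$, a short counting argument gives $\sum_{j\neq j^*}(\lceil a_j\rceil-a_j)\ge\alpha$; and because $w$ itself already routes mass $1-w_{(i,j^*)}$ from each $i\in N_H(j^*)$ to $[q]\setminus\{j^*\}$ through $H$, the Hall/max-flow condition for the resulting transportation subproblem is met.

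The main obstacle I anticipate is writing the Hall-type verification of the redistribution step cleanly. A slicker alternative, which I would favour in the actual write-up, is to phrase the whole question at once as integer $s$-$t$ flow feasibility on the network with arcs $s\to i$ of capacity $[1,1]$, arcs $i\to j$ of capacity $[0,1]$ for each $(i,j)\in E$, and arcs $j\to t$ of capacity $[\lfloor a_j\rfloor,\lceil a_j\rceil]$ (with the singleton $\{\lfloor a_{j^*}\rfloor\}$ at $j^*$), and then to invoke Hoffman's feasibility theorem with the (perturbed) $w$ as the fractional certificate; integrality of optimal flows in integer-capacity networks then produces the sought integer flow of value $n-1$, which is exactly the assignment required by the lemma.
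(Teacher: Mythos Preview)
Your overall strategy---building a totally unimodular polytope whose integer vertices encode the desired assignments---is the same as the paper's, but you make your life harder than necessary by imposing the equality $\sum_{e\in\delta_H(j^*)} x_e = \lfloor a_{j^*}\rfloor$ from the outset. This forces you to manufacture a fractional point satisfying that equality, and your rescale-then-redistribute construction does not always succeed. Concretely, take $n=3$, $q=2$, $E=\{(1,1),(1,2),(2,1)\}$ with $w_{(1,1)}=w_{(1,2)}=\tfrac12$ and $w_{(2,1)}=1$, so that $a_1=\tfrac32$, $a_2=\tfrac12$, and let $j^*=1$. After rescaling by $\lfloor a_1\rfloor/a_1=\tfrac23$, row~$2$ carries a positive deficit $\tfrac13$ but has no edge to any $j\neq 1$; the Hall condition for your transportation subproblem fails (the set $\{2\}$ has empty neighborhood outside $j^*$), so neither the ``perturbed $w$'' nor Hoffman's criterion can be certified by the flow you propose. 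Yet the lemma does hold here, and your polytope is in fact nonempty (the unique point is $x_{(1,1)}=0$, $x_{(1,2)}=1$, $x_{(2,1)}=1$); the gap is in your feasibility argument, not in the polytope itself.

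The paper sidesteps this entirely by keeping the \emph{range} constraint $\lfloor a_{j^*}\rfloor\le\sum_{e\in\delta_H(j^*)} x_e\le\lceil a_{j^*}\rceil$ at $j^*$ as well. Then $w$ itself lies in the polytope, so nonemptiness is free. Total unimodularity makes every vertex integral, and since $w$ achieves $j^*$-column sum $a_{j^*}<\lceil a_{j^*}\rceil$ whenever $a_{j^*}\notin\Z$, not every vertex can have $j^*$-sum equal to $\lceil a_{j^*}\rceil$; hence some integral vertex has $j^*$-sum $\lfloor a_{j^*}\rfloor$, and that vertex gives the assignment. This one-line convexity observation replaces your entire redistribution step.
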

\begin{proof}
Take any $j^* \in [q]$. Consider the following polytope: 
\[
P= \left\{\, \x \in \mathbb{R}^E_{+} \middle| \sum_{e \in \delta_H(i)}x_{e}=1 \text{ for all } i \in [n-1]\ \mbox{and} \  \lfloor a_j \rfloor \le \sum_{e \in \delta_H(j)}x_{e} \le \lceil a_j \rceil \text{ for all } j \in [q]\, \right\} \, .
\]
The polytope $P$ is nonempty because it contains $\w=(w_e)_{e \in E}$.
Note that since the incidence matrix defining $P$ is totally unimodular, the vertices of $P$ are integral. Now, we claim that there exists an integral vertex ${\bar \w} \in P$ with $\sum_{e \in \delta_H(j^*)}{\bar w}_e= \lfloor a_{j^*} \rfloor$. To see this, if $a_{j^*}$ is an integer, the existence of such an integral vertex is obvious. If $a_{j^*}$ is not an integer, then just note that all integral vertices of $P$ cannot make the previous sum equal to $\lceil a_{j^*} \rceil$ since there exists a point in $P$ not satisfying this equality, namely $\w$.
The coefficient $\bar w_e$ belongs to $\{0,1\}$ for every $e\in E$ because $\sum_{e \in \delta_H(i)}{\bar w}_e=1$ for each $i \in [n-1]$ and because ${\bar \w}$ is integral. Thus every $i \in [n-1]$ has a unique neighbor $j \in [q]$ with ${\bar w}_{ij}=1$. 
Defining $\pi_{j^*}(i)$ as this $j$ for $i \in [n-1]$ and $\pi_{j^*}(n)$ as $j^*$, we get a desired assignment.~ 
\end{proof}

\smallskip 
Now we are ready to prove Theorem~\ref{thm:group-birth}. 
\smallskip 

\begin{proof}[Proof of Theorem~\ref{thm:group-birth}]
Without loss of generality, agent $n$ is the birthday agent. Let $p$ be the prime number and $k$ the integer such that $q=p^k$. Let $\T$ be a triangulation of $\mathsf{\Delta}_{2q-1,q}$ such that $\varphi_g(\sigma) \in \T$ for all $\sigma \in \T$ and all $g \in G$. (It is invariant by the action of $G=((\Z_p)^k,+)$.) Such a triangulation can be achieved by taking repeated barycentric subdivisions of $\mathsf{\Delta}_{2q-1,q}$.

We partition the vertices of $\T$ into their $G$-orbits. From each orbit, we pick a vertex $\vv$. We ask each non-birthday agent $i \in[n-1]$ the index $j$ of the non-overlapping layered piece $\calA_j(\vv)$ she prefers in $\calA(\vv)$. (In case of a tie, she makes an arbitrary choice.) We define $f^{(i)}(\vv)$ to be $\e_{\eta(j)}$. We extend $f^{(i)}$ on each orbit in an equivariant way: $f^{(i)}(g \cdot \vv)\coloneqq g \cdot f^{(i)}(\vv)$. This is done unambiguously because the action of $G$ on $\T$ is free. Lemma~\ref{lem:eq-multi} implies that, for every vertex $\vv$ of the triangulation $\T$, the integer $j$ such that $f^{(i)}(\vv)=\e_{\eta(j)}$ is the index of a layered piece preferred by agent $i$ in the multi-division $\calA(\vv)$.

For each non-birthday agent $i \in[n-1]$, we extend the map $f^{(i)}$ affinely on each simplex of $\T$. Denote by ${\bar f}^{(i)}$ the affine extension of $f^{(i)}$. This way, the map ${\bar f}^{(i)}$ is a $G$-equivariant simplicial map from $\T$ to $\Delta^G$. The affine extension ${\bar f}^{(i)}$  of $f^{(i)}$ is an ``approximation'' of the original preferences. 
We then aggregate these approximate preferences among $n-1$ non-birthday agents by setting ${\bar f}=\frac 1 {n-1} \sum_{i=1}^{n-1}{\bar f}^{(i)}$. It is a $G$-equivariant continuous map from $\mathsf{\Delta}_{2q-1,q}$ to $\Delta^G$. 
For each vertex $\vv$ of $\T$, the point ${\bar f}(\vv)$ represents the average preference of the $n-1$ agents according to the original preferences. 
For each $\x$ in $\|\mathsf{\Delta}_{2q-1,q}\|$, ${\bar f}(\x)$ represents the average preference of the $n-1$ agents according to the approximate preferences.
(This averaging technique has been introduced by Gale~\cite{gale1984equilibrium} and applied by Asada et al.~\cite{Asada2018} for the birthday cake-division.)

According to Lemma~\ref{lem:vol-surj}, there exists a point $\x_0$ in $\|\mathsf{\Delta}_{2q-1,q}\|$ such that 
\begin{equation}\label{eq:center}
{\bar f}(\x_0)=\frac 1 q \sum_{g\in G}\e_g\, .
\end{equation}

For every non-birthday agent $i \in [n-1]$ and every layered piece $j \in [q]$, define $w_{ij} = {\bar f}^{(i)}(\x_0) \cdot \e_{\eta(j)}$ 
(where the product in the right-hand term is the dot product in $\R^G$). 
We clearly have $\sum_{i=1}^{n-1} w_{ij}= \frac {n-1} q$ for all $j \in [q]$ (by equation~\eqref{eq:center}), and $\sum_{j=1}^q w_{ij}= 1$ for all $i \in [n-1]$ (because ${\bar f}^{(i)}$ has its image in $\Delta^G$). 

Now, consider the bipartite graph $H=([n-1],[q];E)$, with one side being the agents from $1$ to $n-1$ and with the other side being the layered pieces and where the edge $ij$ exists precisely when $w_{ij} > 0$. 
Applying Lemma~\ref{lem:TUM} with $a_j= \frac{n-1}{q}$ for $j \in [q]$, there exists for every $j^* \in [q]$ an assignment $\pi_{j^*} \colon [n] \rightarrow [q]$ such that 
\begin{itemize}
    \item $\pi_{j^*}(n)=j^*$, 
    \item for each $i \in [n-1]$, the vertex $\pi_{j^*}(i)$ is a neighbor of $i$ in $H$, 
    \item $|\pi^{-1}_{j^*}(n)| = \lfloor (n-1)/q \rfloor +1$, and
    \item for each $j \in [q]$, we have $|\pi^{-1}_{j^*}(j)| \in \{ \lfloor (n-1)/q \rfloor, \lceil (n-1)/q \rceil \}$.
\end{itemize}
We claim that 
$$
\lfloor n/q\rfloor \le |\pi^{-1}_{j^*}(j)| \le  \lceil n/q\rceil,
$$ 
for all $j\in [q]$.
Clearly, for each $j \in [q]$, we have $|\pi^{-1}_{j^*}(j)| \le \lceil n/q\rceil$ since $\lceil (n-1)/q\rceil \le \lceil n/q\rceil$ and $\lfloor (n-1)/q \rfloor +1=\lceil n/q \rceil$. To see the lower bound, 
observe that if $n$ is not a multiple of $q$, we have $\lfloor (n-1)/q \rfloor = \lfloor n/q \rfloor$ and hence $|\pi^{-1}_{j^*}(j)| \ge \lfloor n/q \rfloor$ for each $j \in [q]$; if $n$ is a multiple of $q$, we have $|\pi^{-1}_{j^*}(j)| \le n/q$ for each $j \in [q]$, which together with the fact that $\sum_{j \in [q]}|\pi^{-1}_{j^*}(j)|=n$ implies that $|\pi^{-1}_{j^*}(j)| = n/q$ for each $j \in [q]$. 

For every integer $N>0$, we can choose $\T \coloneqq \T_N$ so that it has a mesh size upper bounded by $1/N$ and define ${\bar f}^{(i)} \coloneqq {\bar f}^{(i)}_N$. For each $N$, we have $\x^{N}_{0}$ satisfying \eqref{eq:center}. 
Let $H_N$ be the graph $([n-1],[q];E_N)$ such that the edge $ij$ exists precisely when ${\bar f}^{(i)}_N(\x^N_0) \cdot \e_{\eta(j)} > 0$. Compactness implies that we can select among these arbitrarily large $N$ an infinite sequence such that $(\x^{N}_0)$ converges to a point $\x^*$ and such that $H_{N}$ is always the same graph $H^*$. As we have seen, for every $j^*\in[q]$, there is then an assignment $\pi_{j^*}\colon[n]\rightarrow[q]$ with:
\begin{itemize}
    \item $\pi_{j^*}(n)=j^*$,
    \item for each $i\in [n-1]$, the vertex $\pi_{j^*}(i)$ is a neighbor of $i$ in $H^*$, and
    \item for each $j\in [q]$, we have $|\pi^{-1}_{j^*}(j)|\in\{\lfloor n/q\rfloor, \lceil n/q\rceil \}$.
\end{itemize}

Consider any $j^* \in [q]$ and any $N$ from the infinite sequence. We have $H_N = H^*$. By definition of $\bar f^{(i)}_N$, for each agent $i \in [n-1]$, there exists thus a vertex $\vv^{i,j^*,N}$ of the supporting simplex of $\x^{N}_0$ in $\T_N$ such that $f^{(i)}_N(\vv^{i,j^*,N})$ is $\e_{\eta(\pi_{j^*}(i))}$, meaning that $\mathcal{A}_{\pi_{j^*}(i)}(\vv^{i,j^*,N}) \in c_i(\mathcal{A}(\vv^{i,j^*,N}))$. 
We have $\calA_{\pi_{j^*}(i)}(\vv^{i,j^*,N})\in c_i(\calA(\vv^{i,j^*,N}))$ for all $i\in[n-1]$ and arbitrarily large $N$. 
Since $(\x^{N}_0)$ converges to $\x^*$, the sequence $(\vv^{i,j^*,N})$ converges to the same point ${\x}^*$ for every $i \in [n-1]$ and every $j^* \in [q]$. 
By the closed preferences assumption and Lemma~\ref{lem:converg}, we have $\calA^*_{\pi_{j^*}(i)}\in c_i(\calA^*)$ for all $i\in[n-1]$, where $\calA^* \coloneqq \calA(\x^*)$. Since we have $\pi_{j^*}(n)=j^*$ and $|\pi^{-1}_{j^*}(j)|\in \{ \lfloor n/q \rfloor,\lceil n/q\rceil\}$ for all $j\in [q]$, the properties required for $\pi_{j^*}$ are satisfied. 
\end{proof}

\section{Envy-free division using one short knife and one long knife}\label{sec:twolayers}

While the main result of Section~\ref{sec:long} is a pure existence result with a non-constructive proof based on Volovikov's theorem, we focus in this section on the computational aspect of the problem of finding an envy-free multi-division. Actually, we show not only that an envy-free multi-division with one short and one long knife exists for a two-layered cake division among three groups of $n$ agents with closed, monotone, and hungry preferences, but also that such a division can be efficiently computed for agents with valuations satisfying the Lipschitz condition. 
In fact, we prove stronger statements where both existence and computational results extend to those for birthday multi-divisions.

We start by stating the existence result.

\begin{theorem}\label{thm:two-layers:group-birth}
Consider an instance of the two-layered cake-cutting problem with $n$ agents, $n \ge 3$, with closed, monotone, and hungry preferences. Then there exists a feasible and contiguous multi-division into three layered pieces so that no matter which layered piece the birthday agent chooses, the remaining agents can be assigned to the layered pieces while satisfying the following two properties:
\begin{itemize}
    \item each of the remaining agents is assigned to one of her preferred layered pieces.
    \item the number of agents assigned to each layered piece differs by at most one.
\end{itemize}
Moreover, this multi-division requires only one short knife and one long knife. 
\end{theorem}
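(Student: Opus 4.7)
The plan is to parameterize feasible, contiguous multi-divisions produced by one short knife and one long knife by the unit square $[0,1]^2$: place the long knife at position $y$ (cutting both layers) and the short knife at position $x$ on the top layer. For $(x,y)\in[0,1]^2$, set
\[
\calA_1(x,y)=\bigl([0,\min(x,y)],[y,1]\bigr),\ \calA_2(x,y)=\bigl([\min(x,y),\max(x,y)],\emptyset\bigr),\ \calA_3(x,y)=\bigl([\max(x,y),1],[0,y]\bigr),
\]
where the two coordinates refer respectively to the top and bottom layers. Each piece is non-overlapping regardless of the sign of $x-y$, the three pieces cover the layered cake, and $(x,y)\mapsto\calA(x,y)$ is continuous for the pseudo-metric $d$. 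In particular each resulting multi-division uses exactly one short and one long knife.

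Next I would mirror the Gale-averaging scheme from the proof of Theorem~\ref{thm:group-birth}. Fix a triangulation $\T$ of $[0,1]^2$. For every non-birthday agent $i\in[n-1]$ and every vertex $\vv$ of $\T$, let $j(i,\vv)\in[3]$ be a piece of $\calA(\vv)$ that $i$ weakly prefers, set $f^{(i)}(\vv)=\e_{j(i,\vv)}$, and extend $f^{(i)}$ affinely on each simplex of $\T$ to obtain a continuous map $\bar f^{(i)}\colon[0,1]^2\to\Delta_2$, where $\Delta_2$ denotes the simplex with vertices $\e_1,\e_2,\e_3$. Aggregate by $\bar f=\tfrac{1}{n-1}\sum_{i\in[n-1]}\bar f^{(i)}$. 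The objective is to locate a point $(x^*,y^*)$ at which $\bar f(x^*,y^*)=\tfrac{1}{3}(\e_1+\e_2+\e_3)$, the barycenter of $\Delta_2$.

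Establishing the existence of such a point is the heart of the argument and the main obstacle. I would exploit the boundary geometry of the square: $\calA_2$ degenerates to an empty piece along the whole diagonal $\{x=y\}$, and at each corner of $[0,1]^2$ exactly one piece vanishes while the other two reduce to single full layers. Combined with the hungry assumption (an empty piece is never chosen) and monotonicity (moving along a horizontal or vertical segment makes one piece grow while another shrinks and the third stays unchanged), these observations should yield strong constraints on the values of $\bar f$ on $\partial[0,1]^2$. In the spirit of the two-dimensional form of the classical lemma mentioned in the introduction---a self-map of a simplex that sends each facet to itself is surjective---such constraints should force $\bar f$ to hit the barycenter. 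Concretely, I would split the square along the diagonal into two triangular regions and, on each one, check that the restriction of $\bar f$ obeys the appropriate Sperner-type boundary coloring, allowing Sperner's lemma (or the equivalent surjectivity statement) to be applied; combining the two conclusions would produce the desired $(x^*,y^*)$.

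Once $(x^*,y^*)$ is in hand, set $w_{ij}=\bar f^{(i)}(x^*,y^*)\cdot\e_j$, so that $\sum_j w_{ij}=1$ for every $i\in[n-1]$ and $\sum_i w_{ij}=(n-1)/3$ for every $j\in[3]$. Let $H$ be the bipartite graph on $[n-1]\cup[3]$ whose edges are the pairs $ij$ with $w_{ij}>0$. Applying Lemma~\ref{lem:TUM} with $a_j=(n-1)/3$ yields, for each birthday choice $j^*\in[3]$, an assignment $\pi_{j^*}\colon[n]\to[3]$ with $\pi_{j^*}(n)=j^*$, with each non-birthday agent matched to a neighbor in $H$, and with $|\pi_{j^*}^{-1}(j)|\in\{\lfloor n/3\rfloor,\lceil n/3\rceil\}$ for every $j\in[3]$. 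Finally, by refining $\T$, extracting a convergent subsequence by compactness, and invoking closed preferences together with Lemma~\ref{lem:converg}, I would pass to the limit to obtain a multi-division satisfying all the conclusions of Theorem~\ref{thm:two-layers:group-birth}.
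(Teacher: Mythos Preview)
Your proposal leaves the decisive step—showing that $\bar f$ hits the barycenter—essentially unproven, and the route you sketch does not work as stated. The claim that ``moving along a horizontal or vertical segment makes one piece grow while another shrinks and the third stays unchanged'' is false for vertical segments in your parameterization: when you fix $x$ and increase the long-knife coordinate $y$ in the region $y<x$, piece $1=([0,\min(x,y)],[y,1])$ has its top part grow and its bottom part shrink simultaneously, so it is neither a superset nor a subset of its former self and monotone preferences give no information; piece $3$ behaves the same way when $y>x$. Your Sperner-on-triangles plan also fails at the boundary: on each edge of, say, the upper triangle $\{x\le y\}$, all three pieces are generically nonempty (for instance, along $x=0$ the three pieces are $(\emptyset,[y,1])$, $([0,y],\emptyset)$, $([y,1],[0,y])$), so hungry preferences do not exclude any color on any edge, and neither Sperner's lemma nor the ``each facet to itself'' surjectivity lemma applies.

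The paper circumvents this with a different encoding engineered for monotonicity. There $x$ is the long knife, $y$ the short knife, and piece $1$ is nothing but the segment $[0,y]$ of a designated \emph{top layer}, chosen as the layer that a weak majority of agents prefer. Then along every basic vertical line piece $1$ grows while pieces $2$ and $3$ do not, so with a specific tie-breaking rule $\bar f_1$ is nondecreasing in $y$; combined with the majority choice this forces $\bar f_1(x,1)>\tfrac13$ and $\bar f_1(x,0)=0$, so each basic line crosses the level set $\{z_1=\tfrac13\}$ exactly once. The divisions at $x=0$ and $x=1$ agree up to swapping pieces $2$ and $3$, giving a symmetry that puts the two outermost crossings on opposite sides of the barycenter; a square-by-square argument inside a suitable vertical strip then locates a basic square whose image contains the barycenter. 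These ingredients—single-layer piece $1$, majority top layer, tie-breaking rule, and the $x=0/x=1$ symmetry—are what make the two-layer boundary analysis go through, and all are missing from your outline. (A minor point: Lemma~\ref{lem:converg} is about the chessboard encoding of Section~\ref{sec:long}; continuity of $(x,y)\mapsto\calA(x,y)$ for the unit-square encoding must be checked separately, though it is easy.)
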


In a more formal way (and with the birthday agent being labeled with $n$), Theorem~\ref{thm:two-layers:group-birth} ensures the existence of a feasible and contiguous multi-division $\calA$ into three layered pieces with the following property: for every $j^*\in\{1,2,3\}$, there is an assignment $\pi_{j^*}\colon[n]\rightarrow\{1,2,3\}$ with 
\begin{itemize}
    \item $\pi_{j^*}(n)=j^*$,
    \item for each $i\in [n-1]$, $\calA_{\pi_{j^*}(i)}\in c_i(\calA)$, and
    \item for each $j\in \{1,2,3\}$, $|\pi^{-1}_{j^*}(j)|\in\{\lfloor n/3\rfloor, \lceil n/3\rceil \}$.
\end{itemize}


For three agents with valuations satisfying monotonicity and the Lipschitz condition, Deng et al.~\cite{Deng2012} designed an FPTAS to compute an approximate envy-free division. 
We show that in the context of two-layered cake-cutting among $n$ agents, we can also design an FPTAS. 
This result, which can be seen as an algorithmic version of Theorem~\ref{thm:two-layers:group-birth}, generalizes the result of Deng et al. in three respects: First, our result holds for the birthday version; second, it holds for the group version, ensuring an envy-free division among three groups of almost equal size; third, it holds for the case of two-layered cake-cutting. Recall that monotonicity and the Lipschitz condition imply the hungry assumption under the valuation function model; see Section~\ref{sec:prelim}.


\begin{theorem}\label{thm:two-layers:group-birth:FPTAS}
Consider an instance of the two-layered cake-cutting problem with $n$ agents, $n\ge 3$, whose valuation functions satisfy monotonicity and the Lipschitz condition with constant $K$. Then, for any $\varepsilon>0$, one can find in time $O(n\log^2 \frac{K}{\varepsilon})$ a feasible and contiguous multi-division into three layered pieces where no matter which layered piece the birthday agent chooses, the remaining agents can be assigned to the layered pieces while satisfying the following two properties:
\begin{itemize}
    \item each of the remaining agents is assigned to one of her $\varepsilon$-approximate preferred layered pieces.
    \item the number of agents assigned to each layered piece differs by at most one.
\end{itemize}
Moreover, this multi-division requires only one short knife and one long knife. 
\end{theorem}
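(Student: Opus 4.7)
The plan is to combine the Sperner-like encoding of divisions alluded to just before Theorem~\ref{thm:two-layers:group-birth} with a divide-and-conquer binary search that exploits monotonicity of the valuations, in the spirit of the FPTAS of Deng et al. First, I would encode each feasible contiguous multi-division using one short knife and one long knife by a point $(x,y)\in [0,1]^2$: some convention fixes which layer carries the short knife and on which side of the long knife it lies, $y$ being the position of the long knife. The three layered pieces $\calA_j(x,y)$ depend continuously on $(x,y)$ for the pseudo-metric $d$ of Section~\ref{sec:prelim}, and the labeling is chosen so that the boundary of the unit square consists of degenerate divisions in which at most two pieces are nonempty. Applying Gale's averaging trick to the $n-1$ non-birthday agents exactly as in the proof of Theorem~\ref{thm:group-birth} gives a continuous aggregated favorite-piece map $\bar f\colon[0,1]^2\to \Delta_2$ whose behavior on $\partial [0,1]^2$ mimics that of a three-colored Sperner coloring. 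A standard two-dimensional Sperner/fixed-point argument then produces a point $(x_0,y_0)$ at which $\bar f$ attains the barycenter $\tfrac{1}{3}\sum_{j=1}^3\e_j$, and Lemma~\ref{lem:TUM} applied with $a_j=(n-1)/3$ extracts from this barycentric point, for every $j^*\in\{1,2,3\}$, an assignment $\pi_{j^*}$ with the required near-equal group sizes. This is how I would reprove Theorem~\ref{thm:two-layers:group-birth}; for the FPTAS, I would then discretize with mesh $\delta=\Theta(\varepsilon/K)$ so that by the Lipschitz hypothesis the favorite piece of each agent at the corners of a $\delta\times\delta$ square determines her favorite piece everywhere in that square up to an additive loss of $\varepsilon$ in valuation. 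The problem reduces to locating one such small ``trichromatic'' square on whose corners $\bar f$ takes values that together cover the three basis vectors.

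The algorithmic heart is the efficient location of such a square, and here I would use the monotonicity of the valuations. In the parametrization above, when $y$ is fixed and $x$ moves in a direction for which the short knife lies strictly on one side of the long knife, one piece strictly grows (in the layer of the short knife) while another strictly shrinks, and the third is unchanged; by monotonicity of each agent's valuation this causes the aggregated coloring along the horizontal segment to vary in a structured, one-sided way, so the locus of ``color change'' along this segment can be binary-searched. An analogous monotone structure holds along vertical segments on which only the long knife moves. The algorithm is therefore an outer binary search that maintains a full-height vertical rectangle (initially the whole unit square) on whose two vertical sides the aggregated coloring exhibits invariants guaranteeing the presence of a trichromatic small square inside. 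Each iteration evaluates $\bar f$ along the vertical line through the middle of the current rectangle, which amounts to querying each non-birthday agent's favorite piece at a constant number of points at a total cost of $O(n)$, and keeps the half that preserves the invariants, halving the width in $O(\log(K/\varepsilon))$ rounds. Once the width has shrunk to $\delta$, a symmetric inner binary search in the $y$-direction reduces the height to $\delta$ in another $O(\log(K/\varepsilon))$ rounds. The final $\delta\times\delta$ square, combined with Lemma~\ref{lem:TUM}, yields the claimed $\varepsilon$-envy-free multi-division and the assignments, all in $O(n\log^2(K/\varepsilon))$ time.

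The main obstacle I foresee is identifying the precise invariants that the outer and inner binary searches must maintain. They have to be (i) verifiable in $O(n)$ time per iteration, (ii) preserved when the current rectangle is halved, (iii) strong enough to force trichromaticity once the rectangle has shrunk to mesh size, and (iv) compatible with the unified birthday requirement that a single division serve every $j^*\in\{1,2,3\}$. Handling ties in agents' preferences and controlling what happens near the boundary of the unit square---where the encoding is degenerate and the sign of $y-x$ can switch the short-knife convention---should also require some care.
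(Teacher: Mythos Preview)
Your proposal follows essentially the same route as the paper: encode divisions by $[0,1]^2$, average the non-birthday agents' choices into $\bar f\colon[0,1]^2\to\Delta^2$, locate a small square whose image hits the barycenter by a two-stage binary search, and finish with Lemma~\ref{lem:TUM}. The obstacles you flag are exactly the ones the paper resolves, so let me point out the concrete devices it uses.

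First, the invariant: the paper fixes the line $B=\{z_1=\tfrac13\}\subset\Delta^2$ (slightly perturbed to $B^\delta$ so that no grid vertex lands on it) and tracks, for each basic vertical line $\bigL(x)$, whether the \emph{unique} crossing of $\bar f(\bigL(x))$ with $B^\delta$ falls in the left half $L^\delta$ or the right half $R^\delta$. Uniqueness comes from the vertical monotonicity you identified (piece~1 grows as the short knife moves). The initialization of the invariant is not a generic Sperner boundary condition but a specific symmetry: the divisions encoded by $(0,y)$ and $(1,y)$ coincide up to swapping pieces~2 and~3, so $\bar f(\bigL(0))$ and $\bar f(\bigL(1))$ are mirror images across $\{z_2=z_3\}$, hence one hits $L^\delta$ and the other $R^\delta$. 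For this to work one also needs $\bar f_1(x,1)>\tfrac13$, which the paper secures by declaring the ``top'' layer to be the one weakly preferred by a majority; this is the piece of your ``boundary'' worry that actually requires an idea.

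Second, your runtime accounting for the outer loop is off. Checking whether $\bar f(\bigL(\operatorname{med} X))$ meets $L^\delta$ or $R^\delta$ is \emph{not} an $O(n)$ query at a constant number of points: you must first locate the crossing with $B^\delta$ along that line, which is itself a binary search in the $y$-direction costing $O(n\log N)$. This nested binary search is precisely what produces the $\log^2$ in the final bound; your description (``a constant number of points at a total cost of $O(n)$'') would give only $O(n\log N)$, contradicting the complexity you state at the end.
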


More formally (and with the birthday agent being labeled with $n$), Theorem~\ref{thm:two-layers:group-birth:FPTAS} ensures that one can find in time $O(n\log^2 \frac{K}{\varepsilon})$, a feasible and contiguous multi-division $\calA$ into three layered pieces with the following property: for every $j^*\in\{1,2,3\}$, there is an assignment $\pi_{j^*}\colon[n]\rightarrow\{1,2,3\}$ with
\begin{itemize}
    \item $\pi_{j^*}(n)=j^*$,
    \item for each $i\in [n-1]$, $v_i(\calA_{\pi_{j^*}(i)})+\varepsilon \ge \max_{j \in \{1,2,3\}}v_i(\calA_{j})$, and
    \item for each $j\in \{1,2,3\}$, $|\pi^{-1}_{j^*}(j)|\in\{\lfloor n/3\rfloor, \lceil n/3\rceil \}$.
\end{itemize}

Note that the algorithm of Theorem~\ref{thm:two-layers:group-birth:FPTAS} only computes the division and not the assignment itself. Yet, once we find a desired multi-division, we can ask each agent her approximate preferred pieces and then easily compute a desired assignment
$\pi_{j^*}$ in $O(n^2)$; this can be done, e.g., by a recent max-flow algorithm of Orlin~\cite{Orlin2013}. 

In order to establish the above theorems, we encode by the points of the unit square $[0,1]^2$ the divisions of the two-layered cake using the  short knife and the long knife. The position of the long knife corresponds to the $x$-axis and the position of the short knife corresponds to the $y$-axis. Fixing bundle names appropriately, the two vertical boundaries when $x=0$ and $x=1$ enjoy a certain symmetry: the divisions that appear on these boundaries are the same. By exploiting this symmetry, one can apply a Sperner-type argument to show the existence of an envy-free division. A careful utilization of the monotonicity allows to make this argument algorithmic and to get the FPTAS.

With this kind of techniques, we are able to prove the following algorithmic version of the theorem of Segal-Halevi and Suksompong~\cite{Segal2021}.

\begin{theorem}\label{thm:one-layer:group-birth:FPTAS}
Consider an instance of the one-layered cake-cutting problem with $n$ agents, $n\ge 3$, whose valuation functions satisfy monotonicity and the Lipschitz condition with constant $K$. Let $k_1, k_2, k_3$ be positive integers summing up to $n$. Then, for any $\varepsilon>0$, one can find in time $O(n\log^2 \frac{K}{\varepsilon})$ a feasible and contiguous multi-division into three pieces where no matter which piece the birthday agent chooses, the remaining agents can be assigned to the pieces while satisfying the following two properties:
\begin{itemize}
    \item each of the remaining agents is assigned to one of her $\varepsilon$-approximate preferred pieces.
    \item the number of agents assigned to piece $j$ is $k_j$ for all $j\in \{1,2,3\}$.
\end{itemize} 
\end{theorem}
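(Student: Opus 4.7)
The plan is to adapt to the one-layer, three-group setting the encoding and binary-search technique developed in Section~\ref{sec:twolayers}. I would parameterize contiguous three-piece divisions of $[0,1]$ by the points $(x,y)$ of the unit square $[0,1]^2$: the two cuts are placed at $\min(x,y)$ and $\max(x,y)$, yielding pieces $P_1=[0,\min(x,y)]$, $P_2=[\min(x,y),\max(x,y)]$ and $P_3=[\max(x,y),1]$, with a bundle-naming convention on the triangles $\{x\le y\}$ and $\{x\ge y\}$ chosen so that the two vertical boundaries $x=0$ and $x=1$ realize the same family of one-cut divisions. This produces the boundary symmetry that drives the Sperner-type argument, exactly as in the short-knife/long-knife encoding of the two-layered case.

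Next, for each non-birthday agent $i\in[n-1]$ and each vertex $\vv$ of a fine grid on the square, I would ask agent $i$ for the index of her favorite piece in the division associated to $\vv$, extend affinely over the grid to obtain a continuous map ${\bar f}^{(i)}\colon[0,1]^2\to\Delta_2$, and aggregate these via Gale's trick into ${\bar f}=\tfrac{1}{n-1}\sum_i{\bar f}^{(i)}$. A Sperner-type argument on the square, relying on the boundary symmetry above and on the hungry assumption (which forces the facet condition ``an empty piece is chosen by nobody'' on the relevant edges), shows that ${\bar f}$ meets the target popularity vector $(k_1/n,k_2/n,k_3/n)$ up to any prescribed precision. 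Combined with a slight variant of Lemma~\ref{lem:TUM} in which $a_j$ is adjusted by one unit for the chosen birthday piece, this existence step produces, for every $j^*\in\{1,2,3\}$, an assignment $\pi_{j^*}$ with $\pi_{j^*}(n)=j^*$, sending every non-birthday agent to one of her $\varepsilon$-preferred pieces, and satisfying $|\pi_{j^*}^{-1}(j)|=k_j$ for all $j$. This is the algorithmic counterpart of the self-map-of-the-simplex theorem invoked by Segal-Halevi and Suksompong.

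The FPTAS is then obtained by a two-dimensional binary search on $[0,1]^2$ exploiting monotonicity and the Lipschitz condition. Along any vertical line $\{x_0\}\times[0,1]$, each $v_i(P_j)$ is monotone in $y$, so the aggregate popularity vector of the $n-1$ non-birthday agents is piecewise monotone with a bounded number of sign changes; binary search in $y$ finds an interval of height $\varepsilon/K$ containing a candidate solution using $O(n\log\tfrac{K}{\varepsilon})$ value-oracle queries. An outer binary search over $x$ bisects the square into ever thinner vertical strips that, by the boundary symmetry, must continue to contain an approximate solution; after $O(\log\tfrac{K}{\varepsilon})$ bisections we localize a subsquare of side $\varepsilon/K$, within which the Lipschitz bound guarantees that every division is $\varepsilon$-envy-free. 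The total cost is $O(n\log^2\tfrac{K}{\varepsilon})$, as claimed.

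The main obstacle, I expect, is to set up the outer binary search in $x$ rigorously for \emph{arbitrary} sizes $k_1,k_2,k_3$: one must exhibit a boundary invariant, computable from the agents' approximate labels along the two vertical sides of the current strip, that guarantees existence of an approximate solution inside and whose value flips between the two halves after each bisection. In the almost-equal-size regime treated in Theorem~\ref{thm:two-layers:group-birth:FPTAS} this reduces to a clean parity check; for general sizes it becomes a two-dimensional index or degree computation on the boundary of the square, which is precisely the combinatorial shadow of the surjectivity of self-maps of a simplex preserving each facet.
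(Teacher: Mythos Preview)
Your overall architecture---unit-square encoding of three-piece divisions, Gale aggregation into $\bar f$, and a nested binary search---matches the paper's approach closely. The paper in fact takes a more direct route: it treats the one-layer case as the two-layer problem with an empty bottom layer and reuses the encoding $\calA(x,y)$ and Algorithm~\ref{alg:FPTAS} verbatim. The only changes are the target point and the observation that, because the bottom layer is empty, the images $\bar f(\bigL(0))$ and $\bar f(\bigL(1))$ are the full edges $\{z_3=0\}$ and $\{z_2=0\}$ of $\Delta^2$. This makes the $L^\delta/R^\delta$ invariant work for \emph{any} interior target, so your anticipated ``main obstacle'' for general $(k_1,k_2,k_3)$ does not materialize; the outer binary search goes through exactly as in the two-layer case, with no degree computation needed.

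Two points in your proposal need correction. First, your target $(k_1/n,k_2/n,k_3/n)$ is not the right one: $\bar f$ aggregates only the $n-1$ non-birthday agents, and the paper instead aims at $\omega=\frac{1}{n-1}(k_1-\tfrac13,k_2-\tfrac13,k_3-\tfrac13)$. This specific choice gives $a_j=k_j-\tfrac13$ in Lemma~\ref{lem:TUM}, so $\lfloor a_j\rfloor=k_j-1$, $\lceil a_j\rceil=k_j$, and the conclusion $|\pi_{j^*}^{-1}(j^*)|=\lfloor a_{j^*}\rfloor+1=k_{j^*}$ forces all group sizes to be exactly $k_j$ by the summing-to-$n$ argument; no ``slight variant'' of the lemma is required. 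Second, your alternative encoding with cuts at $\min(x,y)$ and $\max(x,y)$ and a bundle-naming convention that differs on the two triangles risks a discontinuity of $\bar f$ along the diagonal; the paper's encoding (short knife at $y$, long knife at $x$) avoids this and gives the needed vertical monotonicity (Lemma~\ref{lem:monotone}) for free.
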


More formally (and with the birthday agent being labeled with $n$), Theorem~\ref{thm:one-layer:group-birth:FPTAS} ensures that one can find in time $O(n\log^2 \frac{K}{\varepsilon})$, a feasible and contiguous multi-division $\calA$ into three pieces with the following property: for every $j^*\in\{1,2,3\}$, there is an assignment $\pi_{j^*}\colon[n]\rightarrow\{1,2,3\}$ with
\begin{itemize}
    \item $\pi_{j^*}(n)=j^*$,
    \item for each $i\in [n-1]$, $v_i(\calA_{\pi_{j^*}(i)})+\varepsilon \ge \max_{j \in \{1,2,3\}}v_i(\calA_{j})$, and
    \item for each $j\in \{1,2,3\}$, $|\pi^{-1}_{j^*}(j)|=k_j$.
\end{itemize}

We remark that the existence of such a division follows from the proof technique by Segal-Halevi and Suksompong~\cite{Segal2021} and the classical existence result of a birthday division~Woodall~\cite{woodall1980dividing} and Asada et al.~\cite{Asada2018}. However, the algorithmic result does not directly follow because the proof in~Segal-Halevi and Suksompong reduces the problem to the $n$-agent case (i.e., it invokes the $n-1$ dimensional version of Sperner's lemma).  



\subsection{Preliminaries}\label{subsec:preliminaries} In this section, we introduce the main tools and lemmas used for the proof of Theorems~\ref{thm:two-layers:group-birth},~\ref{thm:two-layers:group-birth:FPTAS}, and~\ref{thm:one-layer:group-birth:FPTAS}.

Consider $n$ agents who have closed, monotone, and hungry preferences. The $n$-th agent is assumed to be the birthday agent. We consider a two-layered cake. Its \emph{top layer} is the layer that is weakly preferred to the other layer, within the ``degenerate'' multi-division with no cuts, by the majority of agents; that is, at least $\lceil{\frac{n}{2}}\rceil$ agents weakly prefer the top layer to the other layer. This latter layer is then referred to as the \emph{bottom layer}.

As in Section~\ref{subsec:encode-chess}, we use a ``configuration space'' to encode the considered multi-divisions, but in a much simpler way. Here, the configuration space is the unit square. A point $(x,y)$ of the unit square encodes a division as in Figure~\ref{fig:EF:three} where $x$ corresponds to the long knife over the two layers and $y$ corresponds to the short knife over the top layer. We denote by $\mathcal{A}(x,y)=(\mathcal{A}_1(x,y),\mathcal{A}_2(x,y),\mathcal{A}_3(x,y))$ the multi-division in Figure~\ref{fig:EF:three} represented by $(x,y) \in [0,1]^2$ where the agents first cut the cake via the short knife position $y$, and then cut the rest via the long knife position $x$. Namely, $\mathcal{A}_1(x,y)$ consists the $[0,y]$ segment of the top layer, $\mathcal{A}_2(x,y)$ consists of the $[\max\{x,y\},1]$ of the top layer and the $[0,x]$ segment of the bottom layer, and $\mathcal{A}_3(x,y)$ contains the remaining pieces. The multi-division $\mathcal{A}(x,y)$ is feasible and contiguous.

\begin{figure}[htb]
\centering
    \begin{tikzpicture}[scale=0.6, transform shape]
    
    \begin{scope}[yshift=0cm]
        \draw[thick,fill=red!30] (0,0) rectangle (4,1);
        \draw[thick,fill=blue!30] (0,-1) rectangle (3,0); 
        \draw[thick,fill=blue!30] (4,0) rectangle (10,1); 
        \draw[thick] (3,-1) rectangle (10,0);

        \node at (2.0,0.5) {$1$};
        \node at (1.5,-0.5) {$2$};
        \node at (6.0,0.5) {$2$};
        \node at (6.0,-0.5) {$3$};

        \node at (3.0,1.3) {$x$};
        \node at (4.0,1.3) {$y$};
    \end{scope}
    
    \begin{scope}[yshift=3cm]
        \draw[thick,fill=red!30] (0,0) rectangle (1,1);
        \draw[thick,fill=blue!30] (0,-1) rectangle (3,0); 
        \draw[thick] (1,0) rectangle (3,1);
        \draw[thick,fill=blue!30] (3,0) rectangle (10,1); 
        \draw[thick] (3,-1) rectangle (10,0);

        \node at (0.5,0.5) {$1$};
        \node at (1.5,-0.5) {$2$};
        \node at (6.0,0.5) {$2$};
        \node at (2.0,0.5) {$3$};
        \node at (6.0,-0.5) {$3$};

        \node at (3.0,1.3) {$x$};
        \node at (1.0,1.3) {$y$};
    \end{scope}

    \end{tikzpicture}
\caption{Three-agent multi-divisions $\mathcal{A}(x,y)$. Note that the short knife $y$ is prioritized over the long knife $x$.}
\label{fig:EF:three}
\end{figure}
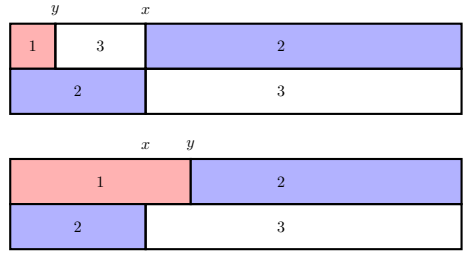

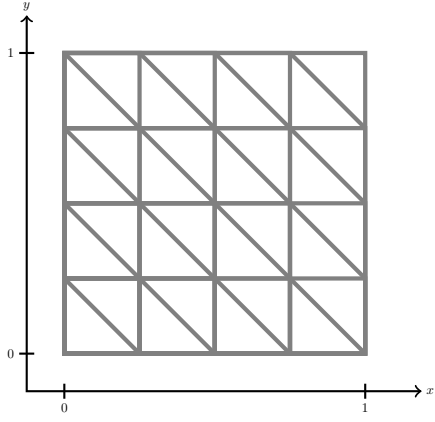
\begin{figure}
\centering
    \begin{tikzpicture}[scale=0.5, transform shape]

    \draw [<->,thick] (-1,9) node (yaxis) [above] {$y$}|- (9.5,-1) node (xaxis) [right] {$x$};
    \draw[thick] (-1.2,0) node [left] {$0$} -- (-0.8,0);
    \draw[thick] (-1.2,8) node [left] {$1$} -- (-0.8,8);

    \draw[thick] (0,-1.2) node [below] {$0$} -- (0,-0.8);
    \draw[thick] (8,-1.2) node [below] {$1$} -- (8,-0.8);
    
    \foreach \x in {0,2,4,6, 8}{
        \foreach \y in {0,2,4,6, 8}{
            \draw[color=gray,ultra thick] (0,0) rectangle (\x,\y);
        }
    }
    
    \foreach \x in {0, 2,4,6}{
           \foreach \y in {8,6,4,2}{
                \draw[color=gray,ultra thick] (\x, \y) -- (\x + 2, \y - 2);
            }
    }
    
\end{tikzpicture}
\caption{Triangulation of the unit square.}
\label{fig:unitsquare}
\end{figure}%

We divide the unit square into small squares. 
Let $N$ be an integer. 
Partition the unit square $[0,1]^2$ into $N^2$ smaller squares of side length $\frac 1N$; we call each of these squares a \emph{basic square}. 
We call each vertical line $\bigL(x)=\{(x,y) \mid y \in [0,1] \}$ of Figure \ref{fig:unitsquare} with $x=\frac{k}{N}$ for $k \in \{0,1,\ldots,N\}$ a \emph{basic line}. 

We construct a continuous map ${\bar f}$ from the unit square to the two-dimensional standard simplex, which represents the average preferences of the non-birthday agents. To this end, consider a particular triangulation $\T$ of the unit square; for each basic square, one new edge is added from the top left corner vertex to the bottom right corner vertex. See Figure \ref{fig:unitsquare}. We denote the vertices of the triangulation by $V(\T)$. 
We define $f^{(i)} \colon V(\T) \rightarrow \Delta^2$ which encodes the preferences of agent $i$. Namely, we ask each agent $i$ the index $j$ of the layered piece $\calA_j(\vv)$ she prefers in $\calA(\vv)$ and set $f^{(i)}(\vv)$ to be $\e_{j}$. In that case, we says that agent $i$ \emph{colors} the vertex $\vv$ with color $j$. We consider the following tie-breaking rule: a zero-length piece is never chosen and for each $\vv=(x,y)$, in case of a tie with piece $1$, each agent colors $\vv$ with $1$; in case of a tie between pieces $2$ and $3$ only, each agent colors $\vv$ with $2$ if $x \le y$ and with $3$ otherwise.

Similarly to the proof of Theorem~\ref{thm:group-birth}, we apply Gale's technique~\cite{gale1984equilibrium} to our problem. For each non-birthday agent $i \in[n-1]$, we extend the map $f^{(i)}$ affinely on each simplex of $\T$. Denote by ${\bar f}^{(i)}$ the affine extension of $f^{(i)}$.
We then aggregate these approximate preferences among $n-1$ non-birthday agents as follows:  define ${\bar f} \colon [0,1]^2 \rightarrow \Delta^2$ by 
\[
{\bar f}(x,y)=\frac{1}{n-1} \sum^{n-1}_{i=1}{\bar f}^{(i)}(x,y) \, ,
\]
for each $(x,y) \in [0,1]^2$. 

The crucial step in the proofs of Theorems~\ref{thm:two-layers:group-birth} and~\ref{thm:two-layers:group-birth:FPTAS} consists in establishing that $\omega = (\frac{1}{3},\frac{1}{3},\frac{1}{3})$ lies in the image of ${\bar f}$, and similarly for Theorem~\ref{thm:one-layer:group-birth:FPTAS}. We provide a brief description on how this is done, the full proof being given hereafter. Because the preferences are monotone, it is not difficult to see that the image of every basic line by ${\bar f}$ has its endpoint $y=0$ below the line 
$B=\{\, (z_1,z_2,z_3)  \mid z_1 = \frac 1 3, (z_1,z_2,z_3)  \in \Delta^2 \,\}$ and its endpoint $y=1$ above this line. Here, {\em above} means $z_1 \ge \frac 13$ and {\em below} means $z_1 \le \frac 13$. This implies in particular that the image of every basic line by ${\bar f}$ intersects with $B$ (by continuity). 
Further, we show that the images ${\bar f}(\bigL(0))$ and ${\bar f}(\bigL(1))$ of the sides enjoy symmetry with respect to the line $z_2 = z_3$. The existence of two adjacent vertical lines $\bigL(\frac{k}{N})$ and $\bigL(\frac{k+1}{N})$ enclosing a vertical strip whose image by ${\bar f}$ hits $\omega$ then follows. To identify these lines, both mathematically and algorithmically, we need to determine the relative position of the image of any basic line with respect to $\omega$. This is done with the help of its intersection with $L$ or $R$, where $L$ and $R$ are obtained by splitting $B$ by its middle point $\omega$, i.e.,  $L=\{\, (z_1,z_2,z_3)  \mid z_2  \le  \frac{1}{3}, (z_1,z_2,z_3)  \in B \,\}$ and $R=\{\, (z_1,z_2,z_3)  \mid z_2  \ge  \frac{1}{3}, (z_1,z_2,z_3)  \in B \,\}$.

We first establish a useful property of ${\bar f}$. In standard one-layered cake cutting into three pieces with monotone preferences, agents' preferences are monotone when moving one knife from left to right while fixing the other. In two-layered cake cutting, we have partial monotonicity, that is, agents' preferences exhibit this property when fixing the long knife. 
A map $g \colon [0,1]^2 \rightarrow \Delta^2$ is \emph{vertically monotone} with respect to $x \in [0,1]$ if the first coordinate of $g$ is nondecreasing when fixing $x$, i.e., $g_1(x,y') \ge g_1(x,y)$ for every pair $y,y' \in [0,1]$ with $y' \ge y$. 

\begin{lemma}\label{lem:monotone}
The map ${\bar f}$ is vertically monotone with respect to every $x=\frac k N$ for $k \in \{0,1,\ldots,N\}$. %
\end{lemma}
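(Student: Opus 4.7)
The plan is to establish vertical monotonicity for each individual map $\bar{f}^{(i)}$ with $i\in[n-1]$ at every basic line $\bigL(k/N)$; since $\bar{f}$ is the average of these maps, the conclusion will follow by linearity. So fix an agent $i\in[n-1]$, an index $k\in\{0,1,\ldots,N\}$, and set $x=k/N$.

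First, I would observe that, by the chosen triangulation $\T$, the basic line $\bigL(x)$ decomposes into the edges $[(x,j/N),(x,(j+1)/N)]$ of $\T$ for $j\in\{0,1,\ldots,N-1\}$: none of the added diagonals cross $\bigL(x)$. Hence the restriction of $\bar{f}^{(i)}$ to $\bigL(x)$ is precisely the piecewise-linear interpolation of the vertex values $f^{(i)}(x,j/N)$. To establish vertical monotonicity of $\bar{f}^{(i)}_1$ on $\bigL(x)$, it is therefore enough to show that the sequence $\bigl(f^{(i)}_1(x,j/N)\bigr)_{j=0}^N$ is non-decreasing. Since $f^{(i)}_1$ only takes values in $\{0,1\}$, this amounts to the assertion that once the sequence hits $1$, it stays at $1$.

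The key step combines the tie-breaking rule with monotone preferences and the geometry of the encoding $\calA(x,y)$. By the tie-breaking rule, $f^{(i)}(x,y)=\e_1$ exactly when $\calA_1(x,y)\in c_i(\calA(x,y))$. As $y$ increases while $x$ is fixed, piece~$1$ grows (its top part extends from $[0,y]$ to $[0,y']$), while pieces~$2$ and~$3$ both weakly shrink: piece~$2$ is unchanged on $y\in[0,x]$ and its top part shrinks once $y>x$, and piece~$3$'s top part shrinks on $y\in[0,x]$ and becomes empty once $y\ge x$, leaving only the unchanged bottom segment $[x,1]$. Thus $\calA_1(x,y)\subseteq\calA_1(x,y')$ and $\calA_j(x,y)\supseteq\calA_j(x,y')$ for $j\in\{2,3\}$. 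Applying the monotone preference assumption with the distinguished index $j=1$ yields $\calA_1(x,y)\in c_i(\calA(x,y))\Rightarrow \calA_1(x,y')\in c_i(\calA(x,y'))$, i.e.\ $f^{(i)}_1(x,j/N)=1$ implies $f^{(i)}_1(x,j'/N)=1$ for all $j'\ge j$, as required.

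The main obstacle I anticipate is simply a careful bookkeeping of pieces~$2$ and~$3$ across the transition $y=x$, where their set-theoretic description changes; but the description above shows that both pieces can only lose mass, both below and above the threshold, so the monotone preference hypothesis applies uniformly. Once the sequence of first coordinates is known to be non-decreasing, its piecewise-linear interpolation along $\bigL(x)$ is non-decreasing, and taking the average over $i\in[n-1]$ preserves this property, which establishes the lemma.
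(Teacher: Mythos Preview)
Your proof is correct and follows essentially the same approach as the paper's: reduce to each $\bar f^{(i)}$, use the tie-breaking rule and monotonicity of preferences to show that once a vertex on $\bigL(x)$ is colored~$1$ all higher vertices are too, and conclude by affine extension and averaging. Your version simply spells out in more detail the piecewise-linear structure along $\bigL(x)$ and the case analysis for how pieces~$2$ and~$3$ shrink across the threshold $y=x$, which the paper leaves implicit.
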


\begin{proof}
Fix $x=\frac{k}{N}$ for $k \in \{0,1,\ldots,N\}$. Consider any agent $i \in [n-1]$. 
On the basic line $\bigL(x)$, if a vertex is colored with $1$, all the vertices with bigger $y$ are colored with $1$ as well. 
This property is due to the tie-breaking rule and the monotonicity of preferences: 
as $y$ increases from $0$ to $1$, the piece $\mathcal{A}_1(x,y)$ gets bigger, while the second piece $\mathcal{A}_2(x,y)$ and the third piece $\mathcal{A}_3(x,y)$ do not grow. 
Thus, since each ${\bar f}^{(i)}$ is an affine extension of ${f}^{(i)}$, the function ${\bar f}^{(i)}$ is vertically monotone with respect to $x$. 
Further, since all the maps ${\bar f}^{(i)}$ are vertically monotone with respect to $x$, their average ${\bar f}$ also satisfies the same property.~
\end{proof}

Note that in general, this property may not hold for other $x \in [0,1]$.

The following lemma states that the images of $\bigL(0)$ and $\bigL(1)$ are symmetric with respect to the line $\{\, (z_1,z_2,z_3) \in \Delta^2 \mid z_2=z_3 \,\}$.

\begin{lemma}\label{lem:sym}
For every $y\in[0,1]$, the following holds:
\[
\bar f_1(0,y) = \bar f_1(1,y), \quad \bar f_2(0,y) = \bar f_3(1,y), \quad \bar f_3(0,y) = \bar f_2(1,y) \, .
\]
\end{lemma}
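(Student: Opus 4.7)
The plan is to exploit the fact that the two multi-divisions $\mathcal{A}(0,y)$ and $\mathcal{A}(1,y)$ consist of the same three pieces, just with the indices $2$ and $3$ interchanged. Concretely, from the description of $\mathcal{A}(x,y)$:
\begin{itemize}
\item at $x=0$, piece $1$ is $[0,y]$ of the top, piece $2$ is $[y,1]$ of the top, and piece $3$ is the entire bottom layer;
\item at $x=1$, piece $1$ is $[0,y]$ of the top, piece $2$ is the entire bottom layer, and piece $3$ is $[y,1]$ of the top.
\end{itemize}
So if $\sigma$ denotes the transposition exchanging the roles of $2$ and $3$, the tuple $\mathcal{A}(1,y)$ is obtained from $\mathcal{A}(0,y)$ by applying $\sigma$ to the indices.

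I would then propagate this symmetry through the coloring procedure to each map $f^{(i)}$ restricted to the vertices on the two vertical boundary lines, and finally to $\bar f$ by affine extension and averaging. Since choice functions are invariant under permutations of the tuple's entries, an agent $i$ has the same set of preferred \emph{pieces} at $(0,y)$ and $(1,y)$, and this set corresponds, under the two labellings, to indices related by $\sigma$. Hence at a vertex $\vv = (0, k/N)$ of $\T$, her assigned color $j$ translates at $(1, k/N)$ into the color $\sigma(j)$, so that $f^{(i)}(1, k/N) = \sigma \cdot f^{(i)}(0, k/N)$, where $\sigma$ acts on $\Delta^2$ by swapping the second and third barycentric coordinates.

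The main subtlety, which I expect to be the only real obstacle, is checking that the tie-breaking rule respects the intended symmetry, because the rule depends on the sign of $x - y$ and is therefore not a priori $\sigma$-symmetric between $x=0$ and $x=1$. I would go case by case on the preferred set $S \subseteq \{P_1, P_2, P_3\}$ produced by $c_i$: the cases $S = \{P_1\}$, $\{P_1,P_2\}$, $\{P_1,P_3\}$, or $\{P_1,P_2,P_3\}$ all give color $1$ at both points; the cases $\{P_2\}$ and $\{P_3\}$ commute with $\sigma$ by definition; the delicate case is $S = \{P_2, P_3\}$, where at $(0,y)$ one has $x \le y$ (color $2$) and at $(1,y)$ with $y<1$ one has $x > y$ (color $3$), and one needs to check the degenerate endpoint $y=1$: there piece~$2$ of $\mathcal{A}(0,1)$ and piece~$3$ of $\mathcal{A}(1,1)$ are of zero length, hence excluded by the hungry assumption combined with the tie-breaking rule, which removes the only possible asymmetry.

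Once $f^{(i)}(1, k/N) = \sigma \cdot f^{(i)}(0, k/N)$ holds at every vertex of $\T$ lying on the two vertical sides of the unit square, the affine extension inside each boundary edge of $\T$ preserves this identity, so $\bar f^{(i)}(1, y) = \sigma \cdot \bar f^{(i)}(0, y)$ for all $y \in [0,1]$. Averaging over $i \in [n-1]$ gives $\bar f(1,y) = \sigma \cdot \bar f(0,y)$, which is exactly the three coordinate equations stated in the lemma.
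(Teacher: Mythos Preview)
Your proposal is correct and follows essentially the same approach as the paper: both exploit that $\mathcal{A}(0,y)$ and $\mathcal{A}(1,y)$ coincide up to swapping the labels $2$ and $3$, verify that the tie-breaking rule respects this swap (with the case $y=1$ handled separately), and then pass to $\bar f$ by affine extension and averaging. Your treatment is in fact more explicit than the paper's—particularly your case analysis of the preferred set and your observation that at $y=1$ the problematic tie $\{P_2,P_3\}$ is ruled out by the hungry assumption (since piece~$2$ at $(0,1)$ and piece~$3$ at $(1,1)$ have zero length), whereas the paper's parenthetical justification for $y=1$ is terser.
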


\begin{proof} Since the partitions induced by $(0,y)$ and $(1,y)$ are symmetric, we have: 
\begin{itemize}
\item if one of $(0,y)$ and $(1,y)$ is colored with $1$ by agent $i$ ({i.e., $f^{(i)}(0,y)=\e_1$ or $f^{(i)}(1,y)=\e_1$}), the other is also colored with $1$ by $i$. 
\item if one of $(0,y)$ and $(1,y)$ is colored with $2$ by agent $i$ ({i.e., $f^{(i)}(0,y)=\e_2$ or $f^{(i)}(1,y)=\e_2$}), the other is colored with $3$ by $i$. 
\end{itemize}
(When $0 \le y < 1$, the above symmetry holds because of the tie-breaking rule; in case of ties between $2$ and $3$, $2$ is chosen at $(0,y)$ since $y \ge 0$ while $3$ is chosen at $(1,y)$ since $y <1$. When $y=1$, it holds because $(0,1)$ receives color $1$ and $(1,1)$ receives color $1$.)
~
\end{proof}

To avoid unnecessary case distinctions, we consider a slightly perturbed version of $B$, which we denote by $B^{\delta}$ and which is defined by $B^{\delta}=\{\, (z_1,z_2,z_3)  \mid z_1 = \frac 1 3-\delta, (z_1,z_2,z_3)  \in \Delta^2 \,\}$.
We extend the previous notions attached to $B$. A point $(z_1,z_2,z_3)$ is \emph{above} (resp. \emph{below}) $B^{\delta}$ if $z_1 \ge \frac 1 3 - \delta$ (resp. $z_1 \le \frac 1 3 - \delta$). The point $\omega^{\delta}$ is the middle point of $B^{\delta}$ (with coordinates $(\frac 1 3-\delta,\frac 1 3+\frac \delta 2,\frac 1 3+ \frac \delta 2)$ thus), and $L^{\delta}$ and $R^{\delta}$ are defined by splitting $B^{\delta}$ by $\omega^{\delta}$, i.e.,  $L^{\delta}=\{\, (z_1,z_2,z_3)  \mid z_2  \le  \frac{1}{3} + \frac{\delta} 2, (z_1,z_2,z_3)  \in B^{\delta} \,\}$ and $R^{\delta}=\{\, (z_1,z_2,z_3)  \mid z_2  \ge  \frac{1}{3}+\frac{\delta} 2, (z_1,z_2,z_3)  \in B^{\delta} \,\}$.

Now, set $\delta \coloneqq \frac 1 {6n^2}$. The next two lemmas states that with this small $\delta$, while we get rid of some ``degeneracy'' (images of vertices located on $B$), we can safely replace $\omega$ by $\omega^{\delta}$ in the task of finding a triangle of $\T$ containing $\omega$ in its image by $\bar f$.

\begin{lemma}\label{lem:perturb-rel}
No vertex of $\T$ is mapped by $\bar f$ on $B^{\delta}$. Moreover, if a vertex is mapped by $\bar f$ above (resp. below) $B^{\delta}$, it is mapped above (resp. below) $B$.
\end{lemma}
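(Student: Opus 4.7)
The plan is to exploit the elementary structural observation that $\bar f$, evaluated at any vertex of $\T$, is an average of the unit vectors $\e_1,\e_2,\e_3$. Indeed, the tie-breaking convention of Section~\ref{subsec:preliminaries} guarantees that for every non-birthday agent $i\in[n-1]$ and every vertex $\vv$ of $\T$, the value $f^{(i)}(\vv)$ is one of $\e_1,\e_2,\e_3$ (a zero-length piece is never chosen, and the tie-breaking rule picks a definite one of the three colors). Letting $k$ denote the number of such agents coloring $\vv$ with $1$, we have
$$\bar f_1(\vv)=\frac{k}{n-1}\in S\coloneqq\left\{\frac{j}{n-1}\colon j=0,1,\ldots,n-1\right\}.$$
So the entire range of $\bar f_1$ over $V(\T)$ lies in the sparse set $S$, and the lemma reduces to controlling how $S$ sits relative to the thresholds $1/3$ and $1/3-\delta$.

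The only arithmetic ingredient needed is that every $s\in S$ with $s\ne 1/3$ satisfies $|s-1/3|\ge 1/(3(n-1))$: writing $s=k/(n-1)$, the numerator of $s-1/3$ over the common denominator $3(n-1)$ is the nonzero integer $3k-(n-1)$. Combined with the choice $\delta=1/(6n^2)$, one checks $\delta<1/(3(n-1))$, which is equivalent to $2n^2>n-1$ and hence holds for every $n\ge 1$. This single quantitative fact is all that the lemma ultimately relies on.

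From here both claims follow directly. For Part~1, the element $1/3\in S$ is at distance exactly $\delta>0$ from $1/3-\delta$, and every other element of $S$ is at distance at least $1/(3(n-1))-\delta>0$ from $1/3-\delta$; thus $1/3-\delta\notin S$, so no vertex is mapped onto $B^\delta$. For Part~2, suppose $\vv$ is above $B^\delta$, i.e.\ $\bar f_1(\vv)\ge 1/3-\delta$. If we had $\bar f_1(\vv)<1/3$, then $\bar f_1(\vv)\le 1/3-1/(3(n-1))<1/3-\delta$, contradicting the hypothesis; hence $\bar f_1(\vv)\ge 1/3$ and $\vv$ is above $B$. The ``below'' case is symmetric, with Part~1 ruling out the borderline equality.

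There is essentially no obstacle: the whole argument is the verification that the perturbation parameter $\delta=1/(6n^2)$ is strictly smaller than the minimum positive gap between the finite set $S$ and $1/3$, namely $1/(3(n-1))$.
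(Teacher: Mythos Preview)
Your proof is correct and follows essentially the same approach as the paper: both arguments observe that $\bar f_1(\vv)$ lies in the discrete set $\{k/(n-1):k=0,\ldots,n-1\}$ and then verify that the perturbation $\delta=1/(6n^2)$ is small enough that $1/3-\delta$ avoids this set and that the relative order with respect to $1/3$ is preserved. Your presentation via the gap bound $|s-1/3|\ge 1/(3(n-1))$ is arguably a bit cleaner than the paper's manipulation; one small remark is that the ``below'' case is actually immediate (since $\bar f_1(\vv)\le 1/3-\delta<1/3$) and needs neither symmetry nor Part~1.
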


\begin{proof}
Consider a vertex $\vv$ of $\T$. By definition of $\bar f$, there is $\ell_1 \in \{0,1,\ldots,n-1\}$ such that $\bar f_1(\vv) = \frac {\ell_1} {n-1}$. The quantities $\bar f_1(\vv) - (\frac 1 3 - \delta)$ and $6\ell_1 - \big(2(n-1)-\frac {n-1} {n^2}\big)$ are equal up to a multiplication by $6(n-1)$ and are thus either both positive, or both zero, or both negative. Since $n \ge 2$, the quantity $\frac {n-1} {n^2}$ is not an integer, which implies that $\bar f_1(\vv) - (\frac 1 3 - \delta)$ is not $0$. This already proves the first part of the statement.

To prove the second part, assume first that $\bar f(\vv)$ is above $B^{\delta}$, namely that $\bar f_1(\vv) \ge \frac 1 3 - \delta$. The same computation as above shows then that $6\ell_1 \ge 2(n-1)-\frac {n-1} {n^2}$. Since $n-1<n^2$, we have $6\ell_1 \ge 2(n-1)$. This latter inequality translates into $\bar f_1(\vv) \ge \frac 1 3$. The case $\bar f(\vv)$ below $B^{\delta}$ is immediate: the inequality $\bar f_1(\vv) \le \frac 1 3 - \delta$ implies $\bar f_1(\vv) \le \frac 1 3$.
\end{proof}

\begin{lemma}\label{lem:perturb-tri}
For every triangle $\tau$ of $\T$, if $\omega^{\delta}$ belongs to $\bar f(\tau)$, then so does $\omega$.
\end{lemma}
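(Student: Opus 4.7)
The plan is to parametrize the affine line through $\omega^{\delta}$ and $\omega$ and then exploit an arithmetic gap that forbids the exit points of this line from $\bar f(\tau)$ to lie strictly between $1/3-\delta$ and $1/3$. First I will apply Lemma~\ref{lem:perturb-rel} to record that each vertex $\bar f(\vv_k)$ of $\bar f(\tau)$ satisfies either $\bar f_1(\vv_k)\ge 1/3$ or $\bar f_1(\vv_k)<1/3-\delta$; since $\omega^{\delta}\in\bar f(\tau)$ has first coordinate $1/3-\delta$, this forces the triangle to have vertices both above $B$ and strictly below $B^{\delta}$. Parametrising the line passing through $\omega^{\delta}$ and $\omega$ by
\[
\ell(c)\;\coloneqq\;\Bigl(c,\;\tfrac{1-c}{2},\;\tfrac{1-c}{2}\Bigr),\qquad c\in\R,
\]
we have $\ell(1/3-\delta)=\omega^{\delta}$ and $\ell(1/3)=\omega$. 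By convexity of $\bar f(\tau)$, the set $\{c\in\R\colon \ell(c)\in\bar f(\tau)\}$ is a closed interval $[c_{\min},c_{\max}]$ containing $1/3-\delta$, so it is enough to prove $c_{\max}\ge 1/3$.

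The heart of the argument is an integrality gap. Writing $\bar f(\vv_k)=(\alpha_k,\beta_k,\gamma_k)/(n-1)$ with nonnegative integers summing to $n-1$, a direct computation of the parameter $c^{*}$ at which $\ell$ meets the affine line supporting an edge $e_{ij}$ of $\bar f(\tau)$ gives
\[
c^{*}\;=\;\frac{(\alpha_i-\alpha_j)(n-1)-2(\alpha_i\beta_j-\alpha_j\beta_i)}{\bigl[(\alpha_i-\alpha_j)+2(\beta_i-\beta_j)\bigr](n-1)}.
\]
Rearranging yields $c^{*}-1/3=2M/(3D)$ for integers $M,D$ with $|D|\le 3(n-1)^{2}$; hence either $c^{*}=1/3$ (when $M=0$) or $|c^{*}-1/3|\ge 2/(9(n-1)^{2})$. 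The elementary inequality $12n^{2}>9(n-1)^{2}$, valid for all $n\ge 1$, then gives $2/(9(n-1)^{2})>1/(6n^{2})=\delta$, so $c^{*}$ cannot land in $(1/3-\delta,\,1/3)$. If $c_{\max}$ happens to be attained at a vertex $\bar f(\vv_k)$, the analogous (and simpler) gap $|\alpha_k/(n-1)-1/3|\ge 1/(3(n-1))>\delta$ applies.

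Combining these ingredients, $c_{\max}\ge 1/3-\delta$ together with $c_{\max}\notin(1/3-\delta,\,1/3)$ forces $c_{\max}\ge 1/3$, which is exactly the conclusion $\omega=\ell(1/3)\in\bar f(\tau)$. The main technical subtlety I expect is the degenerate situation where the denominator of the expression for $c^{*}$ vanishes, meaning $\ell$ is parallel to some edge $e_{ij}$: if $\ell$ and the line supporting $e_{ij}$ merely coincide, one verifies from the range of the first coordinate on $e_{ij}$ that $\omega^{\delta}$ and $\omega$ both belong to the edge segment, while if they are parallel but distinct the edge contributes nothing to $c_{\max}$. The case of a degenerate triangle $\bar f(\tau)$ with collinear vertices is handled by the same arithmetic argument applied inside its supporting affine span.
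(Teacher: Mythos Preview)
Your proof is correct and hinges on the same arithmetic gap as the paper, though the packaging differs. The paper works edge by edge: for each pair $\vv_1,\vv_2$ it looks at the determinant $\varphi(\z)$ whose sign records which side of the line through $\bar f(\vv_1),\bar f(\vv_2)$ contains $\z$, and shows that $\varphi(\omega)$ and $\varphi(\omega^{\delta})$ cannot have opposite signs because $3(n-1)^2\varphi(\omega)$ is an integer bounded by $3n^2$ in absolute value while $\varphi(\omega)-\varphi(\omega^{\delta})$ is of order $\delta$. You instead parametrize the line $\ell$ through $\omega^{\delta}$ and $\omega$ and show the exit parameter $c_{\max}$ from $\bar f(\tau)$ cannot lie in $(1/3-\delta,1/3)$. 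Since $\varphi(\ell(c))$ is affine in $c$ with its zero exactly at your $c^*$, the two computations are equivalent; your formula $c^*-1/3=2M/(3D)$ is the paper's integrality statement in different coordinates.

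What the paper's determinant framing buys is a cleaner handling of degeneracies: it splits only into ``$\omega^{\delta}$ lies in the relative interior of a nondegenerate edge image'' versus ``$\omega^{\delta}$ lies in the interior of a nondegenerate triangle image,'' and Lemma~\ref{lem:perturb-rel} then finishes each case directly. Your approach requires several side cases (line coincides with an edge, line parallel to an edge, collinear vertex images), which you treat correctly but tersely. One small tightening worth making explicit: in your final combination you need not merely $c_{\max}\notin(1/3-\delta,1/3)$ but also $c_{\max}\neq 1/3-\delta$; your gap $|c^*-1/3|>\delta$ (and the analogous vertex gap, once you add the missing clause ``or $\alpha_k/(n-1)=1/3$'') does give this, so the conclusion stands.
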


\begin{proof} Consider two vertices $\vv_1$ and $\vv_2$ of $\T$. The sign of the determinant
\[
\varphi(z_1,z_2,z_3) = 
\left|\begin{array}{ccc}
z_1 & z_2 & z_3
\\[0.6ex] \bar f_1(\vv_1) & \bar f_2(\vv_1) & \bar f_3(\vv_1) \\[1ex] \bar f_1(\vv_2) & \bar f_2(\vv_2) & \bar f_3(\vv_2)
\end{array}
\right|
\]
records the relative position of a point $\z=(z_1,z_2,z_3) \in \Delta^2$ with respect to the line going through $\bar f(\vv_1)$ and $\bar f(\vv_2)$. Since $\bar f_j(\vv)$ is of the form $\frac {\ell_j} {n-1}$ with $\ell_j \in \{0,1,\ldots,n-1\}$ for every $j \in \{1,2,3\}$ and every vertex $\vv$ of $\T$, there are three integers $a,b,c$ smaller than $n^2$ in absolute value such that
\[
\varphi(\omega) = \frac 1 {3(n-1)^2} (a+b+c)
\quad\text{and}\quad
\varphi(\omega^{\delta}) = 
\frac 1 {3(n-1)^2}\left((a+b+c)-\frac 1 {4n^2}(2a-b-c)\right) \, .
\] 
Since each of $a$, $b$, and $c$ is smaller than $n^2$ in absolute value, we have $|2a-b-c|<4n^2$. In case $\varphi(\omega^{\delta})=0$, then this prevents $a+b+c$ to be different from $0$ (it is an integer number), and we have $\varphi(\omega)=0$ as well. In case $\varphi(\omega^{\delta})>0$, then this prevents $a+b+c \le -1$ to hold, and we have $\varphi(\omega) \ge 0$. In case $\varphi(\omega^{\delta})<0$, then this prevents $a+b+c \ge 1$ to hold, and we have $\varphi(\omega) \le 0$.

Now, take a triangle $\tau$ of $\T$ such that $\omega^{\delta} \in \bar f(\tau)$. No vertex of $\tau$ is mapped on $\omega^{\delta}$ by $\bar f$: otherwise, this point would be mapped on $B^{\delta}$, which is not possible by Lemma~\ref{lem:perturb-rel}. So, there are two possibilities: either $\tau$ has an edge whose image by $\bar f$ is a nondegenerate segment containing $\omega^{\delta}$ in its relative interior; or the image of $\tau$ by $\bar f$ is a nondegenerate triangle containing $\omega^{\delta}$ in its relative interior. 

Suppose that the first possibility occurs: the point $\omega^{\delta}$ lies in the relative interior of $[\bar f(\vv_1),\bar f(\vv_2)]$ for some vertices $\vv_1$ and $\vv_2$ (with distinct images by $\bar f$). It means $\varphi(\omega^{\delta})=0$. We have noticed that $\varphi(\omega)=0$ holds then, which means that $\omega$ lies on the line supported by the segment $[\bar f(\vv_1),\bar f(\vv_2)]$. Lemma~\ref{lem:perturb-rel} then shows that the relative positions of $\omega^{\delta}$ and $\omega$ with respect to $\bar f(\vv_1)$ and $\bar f(\vv_2)$ are the same, which means that $\omega$ is also in $[\bar f(\vv_1),\bar f(\vv_2)]$ (but not necessarily in its relative interior).

Suppose then that the second possibility occurs: the point $\omega^{\delta}$ lies in the relative interior of $\operatorname{conv}(\bar f(\vv_1),\bar f(\vv_2),\bar f(\vv_3))$ for some vertices $\vv_1$, $\vv_2$, and $\vv_3$ (with pairwise distinct images by $\bar f$). For each pair of vertices $\vv_i,\vv_j$ with $i\neq j$, the determinant $\varphi(\omega^{\delta})$ is either positive or negative (but not $0$); we have noticed that in the first case $\varphi(\omega) \ge 0$ holds and that in the second case $\varphi(\omega) \le 0$ holds. This means that, for every edge $e$ of $\tau$, the points $\omega$ and $\omega^{\delta}$ have the same relative position with respect to the line containing $\bar f(e)$. This implies that $\omega$ lies in the image of $\tau$ by $\bar f$ (but not necessarily in its relative interior).
\end{proof}

The next lemma will be used to ensure the existence of adjacent basic lines forming a strip whose image by $\bar f$ hits $\omega^{\delta}$. Uniqueness will be crucial for the algorithmic determination of the relative position of the image of a basic line with respect to $\omega^{\delta}$. Note that this relative position is completely determined by whether $\bar f(x,y)$ lies on $L^{\delta}$ or on $R^{\delta}$.

\begin{lemma}\label{lem:usquare:odd}
For every $x=\frac k N$ with $k \in \{0,1,\ldots,N\}$, there exists $y \in [0,1]$ such that $\bar f(x,y)$ lies on $B^{\delta}$, and this $y$ is unique.
\end{lemma}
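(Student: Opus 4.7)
The plan is to analyze $\bar f_1$ along the basic line $\bigL(x)$ and reduce the claim to a one-dimensional problem. I would first establish existence via the intermediate value theorem by showing that $\bar f_1(x,0) < \tfrac{1}{3} - \delta$ and $\bar f_1(x,1) > \tfrac{1}{3} - \delta$. At $y = 0$, the piece $\mathcal{A}_1(x,0)$ has zero length, so by the tie-breaking rule (a zero-length piece is never chosen) together with hungry preferences, $f^{(i)}(x,0) \neq \e_1$ for every agent $i$, yielding $\bar f_1(x,0) = 0$. At $y = 1$, the piece $\mathcal{A}_1(x,1)$ is the entire top layer while $\mathcal{A}_2(x,1)$ and $\mathcal{A}_3(x,1)$ are subsets of the bottom layer; by the definition of the top layer combined with monotonicity, every agent who weakly prefers the top layer to the bottom layer colors $(x,1)$ with $1$, so at least $\lceil n/2 \rceil - 1$ of the $n-1$ non-birthday agents do, giving $\bar f_1(x,1) \ge \tfrac{\lceil n/2 \rceil - 1}{n-1} \ge \tfrac{1}{3}$ for every $n \ge 3$. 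Continuity of $\bar f$ then furnishes at least one $y \in (0,1)$ with $\bar f_1(x,y) = \tfrac{1}{3} - \delta$.

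For uniqueness, I would argue by contradiction. Suppose $\bar f_1(x,y_1) = \bar f_1(x,y_2) = \tfrac{1}{3} - \delta$ with $y_1 < y_2$. Lemma~\ref{lem:monotone} says $y \mapsto \bar f_1(x,y)$ is nondecreasing, so this function must be constantly equal to $\tfrac{1}{3} - \delta$ throughout $[y_1,y_2]$. Because the basic line $\bigL(x)$ is traversed by $\T$ as a sequence of edges with endpoints at $y = k/N$ for $k \in \{0,1,\ldots,N\}$, the map $\bar f_1(x,\cdot)$ is affine on each segment $[k/N,(k+1)/N]$. If $[y_1,y_2]$ contains an interior breakpoint $k/N$, then the vertex $(x,k/N)$ is mapped by $\bar f$ to a point of $B^{\delta}$, contradicting Lemma~\ref{lem:perturb-rel}. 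Otherwise $[y_1,y_2]$ lies inside a single segment $[k/N,(k+1)/N]$; on this segment $\bar f_1(x,\cdot)$ is affine and constant on a subinterval, so it is constantly equal to $\tfrac{1}{3} - \delta$ on the whole segment, in particular at the endpoint $(x,k/N)$, again contradicting Lemma~\ref{lem:perturb-rel}.

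The most delicate step is the lower bound $\bar f_1(x,1) > \tfrac{1}{3} - \delta$, where the top-layer convention interacts with monotonicity to produce a nontrivial majority coloring. The uniqueness part is then a routine consequence of the piecewise-affine structure of $\bar f$ along a basic line, combined with the careful perturbation $\delta = \tfrac{1}{6n^2}$ introduced precisely so that the possible vertex values of $\bar f_1$, which are rational numbers with denominator $n-1$, cannot coincide with $\tfrac{1}{3} - \delta$.
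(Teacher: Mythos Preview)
Your proof is correct and follows essentially the same approach as the paper: the boundary values $\bar f_1(x,0)=0$ and $\bar f_1(x,1)>\tfrac13-\delta$ are obtained from the tie-breaking rule and the top-layer majority plus monotonicity, and uniqueness comes from Lemma~\ref{lem:monotone} together with the piecewise-affine structure and Lemma~\ref{lem:perturb-rel}. Your bound $\bar f_1(x,1)\ge\tfrac{\lceil n/2\rceil-1}{n-1}$ is in fact slightly more careful than the paper's (it accounts for the birthday agent possibly being among the majority), and your by-contradiction argument for uniqueness is a clean repackaging of the paper's direct threshold argument.
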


\begin{proof}
We first establish that ${\bar f}(x,0)$ lies ``below'' $B^{\delta}$ and that ${\bar f}(x,1)$ lies ``above'' $B^{\delta}$. To see that ${\bar f}(x,0)$ lies ``below'' $B^{\delta}$, i.e., that ${\bar f}_1(x,0) \le \frac 1 3 - \delta$, note that the piece $\mathcal{A}_1(x,0)$ cannot be strictly preferred on the boundary of $y=0$ since it is of zero-length. Thus, by the tie-breaking rule, we have $f^{(i)}(x,0)=\e_2$ or $f^{(i)}(x,0)=\e_3$ for each agent $i \in [n-1]$, meaning that ${\bar f}_1(x,0)=\frac{1}{n-1}\sum^{n-1}_{i=1}f^{(i)}_1(x,0)=0$.

We show now that ${\bar f}(x,1)$ lies ``above'' $B^{\delta}$, i.e., that ${\bar f}_1(x,1) \ge \frac 1 3 - \delta$. Let $X$ be the set of agents whose preferred piece at vertex $(0,1)$ is $1$, i.e., $X=\{\, i \in [n-1] \mid f^{(i)}(0,1)=\e_1 \,\}$. Observe that $|X| \ge \lceil{\frac{n}{2}}\rceil$ by the tie-breaking rule and by the assumption that the majority of agents weakly prefer the top layer to the bottom layer.
By monotonicity of preferences, every agent in $X$ answers $\mathcal{A}_1(x,1)$ as a preferred piece. Hence,
\[
{\bar f}_1(x,1) = \frac{1}{n-1}\sum_{i \in X}f^{(i)}_1(x,1) = \frac{|X|}{n-1} \ge \frac{\lceil{\frac{n}{2}}\rceil}{n-1} > \frac{1}{3} - \delta \, .
\]

By Lemmas~\ref{lem:monotone} and~\ref{lem:perturb-rel}, there is a $k' \in \{0,1,\ldots, N-1\}$ such that $\bar f_1(x,\frac k N) < \frac 1 3 - \delta$ for every $k \in \{0,1,\ldots,k'\}$ and such that $\bar f_1(x,\frac k N) > \frac 1 3 - \delta$ for every $k \in \{k'+1,k'+2,\ldots,N\}$. There exists thus a single $y$ as in the statement, determined by the intersection of the segment $[\bar f(x,\frac {k'} N),\bar f(x,\frac {k'+1} N)]$ with the line $B^{\delta}$.
\end{proof}

Next, we show the relation between the line $B^{\delta}$ and the image of the boundary of each basic square. It is a well-known fact from topology that a sufficiently regular closed curve, as is the image of the boundary of a basic square by ${\bar f}$, intersects with any generic straight line an even number of times. The following lemma shows that we can be more precise in our case. We provide a direct proof. An alternate proof based on the mentioned fact from topology would also be possible and short.

\begin{lemma}\label{lem:square}
For a basic square $S$, either the image of the boundary by ${\bar f}$ does not intersect with $B^{\delta}$, or there are exactly two points of $S$, located on two distinct edges, whose image by ${\bar f}$ lies on $B^{\delta}$.
\end{lemma}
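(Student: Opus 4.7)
The plan is to read off the relative position of $\bar f$ with respect to $B^\delta$ at the four corners of $S$, using only their signs, and conclude from a simple parity argument plus monotonicity.

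First, by Lemma~\ref{lem:perturb-rel}, each of the four corners of $S$ is mapped by $\bar f$ either strictly above or strictly below $B^\delta$; I label each corner $+$ or $-$ accordingly. The four sides of $S$ are edges of the triangulation $\T$, so $\bar f$ is affine on each of them. Because $B^\delta$ is the intersection of $\Delta^2$ with an affine hyperplane and no endpoint of any side of $S$ is mapped onto $B^\delta$, the affine image of a side of $S$ intersects $B^\delta$ in exactly one point if the two endpoints carry opposite signs, and in no point otherwise.

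Second, I invoke the vertical monotonicity given by Lemma~\ref{lem:monotone}. Both vertical sides of $S$ lie on basic lines $\bigL(x)$ with $x = k/N$, so $\bar f_1$ is nondecreasing from bottom to top along each of them. This rules out the sign pattern in which a vertical side has a $+$ at its bottom corner and a $-$ at its top corner.

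Third, I count intersections. Traveling around the four corners of $S$ in cyclic order (bottom-left, bottom-right, top-right, top-left, and back to bottom-left), the total number of sign changes is even, so it equals $0$, $2$, or $4$. The $0$ case yields the first alternative of the statement. The $2$ case yields two intersection points, necessarily on two distinct sides of $S$ since each side contributes at most one. It remains to exclude the $4$ case: four sign changes would force the corners to alternate cyclically, giving either the pattern $(+,-,+,-)$ or the pattern $(-,+,-,+)$ at (BL, BR, TR, TL). In the first pattern, the left side has $+$ at BL and $-$ at TL, contradicting vertical monotonicity; in the second, the right side has $+$ at BR and $-$ at TR, again contradicting monotonicity. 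Thus only $0$ or $2$ intersections can occur, and the lemma follows. The only mild subtlety is organizing the case analysis so that vertical monotonicity is used exactly where needed, namely to kill the two cyclically alternating sign patterns.
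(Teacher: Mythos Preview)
Your proof is correct and follows essentially the same approach as the paper's: both use Lemma~\ref{lem:perturb-rel} to ensure no corner lands on $B^\delta$, the affineness of $\bar f$ on each side to reduce to a $0$/$1$ count per side, a parity argument to get an even total, and then Lemma~\ref{lem:monotone} to rule out the four-intersection case. The only differences are cosmetic (your sign-labeling and cyclic ordering versus the paper's clockwise labeling of corners).
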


\begin{proof}
Let $\boldsymbol{a},\boldsymbol{b},\boldsymbol{c},\boldsymbol{d}$ be the top left, top right, bottom right, bottom left corner vertices of $S$, respectively. (We follow a clockwise order.) 
Recall that ${\bar f}$ is affine, so the image of $S$'s boundary by ${\bar f}$ is given by the union of $[{\bar f}(\boldsymbol{a}),{\bar f}(\boldsymbol{b})]$, $[{\bar f}(\boldsymbol{b}),{\bar f}(\boldsymbol{c})]$, $[{\bar f}(\boldsymbol{c}),{\bar f}(\boldsymbol{d})]$, and $[{\bar f}(\boldsymbol{d}),{\bar f}(\boldsymbol{a})]$.
Any of these segments intersects with $B^{\delta}$ if and only if one endpoint is above $B^{\delta}$ and the other is below $B^{\delta}$. None of these endpoints are located on $B^\delta$ by the choice of $\delta$ (Lemma~\ref{lem:perturb-rel}). There is thus an even number of edges of $S$ whose image by ${\bar f}$ intersects with $B^{\delta}$ (just because such an edge corresponds to going from one side to the other of $B^\delta$).

Suppose for a contradiction this number is equal to four. It means that each segment is intersecting with $B^\delta$, which implies that ${\bar f}(\a)$ and ${\bar f}(\cc)$ are on one side of $B^\delta$, and ${\bar f}(\boldsymbol{b})$ and ${\bar f}(\boldsymbol{d})$ are on the other side. Thus, if ${\bar f}(\a)$ and ${\bar f}(\cc)$ are above $B^\delta$, then ${\bar f}(\boldsymbol{b})$ and ${\bar f}(\boldsymbol{d})$ are below $B^\delta$; and if ${\bar f}(\a)$ and ${\bar f}(\cc)$ are below $B^\delta$, then ${\bar f}(\boldsymbol{b})$ and ${\bar f}(\boldsymbol{d})$ are above $B^\delta$. In any case, this is a contradiction because of ${\bar f}$ being vertically monotone (see Lemma~\ref{lem:monotone}).  \end{proof}

The following lemma ensures that if the image of the boundary of a basic square intersects with both $L^{\delta}$ and $R^{\delta}$, it contains a point whose image by ${\bar f}$ is $\omega^{\delta}$. 

\begin{lemma}\label{lem:square:LR}
Consider a basic square whose boundary has its image by ${\bar f}$ intersecting with $L^{\delta}$ and with $R^{\delta}$. Then the image of the basic square by ${\bar f}$ contains $\omega^{\delta}$.
\end{lemma}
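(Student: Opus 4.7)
The plan is to reduce the claim to an intermediate value argument on a one-dimensional path lying in $\bar f^{-1}(B^{\delta}) \cap S$. Applying Lemma~\ref{lem:square} and the hypothesis, I would first extract from the boundary of $S$ exactly two points, call them $p_L$ and $p_R$, lying on two distinct edges of $S$ and satisfying $\bar f(p_L) \in L^{\delta}$ and $\bar f(p_R) \in R^{\delta}$. Indeed, since $B^{\delta}=L^{\delta}\cup R^{\delta}$ and the image of $\partial S$ meets both $L^{\delta}$ and $R^{\delta}$, the two intersection points with $B^{\delta}$ produced by Lemma~\ref{lem:square} must split this way (up to the trivial case where one of the images equals $\omega^{\delta}$, in which case we are already done).

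The core step is to build a continuous curve $\gamma\colon[0,1]\to S$ with $\gamma(0)=p_L$, $\gamma(1)=p_R$, and $\bar f(\gamma(t))\in B^{\delta}$ for every $t$. Recall that $S$ is triangulated by the diagonal from its top-left corner $\boldsymbol{a}$ to its bottom-right corner $\boldsymbol{c}$ into two triangles $\tau_1$ and $\tau_2$, and that $\bar f$ is affine on each $\tau_i$. Hence $\bar f^{-1}(B^{\delta})\cap \tau_i$ is, generically, a single line segment whose endpoints lie on $\partial \tau_i$; each such endpoint is either $p_L$, $p_R$, or a point on the diagonal $\boldsymbol{a}\boldsymbol{c}$. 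Using that $\bar f$ is well-defined and continuous on the diagonal (the two affine pieces agree there, so $\bar f_1$ has at most one zero of $\bar f_1-(\tfrac{1}{3}-\delta)$ on $\boldsymbol{a}\boldsymbol{c}$), one checks that these segments glue into a single connected piecewise linear path from $p_L$ to $p_R$: either both endpoints sit on outer edges of the same triangle (and the path is a single segment), or they sit on outer edges of different triangles (and the two segments join at the unique zero of $\bar f_1-(\tfrac{1}{3}-\delta)$ on the diagonal).

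Having produced $\gamma$, I would conclude by the intermediate value theorem applied to the continuous real-valued function $t\mapsto \bar f_2(\gamma(t))$. Its value at $t=0$ is at most $\tfrac{1}{3}+\tfrac{\delta}{2}$ since $\bar f(p_L)\in L^{\delta}$, and at $t=1$ is at least $\tfrac{1}{3}+\tfrac{\delta}{2}$ since $\bar f(p_R)\in R^{\delta}$. Thus some $t^{*}$ satisfies $\bar f_2(\gamma(t^{*}))=\tfrac{1}{3}+\tfrac{\delta}{2}$. Combined with $\bar f_1(\gamma(t^{*}))=\tfrac{1}{3}-\delta$ (coming from $\gamma(t^{*})\in \bar f^{-1}(B^{\delta})$) and the simplex constraint, this forces $\bar f(\gamma(t^{*}))=\omega^{\delta}$, proving the lemma.

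The main obstacle is the connectivity argument across the diagonal: one must ensure that the two candidate segments in $\tau_1$ and $\tau_2$ meet at a common point on $\boldsymbol{a}\boldsymbol{c}$, and one must handle the degenerate situations (an entire $\tau_i$ mapping into $B^{\delta}$, or a vertex lying on $B^{\delta}$). The latter is ruled out by Lemma~\ref{lem:perturb-rel}, and the former can be absorbed by noting that if $\bar f(\tau_i)\subseteq B^{\delta}$ then the image of $\tau_i$ is a segment of $B^{\delta}$ which, combined with the hypothesis that both $L^{\delta}$ and $R^{\delta}$ are met on $\partial S$, still forces $\omega^{\delta}\in \bar f(S)$ by the same IVT argument along a suitably chosen line segment inside $\tau_i$.
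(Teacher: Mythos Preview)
Your approach is correct and reaches the same conclusion, but organizes the argument differently from the paper. The paper proceeds by explicit case analysis on whether $\bar f(\boldsymbol{a})$ and $\bar f(\boldsymbol{c})$ lie on the same side of $B^{\delta}$, invoking vertical monotonicity (Lemma~\ref{lem:monotone}) directly to determine on which edges of $S$ the two intersection points from Lemma~\ref{lem:square} must lie; in each case it then identifies a single triangle of $\T$ whose image, by affineness of $\bar f$, must contain $\omega^{\delta}$. You instead trace the level set $\bar f^{-1}(B^{\delta})\cap S$ as a piecewise-linear path from $p_L$ to $p_R$ and apply the intermediate value theorem to $\bar f_2$ along it. Your version is slightly more conceptual and sidesteps the case split on the diagonal's endpoints, at the cost of having to argue connectivity of the level set across the two triangles (which you handle correctly, using Lemma~\ref{lem:perturb-rel} to exclude vertices on $B^{\delta}$ and hence guarantee a unique crossing point on the diagonal). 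The paper's version is more direct and pinpoints the containing triangle explicitly. Both proofs rest on the same ingredients: Lemma~\ref{lem:square} for the two boundary crossings, Lemma~\ref{lem:perturb-rel} to rule out degeneracies, and affineness of $\bar f$ on each triangle of $\T$.
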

\begin{proof}
Consider a basic square $S$. Let $\boldsymbol{a},\boldsymbol{b},\boldsymbol{c},\boldsymbol{d}$ be the top left, top right, bottom right, bottom left corner vertices of $S$, respectively. By Lemma~\ref{lem:perturb-rel}, none of the points $\bar f(\boldsymbol{a})$, $\bar f(\boldsymbol{b})$, $\bar f(\boldsymbol{c})$, $\bar f(\boldsymbol{d})$ are located on $B^{\delta}$. 

Suppose first that ${\bar f}(\a)$ and ${\bar f}(\cc)$ both lie above $B^\delta$. Since ${\bar f}$ is vertically monotone (see Lemma~\ref{lem:monotone}), ${\bar f}(\boldsymbol{b})$ also lies above $B^\delta$ and only the segments with ${\bar f}(\boldsymbol{d})$ as an endpoint can intersect with $B^\delta$. Thus ${\bar f}(\boldsymbol{d})$ lies below $B^\delta$. The two points in $\partial S$ whose image by $\bar f$ belongs to $B^\delta$ (see Lemma~\ref{lem:square}) are located on $[\cc,\boldsymbol{d}]$ and $[\boldsymbol{d},\a]$. By assumption, one image is located on $L^{\delta}$ and the other on $R^{\delta}$. Since $\bar f$ is piecewise affine, there is a point $\x$ in the triangle of $\T$ with vertices $\a,\cc,\boldsymbol{d}$ such that ${\bar f}(\x)=\omega^{\delta}$. When ${\bar f}(\a)$ and ${\bar f}(\cc)$ both lie below $B^\delta$, we get the result similarly.

Finally, suppose that ${\bar f}(\a)$ and ${\bar f}(\cc)$ do not lie on the same side of $B^\delta$. It implies that the segment $[{\bar f}(\a),{\bar f}(\cc)]$ intersects with $B^\delta$. If it contains the point $\omega^{\delta}$, we are done. We can thus assume that the intersection of $[{\bar f}(\a),{\bar f}(\cc)]$ and $B^\delta$ is distinct from this point. Consider the case when the intersection is on $R^\delta$. Let $\tau$ be the triangle of $\T$ containing the edge of $S$ whose image intersects with $L^\delta$. This triangle has also $[\a,\cc]$ as an edge.
Since $\bar f$ is piecewise affine, there is a point $\x$ in $\tau$ such that ${\bar f}(\x)=\omega^{\delta}$. The case when the intersection is on $L^\delta$ is dealt with similarly.~
\end{proof}

See Figure~\ref{fig:intersection} for examples of the image of a basic square by $\bar f$. 
\begin{figure}[htb]
\centering
\begin{tikzpicture}[scale=0.5, transform shape]
       \filldraw[black!30] (-2,1) -- (-4,2.2) -- (0,2) -- (0,-2);
       
       \draw[ultra thick] (-6,-0.5) -- (2,-0.5);
       \node at (-6.8,-0.5) {\Large $B^{\delta}$};
       \draw[fill=black] (-2,1) circle[radius=0.2];
       \draw[fill=black] (-4,2.2) circle[radius=0.2];
       \draw[fill=black] (0,2) circle[radius=0.2];
       \draw[fill=black] (0,-2) circle[radius=0.2];
       
       \node at (-2.8,0.8) {\Large ${\bar f}(\boldsymbol{a})$};
       \node at (0.8,-2.2) {\Large ${\bar f}(\boldsymbol{d})$};
       \node at (-4.8,2.2) {\Large ${\bar f}(\boldsymbol{b})$};
       \node at (0.8,2) {\Large ${\bar f}(\boldsymbol{c})$};
        
    \begin{scope}[xshift=8cm,yshift=-1cm]
        \filldraw[black!30] (2,3) -- (1,1) -- (2,-1) -- (-2,-0.5);
        \draw[ultra thick] (-4,0.5) -- (4,0.5);
        \node at (-4.8,0.5) {\Large $B^{\delta}$};
        
        \draw[fill=black] (2,3) circle[radius=0.2];
        \draw[fill=black] (1,1) circle[radius=0.2];
        \draw[fill=black] (2,-1) circle[radius=0.2];
        \draw[fill=black] (-2,-0.5) circle[radius=0.2];
        
        \node at (2.9,3) {\Large ${\bar f}(\boldsymbol{b})$};
        \node at (2,1) {\Large ${\bar f}(\boldsymbol{c})$};
        \node at (3,-1) {\Large ${\bar f}(\boldsymbol{d})$};
        \node at (-3,-0.5) {\Large ${\bar f}(\boldsymbol{a})$};
        
    \end{scope}
    
    \begin{scope}[xshift=20cm,yshift=0cm]
       \filldraw[black!30] (-4,-1.5) -- (-4,1) -- (-1,1) -- (-1,-1.5);
       \draw[ultra thick] (-6,-0.5) -- (1,-0.5);
       \node at (-6.8,-0.5) {\Large $B^{\delta}$};
       
    \draw[fill=black] (-4,-1.5) circle[radius=0.2];
      \draw[fill=black] (-4,1) circle[radius=0.2];
      \draw[fill=black] (-1,-1.5) circle[radius=0.2];
      \draw[fill=black] (-1,1) circle[radius=0.2];
       \node at (-4.8,1) {\Large ${\bar f}(\boldsymbol{a})$};
       \node at (-4.8,-1.5) {\Large ${\bar f}(\boldsymbol{d})$};
       \node at (-0.2,1) {\Large ${\bar f}(\boldsymbol{b})$};
       \node at (-0.1,-1.5) {\Large ${\bar f}(\boldsymbol{c})$};
    \end{scope}
\end{tikzpicture}
\caption{Illustration of the image of a basic square by ${\bar f}$.} 
\label{fig:intersection}
\end{figure}
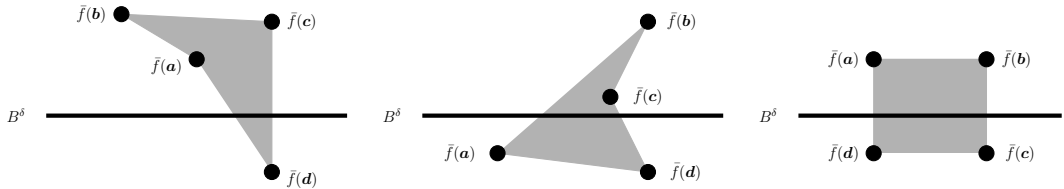

\smallskip

\subsection{Existence of an envy-free multi-division}\label{subsec:two:existence}

In this section, we prove Theorem~\ref{thm:two-layers:group-birth}. 
The main lemma is the following one. Figure~\ref{fig:average} illustrates its proof.

\begin{lemma}\label{lem:average}
There is a point of $[0,1]^2$ whose image by $\bar f$ is $\omega^{\delta}$.
\end{lemma}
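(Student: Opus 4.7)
My plan is to find a basic square whose boundary has crossings with $B^\delta$ on both sides of $\omega^\delta$ (one on $L^\delta$ and one on $R^\delta$), and then invoke Lemma~\ref{lem:square:LR}. I may assume no basic line's crossing with $B^\delta$ (given by Lemma~\ref{lem:usquare:odd}) already equals $\omega^\delta$, since otherwise we are done. Each such crossing thus lies either in $L^\delta \setminus \{\omega^\delta\}$ or in $R^\delta \setminus \{\omega^\delta\}$; write $\sigma(k) \in \{L, R\}$ for the side containing the crossing on $\bigL(k/N)$.

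The first step is to show $\sigma(0) \ne \sigma(N)$. By Lemma~\ref{lem:sym}, $\bar f_1(0,y) = \bar f_1(1,y)$ for every $y$, so the two crossings on $\bigL(0)$ and $\bigL(1)$ occur at the same height $y^*$. Again by Lemma~\ref{lem:sym}, $\bar f_2(1,y^*) = \bar f_3(0,y^*) = 1 - (\frac{1}{3} - \delta) - \bar f_2(0,y^*)$, so the two $z_2$-coordinates sum to $\frac{2}{3} + \delta$, and exactly one is above $\frac{1}{3} + \frac{\delta}{2}$ while the other is below, placing the crossings on opposite sides of $\omega^\delta$. A discrete intermediate value argument along $\sigma(0), \sigma(1), \ldots, \sigma(N)$ then yields some $k$ with $\sigma(k) \ne \sigma(k+1)$.

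Fix such a $k$ and consider the vertical strip $S_k = [k/N, (k+1)/N] \times [0,1]$, composed of $N$ stacked basic squares. The preimage $\bar f^{-1}(B^\delta)$ is a piecewise-linear $1$-manifold (Lemma~\ref{lem:perturb-rel} guarantees no vertex of $\T$ lies on $B^\delta$, so in each triangle of $\T$ the preimage is either empty or a segment joining two sides), and within $S_k$ its boundary in $\partial S_k$ consists of exactly two points: the unique crossings on the two vertical sides. Indeed, along the edge $y = 0$ of $[0,1]^2$ one has $\bar f_1 \equiv 0$ by the tie-breaking rule (no agent picks the zero-length first piece at any such vertex, and $\bar f$ is the affine extension), while along $y = 1$ one has $\bar f_1 > \frac{1}{3} - \delta$ since a majority of agents prefer the full top layer by monotonicity (as in the proof of Lemma~\ref{lem:usquare:odd}), so these two horizontal sides contribute no crossings. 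Hence there is a unique arc $\gamma \subseteq \bar f^{-1}(B^\delta) \cap S_k$ joining the left vertical side to the right one, visiting a sequence of basic squares $B_0, B_1, \ldots, B_m$, entering each on one edge and exiting on another by Lemma~\ref{lem:square}. The shared crossing between any adjacent $B_i$ and $B_{i+1}$ is a single point, hence a single label ($L^\delta$ or $R^\delta$) — so the label cannot change at such a junction. Since $\gamma$ begins with label $\sigma(k)$ and ends with label $\sigma(k+1) \ne \sigma(k)$, the label must flip within some basic square $B_i$, meaning the two boundary crossings of $B_i$ lie on opposite sides of $\omega^\delta$ within $B^\delta$. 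Lemma~\ref{lem:square:LR} then gives $\omega^\delta \in \bar f(B_i)$, as desired. The main obstacle is justifying the arc-tracing step, which relies on the structural fact that $\bar f^{-1}(B^\delta)$ is a 1-manifold with the prescribed boundary behavior on $\partial S_k$.
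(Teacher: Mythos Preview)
Your proof is correct and follows essentially the same approach as the paper's. Both proofs (i) use Lemmas~\ref{lem:sym} and~\ref{lem:usquare:odd} to find adjacent basic lines whose unique crossings with $B^\delta$ lie on opposite halves $L^\delta$ and $R^\delta$, (ii) propagate the preimage of $B^\delta$ through the resulting vertical strip, and (iii) locate a basic square whose two boundary crossings straddle $\omega^\delta$, concluding via Lemma~\ref{lem:square:LR}. The only difference is packaging: you phrase step (ii) as tracing an arc of the piecewise-linear $1$-manifold $\bar f^{-1}(B^\delta)$, while the paper argues more combinatorially, restricting to the sub-rectangle $C$ between the two squares $S_1$, $S_2$ containing the vertical crossings and observing (again via Lemma~\ref{lem:square} and the uniqueness in Lemma~\ref{lem:usquare:odd}) that every interior horizontal edge of $C$ must carry a crossing, forcing a label flip somewhere between $S_1$ and $S_2$. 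Your $1$-manifold argument is a clean repackaging of exactly this propagation; the paper's version is slightly more elementary in that it never names the manifold structure explicitly.
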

\begin{proof}
By Lemmas~\ref{lem:sym} and~\ref{lem:usquare:odd}, ${\bar f}(\bigL(0))\cup{\bar f}(\bigL(1))$ intersects with $L^{\delta}$ and with $R^{\delta}$. Further, by Lemma~\ref{lem:usquare:odd}, for each $x=\frac k N$ with $k \in \{1,2, \ldots,N\}$, the line ${\bar f}(\bigL(x))$ intersects with $B^{\delta}$ at a unique point. 
Thus, there are two adjacent basic lines $\bigL(x_1)$ and $\bigL(x_2)$ with $x_2= x_1+\frac 1 N$ or $x_2=x_1-\frac 1 N$ such that ${\bar f}(\bigL(x_1))$ intersects with $L^{\delta}$ and ${\bar f}(\bigL(x_2))$ intersects and with $R^{\delta}$.

Let $e_1$ be the vertical edge of $\bigL(x_1)$ such that ${\bar f}(e_1)$ intersects with $L^{\delta}$ and $e_2$ the vertical edge of $\bigL(x_2)$ such that ${\bar f}(e_2)$ intersects with $R^{\delta}$. Let $S_1$ be the basic square that lies between $\bigL(x_1)$ and $\bigL(x_2)$ and contains $e_1$ and $S_2$ the basic square that lies between $\bigL(x_1)$ and $\bigL(x_2)$ and contains $e_2$. 
By Lemma~\ref{lem:square}, for $i=1,2$, $|\{\x\in\partial S_i \mid  {\bar f}(\x) \in B^\delta\}|=2$. If the image of the boundary of $S_1$ by ${\bar f}$ intersects with $L^{\delta}$ and $R^{\delta}$, Lemma~\ref{lem:square:LR} implies that the  image of $S_1$ by ${\bar f}$ contains $\omega^{\delta}$, which proves the claim.  Similarly, if the image of the boundary of $S_2$ by ${\bar f}$ intersects with $L^{\delta}$ and $R^{\delta}$, we obtain the desired claim.   

Thus, assume that the image of $S_1$ by ${\bar f}$ intersects with $L^{\delta}$ only, and the image of $S_2$ by ${\bar f}$ intersects with $R^{\delta}$ only. By Lemma~\ref{lem:square}, this means that there are two edges in $S_1$ whose image by ${\bar f}$ intersects with $L^{\delta}$. Likewise, there are two edges in $S_2$ whose image by ${\bar f}$ intersects with $R^{\delta}$. Consider the vertical rectangle $C$ of width $\frac 1 N$ that consists of the sequence of basic squares lying between $S_1$ and $S_2$. We claim that $C$ admits two horizontal edges of a same basic square whose image by ${\bar f}$ intersects with both $L^{\delta}$ and $R^{\delta}$ and hence contains $\omega^{\delta}$ by Lemma~\ref{lem:square:LR}. 

To prove this, recall that by Lemma~\ref{lem:usquare:odd}, the image of $\bigL(x_1)$ by $\bar f$ intersects with $B^{\delta}$ at a single point. Thus, there is no vertical edge in $\bigL(x_1)$ except for $e_1$ whose image by ${\bar f}$ intersects with $B^{\delta}$. Similarly, there is no vertical edge in $\bigL(x_2)$ except for $e_2$ whose image by ${\bar f}$ intersects with $B^{\delta}$. Further, the image of the top horizontal edge of $C$ does not intersect with $B^{\delta}$ since ${\bar f}$ is affine on the edge and the corner vertices of the edge have their images lying above $B^{\delta}$. For the same reasons, the image of the bottom horizontal edge of $C$ does not intersect with $B^{\delta}$. 

By Lemma~\ref{lem:square} and by the fact that the image of each $S_i$ $(i=1,2)$ by ${\bar f}$ intersects with $B^{\delta}$, all squares in $C$ have two edges whose image intersect with $B^{\delta}$. 
Hence, the image of every horizontal edge in $C$ except for the top and bottom ones by ${\bar f}$ intersects with $L^{\delta}$ or $R^{\delta}$. This means that when moving from $S_1$ to $S_2$, one necessarily meets a basic square whose horizontal edges have their image by ${\bar f}$ intersecting both $L^{\delta}$ and $R^{\delta}$. 
Thus, by Lemma~\ref{lem:square:LR}, we have found a basic square $S^*$ whose image by ${\bar f}$ contains $\omega^{\delta}$.~
\end{proof}

\begin{figure}[htb]
\centering
    \begin{tikzpicture}[scale=0.4, transform shape]

    \draw [<->,thick] (-1,13) node (yaxis) [above] {\Large $y$}|- (13.5,-1) node (xaxis) [right] {\Large $x$};
    
    \draw[fill=black!20] (4,10) rectangle (6,2);
    
    \foreach \x in {0,2,4,6, 8,10,12}{
    \foreach \y in {0,2,4,6, 8,10,12}{
            \draw[color=gray!50,ultra thick] (0,0) rectangle (\x,\y);
        }
    }
    
    \draw[thick] (-1.2,0) node [left] {\Large $0$} -- (-0.8,0);
    \draw[thick] (-1.2,12) node [left] {\Large $1$} -- (-0.8,12);

    \draw[thick] (0,-1.2) node [below] {\Large $0$} -- (0,-0.8);
    \draw[thick] (12,-1.2) node [below] {\Large $1$} -- (12,-0.8);
    
    \draw[thick] (0,-1.2) node [below] {\Large$0$} -- (0,-0.8);
    
    \draw[thick] (4,-1.2) node [below] {\Large $x_2$} -- (4,-0.8);
    \draw[thick] (6,-1.2) node [below] {\Large $x_1$} -- (6,-0.8);
    
    \node at (5,9) {\Large $S_2$};
    \node at (3.5,9) {\Large $e_2$};
    \node at (5,3) {\Large $S_1$};
    \node at (6.5,3) {\Large $e_1$};
    
    \draw[color=blue,ultra thick] (4,10) rectangle (4,8);
    \draw[color=blue,ultra thick] (4,8) rectangle (6,8);
    \draw[color=blue,ultra thick] (4,6) rectangle (6,6);
    \draw[color=red,ultra thick] (4,4) rectangle (6,4);
    \draw[color=red,ultra thick] (6,2) rectangle (6,4);

    \node at (5,5) {\Large $S^*$};
    
\begin{scope}[xshift=20cm,yshift=-1cm]
    \draw[fill=black] (0,0) circle (3pt);
    \draw[fill=black] (16,0) circle (3pt);
    \draw[fill=black] (8,13.84) circle (3pt);
    \draw[thick] (0,0) -- (16,0) -- (8,13.84) -- (0,0);
    
    \filldraw[gray!50] (5,4.9) -- (7.9,2.5) -- (9,5) -- (8.5,6);
    
    \draw[ultra thick] (2.1,3.7) -- (13.9,3.7);
    \node at (1.3,3.7) {\Large $B^{\delta}$};
    
    \draw[thick,gray] (2.67,4.61) -- (13.3,4.61);
    \node at (2,4.61) {\Large $B$};
    
    \draw[fill=black] (8,4.61) circle (5pt);
    
    \draw[thick,dotted] (2,0)--(3,2) (5,4.9)--(8.5,6)--(9.5,7)--(7,8) -- (6,9);
    
    \node at (2,-0.5) {\Large ${\bar f}(\bigL(x_1))$};
    
    \draw[thick,dotted] (10,0)-- (8,1)-- (7.9,2.5)--(9,5)--(9.5,5.2) (10.3,8.5)--(9,9);
    \node at (10,-0.5) {\Large ${\bar f}(\bigL(x_2))$};
    
    \draw[ultra thick,red] (3,2)--(5,4.9)--(7.9,2.5); 
    
    \draw[ultra thick,blue] (10.3,8.5)--(9.5,5.6) -- (9.5,7) (9,5) -- (8.5,6);
    
    \draw[fill=black] (2,0) circle (3pt);
    \draw[fill=black] (3,2) circle (3pt);
    \draw[fill=black] (5,4.9) circle (3pt);
    \draw[fill=black] (8.5,6) circle (3pt);
    \draw[fill=black] (9.5,7) circle (3pt);
    \draw[fill=black] (7,8) circle (3pt);
    \draw[fill=black] (6,9) circle (3pt);
    
    \draw[fill=black] (10,0) circle (3pt);
    \draw[fill=black] (8,1) circle (3pt);
    \draw[fill=black] (7.9,2.5) circle (3pt);
    \draw[fill=black] (9,5) circle (3pt);
    \draw[fill=black] (9.5,5.2) circle (3pt);
    \draw[fill=black] (10.3,8.5) circle (3pt);
    \draw[fill=black] (9,9) circle (3pt);
    
\end{scope}
\end{tikzpicture}
\caption{Illustration of the proof of Lemma~\ref{lem:average}. The left figure depicts the unit square where the gray region corresponds to the rectangle $C$; the blue edges correspond to the edges whose image by ${\bar f}$ intersects with $R^{\delta}$; and the red edges correspond to the edges whose image by ${\bar f}$ intersects with $L^{\delta}$. The right figure depicts the image by ${\bar f}$ of basic lines $\bigL(x_1)$ and $\bigL(x_2)$ as well as the colored edges of the left figure on the standard simplex $\Delta^2$.}
\label{fig:average}
\end{figure}
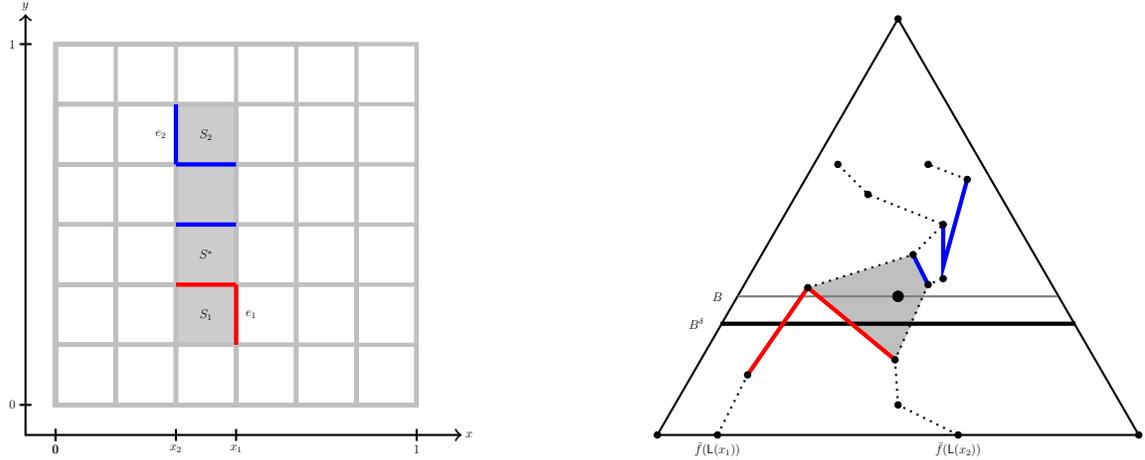%

\smallskip
\begin{proof}[Proof of Theorem~\ref{thm:two-layers:group-birth}.] The proof is almost verbatim the end of the proof of Theorem~\ref{thm:group-birth}. 

By Lemma~\ref{lem:average}, there is a point of $[0,1]^2$ whose image by $\bar f$ is $\omega^{\delta}$. According to Lemma~\ref{lem:perturb-tri}, this implies the existence of a point $(x^*,y^*) \in [0,1]^2$ such that $\bar f(x^*,y^*) = \omega$. For every non-birthday agent $i \in [n-1]$ and every layered piece $j \in \{1,2,3\}$, we define $w_{ij} = {\bar f}^{(i)}_j(x^*,y^*)$. By definition of ${\bar f}$ and the fact that ${\bar f}_j(x^*,y^*) = \frac 13$ for all $j=1,2,3$, we have 
$$
\sum_{i=1}^{n-1} w_{ij} = \sum_{i=1}^{n-1}{\bar f}^{(i)}_j(x^*,y^*)=(n-1){\bar f}_j(x^*,y^*) = \frac {n-1} 3
$$
for all $j=1,2,3$. Further, 
$$
\sum_{j=1}^3 w_{ij}=\sum_{j=1}^3 {\bar f}^{(i)}_j(x^*,y^*) = 1
$$
for all $i \in [n-1]$ because ${\bar f}^{(i)}$ has its image in $\Delta^2$. 
Consider the bipartite graph $H=([n-1],\{1,2,3\};E)$ where the edge $ij$ exists precisely when $w_{ij}>0$. Applying Lemma~\ref{lem:TUM} to $H$ with $a_j= \frac{n-1}{3}$, for every $j^* \in \{1,2,3\}$, there exists an assignment $\pi_{j^*} \colon [n] \rightarrow \{1,2,3\}$ such that 
\begin{itemize}
    \item $\pi_{j^*}(n)=j^*$, 
    \item for each $i \in [n-1]$, the vertex $\pi_{j^*}(i)$ is a neighbor of $i$ in $H$, 
    \item $|\pi^{-1}_{j^*}(n)| = \lfloor (n-1)/3 \rfloor +1$, and
    \item for each $j \in \{1,2,3\}$, we have $|\pi^{-1}_{j^*}(j)| \in \{ \lfloor (n-1)/3 \rfloor, \lceil (n-1)/3 \rceil \}$.
\end{itemize}
Similar to the proof of Theorem~\ref{thm:group-birth}, we have $|\pi^{-1}_{j^*}(j)|\in\{\lfloor n/3\rfloor, \lceil n/3\rceil \}$ for all $j\in \{ 1,2,3\}$. The rest of the proof is the same as the last two paragraphs in the proof of Theorem~\ref{thm:group-birth}. (Especially, we make $N$ go to infinity.)~ 
\end{proof}
  
We remark that it is unclear whether the statement of Theorem~\ref{thm:two-layers:group-birth} remains true if we consider more general choice functions. The problem appears from the fact that otherwise the top side $\{(x,1)\mid x\in[0,1]\}$ of the unit square could also receive colors $2$ or $3$.

\subsection{FPTAS for a two-layered cake}\label{sec:two:FPTAS}
In this section, we prove Theorem~\ref{thm:two-layers:group-birth:FPTAS}. Instead of having the general preferences as in Section~\ref{subsec:two:existence}, we have now valuation functions that satisfy monotonicity and the Lipschitz condition with constant $K$. In a similar spirit to the work of Deng et al. \cite{Deng2012}, we employ a divide-and-conquer approach to compute an approximate solution. Deng et al.\ use a triangulation of the standard triangle to encode possible divisions, compute an approximate solution by recursively computing smaller polygons containing such a solution, and finish when the polygon is actually a small triangle of the triangulation. We work instead on the unit square subdivided into basic squares, recursively compute thinner full-height rectangles containing an approximate solution, and end up with a very thin vertical strip in which we identify a basic square containing an approximate solution. In both approaches, the part which is kept at any iteration is identified by computing a sort of ``degree'' of the boundary, which is related to the intersections with $L^{\delta}$ and $R^{\delta}$. The uniqueness ensured by Lemma~\ref{lem:usquare:odd} allows to completely determine these intersections via binary search.


Take any $\varepsilon>0$. We define $N$ to be $\lceil \frac {6K}{\varepsilon} \rceil$ and consider the triangulation $\T$ and the map ${\bar f}$ for this $N$ (their definitions depend on $N$; see Section~\ref{subsec:preliminaries}). 


We first show that, by the the Lipschitz condition of the valuation functions, any vertex of a basic square whose image by $\bar f$ contains $\omega$ translates into a desired approximate multi-division.


\begin{lemma}\label{lem:approximate-basicrectangle}
Every vertex $\vv$ of a basic square whose image by $\bar f$ contains $\omega$ encodes a multi-division $\mathcal{A}(\vv)$ such that for every $j^*\in\{1,2,3\}$, there is an assignment $\pi_{j^*}\colon[n]\rightarrow\{1,2,3\}$ with:
\begin{itemize}
    \item $\pi_{j^*}(n)=j^*$,
    \item for each $i\in [n-1]$, $v_i(\calA_{\pi_{j^*}(i)}(\vv))+\varepsilon \ge \max_{j \in \{1,2,3\}}v_i(\calA_{j}(\vv))$, and
    \item for each $j\in \{1,2,3\}$, $|\pi^{-1}_{j^*}(j)|\in\{\lfloor n/3\rfloor, \lceil n/3\rceil \}$.
\end{itemize} 
\end{lemma}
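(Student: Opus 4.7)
The plan is to mirror the end of the proof of Theorem~\ref{thm:two-layers:group-birth} and then to convert the exact envy-freeness it produces at a preimage of $\omega$ into $\varepsilon$-approximate envy-freeness at an arbitrary vertex of the surrounding basic square, using the Lipschitz condition. Let $S$ be the basic square from the statement, let $\vv$ be any of its vertices, and pick $(x^*,y^*)\in S$ with ${\bar f}(x^*,y^*)=\omega$. Setting $w_{ij}={\bar f}^{(i)}_j(x^*,y^*)$ for $i\in[n-1]$ and $j\in\{1,2,3\}$, the same computation as in the proof of Theorem~\ref{thm:two-layers:group-birth} yields $\sum_j w_{ij}=1$ for every $i$ and $\sum_i w_{ij}=\tfrac{n-1}{3}$ for every $j$. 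Applying Lemma~\ref{lem:TUM} with $a_j=\tfrac{n-1}{3}$ to the bipartite graph encoding the positive weights produces, for every $j^*\in\{1,2,3\}$, an assignment $\pi_{j^*}\colon[n]\rightarrow\{1,2,3\}$ with $\pi_{j^*}(n)=j^*$, with $w_{i,\pi_{j^*}(i)}>0$ for every $i\in[n-1]$, and (by the same arithmetic rewriting used in the proof of Theorem~\ref{thm:two-layers:group-birth}) with $|\pi_{j^*}^{-1}(j)|\in\{\lfloor n/3\rfloor,\lceil n/3\rceil\}$.

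Since ${\bar f}^{(i)}$ is the affine extension of $f^{(i)}$ over simplices of $\T$, the strict positivity of $w_{i,\pi_{j^*}(i)}={\bar f}^{(i)}_{\pi_{j^*}(i)}(x^*,y^*)$ forces the existence of a vertex $\vv^i$ of some triangle of $\T$ that contains $(x^*,y^*)$ and is contained in $S$, with $f^{(i)}(\vv^i)=\e_{\pi_{j^*}(i)}$. In particular, $\vv^i$ is one of the four vertices of $S$, and, by construction of $f^{(i)}$, $\calA_{\pi_{j^*}(i)}(\vv^i)$ is a most preferred piece of agent $i$ in $\calA(\vv^i)$, i.e., $v_i(\calA_{\pi_{j^*}(i)}(\vv^i))\ge v_i(\calA_{j'}(\vv^i))$ for every $j'\in\{1,2,3\}$.

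The last step transfers this inequality from $\vv^i$ to the prescribed vertex $\vv$ through the Lipschitz condition. A direct inspection of the formulas defining $\calA_j(x,y)$ shows that as $(x,y)$ varies within a basic square of side $1/N$, every endpoint of every piece in each layer shifts by at most $1/N$, so $\mu(\calA_j(\vv)\triangle \calA_j(\vv^i))$ is bounded by a small constant times $1/N$. With $N=\lceil 6K/\varepsilon\rceil$, the Lipschitz condition then gives $|v_i(\calA_j(\vv))-v_i(\calA_j(\vv^i))|\le\varepsilon/2$ for every $j$, and chaining two such bounds yields
\[
v_i(\calA_{\pi_{j^*}(i)}(\vv))+\varepsilon\ge v_i(\calA_{\pi_{j^*}(i)}(\vv^i))+\varepsilon/2\ge v_i(\calA_{j'}(\vv^i))+\varepsilon/2\ge v_i(\calA_{j'}(\vv))\, ,
\]
which is the desired $\varepsilon$-approximate envy-freeness. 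The main obstacle is the careful bookkeeping of the symmetric-difference bound, complicated slightly by the $\max(x,y)$ term in the piece definitions that makes them only piecewise linear inside $S$; once that bound is in place, everything else is a direct adaptation of the end of the proof of Theorem~\ref{thm:two-layers:group-birth}.
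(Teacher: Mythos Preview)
Your proposal is correct and follows essentially the same approach as the paper's proof: define the weights $w_{ij}=\bar f^{(i)}_j(x^*,y^*)$, apply Lemma~\ref{lem:TUM} with $a_j=(n-1)/3$, extract for each agent a vertex of the supporting simplex witnessing her preferred piece, and then use the Lipschitz condition to carry the inequality to the chosen vertex $\vv$. The paper makes the symmetric-difference bound explicit as $3/N$ (piece $1$ contributes at most $1/N$, pieces $2$ and $3$ at most $2/N$ and $3/N$ respectively once the $\max(x,y)$ cases are unwound), which with $N=\lceil 6K/\varepsilon\rceil$ gives exactly the $\varepsilon/2$ you need; your chained inequality at the end is the same conclusion spelled out.
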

\begin{proof} Consider a basic square containing a point $(x^*,y^*)$ such that $\bar f(x^*,y^*) = \omega$, and let $\vv$ be any of its vertices. We define $w_{ij} = {\bar f}^{(i)}_j(x^*,y^*)$ for every non-birthday agent $i \in [n-1]$ and every layered piece $j \in \{1,2,3\}$. Similar to the proof of Theorem~\ref{thm:two-layers:group-birth}, we apply Lemma~\ref{lem:TUM} to the bipartite graph $H=([n-1],\{1,2,3\};E)$, where the edge $ij$ exists precisely when $w_{ij}>0$ and $a_j=\frac{n-1}{3}$ for $j \in \{1,2,3\}$. Then, for every $j^* \in \{1,2,3\}$, we get an assignment $\pi_{j^*} \colon [n] \rightarrow \{1,2,3\}$ such that 
\begin{itemize}
    \item $\pi_{j^*}(n)=j^*$,
    \item for each $i \in [n-1]$,  the vertex $\pi_{j^*}(i)$ is a neighbor of $i$ in $H$, and
    \item for each $j \in \{1,2,3\}$, we have $|\pi^{-1}_{j^*}(j)| \in \{\lfloor n/3\rfloor, \lceil n/3\rceil \}$. 
\end{itemize}
Consider any $j^* \in \{1,2,3\}$ and any $i \in [n-1]$. 
Since $\pi_{j^*}(i)$ is a neighbor of $i$ in $H$, there exists a vertex $\vv^{i,j^*}$ of the supporting simplex of $(x^*,y^*)$ in $\T$ such that ${f}^{(i)}(\vv^{i,j^*})=\e_{\pi_{j^*}(i)}$, meaning that $v_i(\calA_{\pi_{j^*}(i)}(\vv^{i,j^*})) \ge \max_{j \in \{1,2,3\}}v_i(\calA_{j}(\vv^{i,j^*}))$. By the Lipschitz condition, we have for every $j$:
\begin{align*}\label{eq:approximate}
|v_i(\mathcal{A}_j(\vv))-v_i(\mathcal{A}_j(\vv^{i,j^*}))| \le K \times \mu(\mathcal{A}_j(\vv) \triangle \mathcal{A}_j(\vv^{i,j^*})) \le K \times \frac{3}{N} \le \frac{\varepsilon} 2 \, . 
\end{align*}
(Here, the `$3$' in the penultimate term comes from the following: consider two vertices of a same basic square; the layered pieces $j=1$ in the multi-divisions they encode have a symmetric difference of length $0$ or $\frac 1 N$; the layered pieces $j=2$ in the multi-divisions they encode have a symmetric difference of length $0$, $\frac 1 N$, or $\frac 2 N$; same thing for the layered pieces $j=3$.)
We have thus:
\begin{itemize}
    \item $\pi_{j^*}(n)=j^*$,
    \item for each $i\in [n-1]$, $v_i(\calA_{\pi_{j^*}(i)}(\vv))+\varepsilon \ge \max_{j \in \{1,2,3\}}v_i(\calA_{j}(\vv))$, and
    \item for each $j\in \{1,2,3\}$, we have $|\pi^{-1}_{j^*}(j)|\in\{\lfloor n/3\rfloor, \lceil n/3\rceil \}$.~\qedhere
\end{itemize} 
\end{proof}

The problem thus boils down to finding a basic square whose image by ${\bar f}$ includes $\omega$ in logarithmic time of $N$. In order to design the FPTAS, we exploit the uniqueness of Lemma~\ref{lem:usquare:odd} (which relies on Lemma~\ref{lem:monotone}, which eventually relies on the preferences being monotone). This enables us to compute the intersection of the image of a basic line with $B^{\delta}$, in logarithmic time of $N$.


For an interval $X$ such that $\min X= \frac{i_1}{N}$ and $\max X=\frac{i_2}{N}$ for $i_1,i_2 \in \{0,1,\ldots,N\}$, we write $\med X =  \lceil \frac{i_1 + i_2}{2N} \rceil$.

Consider Algorithm~\ref{alg:FPTAS}. It consists of two stages. 
First, by binary search (Lines \ref{line:1} -- \ref{line:4}), it identifies the vertical rectangle $X\times [0,1]$ of horizontal length $\frac 1N$ where the image of one vertical side by ${\bar f}$ intersects with $L^{\delta}$ and the image of the other side by ${\bar f}$ intersects with $R^{\delta}$. After the first stage of the algorithm, it identifies the vertical edges $e_1$ and $e_2$ where these intersections occur; at this stage, we have identified basic squares $S_1$ and $S_2$ as in Figure~\ref{fig:average}.
The algorithm then by binary search (Lines \ref{line:firstwhile:second} -- \ref{line:11}) updates a vertical rectangle $X\times Y$. Even if the intersection test is done only with at most six edges at Line~\ref{line:if}, Lemma~\ref{lem:secondwhile} ensures that the boundary of that rectangle keeps its image intersecting with both $L^{\delta}$ and $R^{\delta}$. The rectangle therefore ends as a basic square of size $\frac 1N \times \frac 1N$ with the same property. 
By Lemma~\ref{lem:square:LR}, such a square contains a point whose image is $\omega^{\delta}$, and thus a point whose image is $\omega$ by Lemma~\ref{lem:perturb-tri}; this implies by Lemma~\ref{lem:approximate-basicrectangle} that any vertex of the square corresponds to a desired approximate division. The proof of Theorem~\ref{thm:two-layers:group-birth:FPTAS} consists mainly in verifying that Algorithm~\ref{alg:FPTAS} computes correctly such a basic square and in analyzing its running time. 

\begin{algorithm}
\caption{Computing an approximate envy-free multi-division}
\label{alg:FPTAS}
\SetInd{0.8em}{0.3em}
~set $X \coloneqq [0,1]$\;
\While{$\max X -  \min X  > \frac 1N$}{\label{line:1}        
\If(\tcp*[h]{This can be done using binary search; see Lemma~\ref{lem:degree-vertical-line}}){${\bar f}(\bigL(\min X)) \cup {\bar f}(\bigL(\med X))$ intersects with $L^{\delta}$ and with $R^\delta$
\label{line:2}
}{
set $X \coloneqq [\min X,\med X]$\;\label{line:3}
}
\Else{
set $X \coloneqq [\med X, \max X]$\;\label{line:4}
}
}
~compute a basic square $S_1$ in $X \times [0,1]$ which admits a vertical edge $e_1$ whose image by ${\bar f}$ intersects with $L^{\delta}$\;\tcp*[h]{This can be done using binary search; see Lemma~\ref{lem:degree-vertical-line}}\label{line:5}

~compute a basic square $S_2$ in $X \times [0,1]$ which admits a vertical edge $e_2$ whose image by ${\bar f}$ intersects with $R^{\delta}$\;\tcp*[h]{This can be done using binary search; see Lemma~\ref{lem:degree-vertical-line}}\label{line:6}

\If{$S_1=S_2$}{
\Return $S_1$\;
}
\Else{
~set $x_1 \coloneqq \min X$ and $x_2=\max X$\;\label{line:7}
~set $y_1 \coloneqq \min \{\, y \mid (x,y) \in S_1 \cup S_2 \,\}$ and $y_2 \coloneqq \max \{\, y \mid (x,y) \in S_1 \cup S_2 \,\}$\;
~set $Y \coloneqq [y_1,y_2]$\;\label{line:8}
\While{$\max Y -  \min Y  > \frac 1N$}{\label{line:firstwhile:second}
\If{\label{line:if}
$\bigcup_{e}\bar f(e)$ intersects with $L^{\delta}$ and with $R^{\delta}$, where $e$ ranges over the edges incident to the corner vertices of $X \times [\min Y, \med Y]$
\label{line:10}
}{
set $Y \coloneqq [\min Y, \med Y]$\;
}
\Else{
set $Y \coloneqq [\med Y, \max Y]$\;\label{line:11}
}\label{line:endwhile:second}
}
\Return $X \times Y$\;
}
\end{algorithm}



According to Lemma~\ref{lem:usquare:odd}, the intersection of the image of any basic line with $B^{\delta}$ is unique. It can be computed by binary search.

\begin{lemma}\label{lem:degree-vertical-line}
Given any $x=\frac kN$ for $k \in \{0,1,\ldots,N\}$, one can compute in time $O(n\log N)$ the vertical edge of $\bigL(x)$ whose image by ${\bar f}$ intersects with $L^{\delta}$ or $R^{\delta}$. 
\end{lemma}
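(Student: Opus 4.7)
The plan is to perform a binary search along the $N+1$ vertices $(x, k'/N)$ for $k' \in \{0,1,\ldots,N\}$ of the basic line $\bigL(x)$, exploiting the structural properties already established.

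First I would recall the key monotonicity facts. By Lemma~\ref{lem:monotone}, the first coordinate $\bar f_1(x, \cdot)$ is nondecreasing along $\bigL(x)$. By Lemma~\ref{lem:perturb-rel}, no vertex of $\T$ is mapped onto $B^\delta$, so every quantity $\bar f_1(x, k'/N)$ is strictly less than or strictly greater than $\tfrac{1}{3}-\delta$. By Lemma~\ref{lem:usquare:odd}, there is a unique $y$ with $\bar f(x,y) \in B^\delta$, which forces the existence of a unique index $k^\star \in \{0,\ldots,N-1\}$ such that
\[
\bar f_1\!\left(x, \tfrac{k^\star}{N}\right) < \tfrac{1}{3}-\delta < \bar f_1\!\left(x, \tfrac{k^\star+1}{N}\right).
\]
The edge between the vertices $(x, k^\star/N)$ and $(x, (k^\star+1)/N)$ is then the unique vertical edge of $\bigL(x)$ whose image by $\bar f$ hits $B^\delta$, and hence lies on $L^\delta$ or $R^\delta$.

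Next I would describe how each probe of the binary search is done. At a candidate index $k'$, we evaluate $\bar f_1(x, k'/N)$ by asking each non-birthday agent $i \in [n-1]$ to report her favorite piece among $\calA_1(x,k'/N)$, $\calA_2(x,k'/N)$, $\calA_3(x,k'/N)$ under the tie-breaking rule of Section~\ref{subsec:preliminaries}. Since each agent's valuation can be accessed in constant time, each agent's report costs $O(1)$, and the sum $\bar f_1(x, k'/N) = \frac{1}{n-1}|\{i : f^{(i)}(x, k'/N) = \e_1\}|$ is computed in $O(n)$ time. Comparing with $\tfrac{1}{3}-\delta$ tells us whether to search the upper or lower half, and standard binary search locates $k^\star$ after $O(\log N)$ probes, yielding an overall running time of $O(n \log N)$.

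Finally, to decide whether the intersection lies on $L^\delta$ or on $R^\delta$, I would compute $\bar f_2$ at both endpoints of the identified edge (again in $O(n)$ time), parametrize the affine image of the edge to solve explicitly for the unique point $\bar f(x, y^\star)$ on $B^\delta$, and compare its second coordinate with $\tfrac{1}{3}+\tfrac{\delta}{2}$: if it is at most this threshold, the intersection lies on $L^\delta$, otherwise on $R^\delta$. There is no real obstacle in this argument—the only subtlety is to invoke Lemma~\ref{lem:perturb-rel} to ensure that the binary-search predicate is well-defined at every vertex (no vertex sits on $B^\delta$), so that the transition index $k^\star$ is unambiguously determined.
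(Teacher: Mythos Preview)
Your proposal is correct and takes essentially the same approach as the paper: binary search along the vertices of $\bigL(x)$, using the monotonicity of $\bar f_1$ and the fact that each evaluation of $\bar f$ at a vertex costs $O(n)$. Your write-up is in fact more careful than the paper's own proof, which is very terse; you explicitly invoke Lemma~\ref{lem:perturb-rel} to ensure the binary-search predicate is well-defined at vertices, and you spell out the final $O(n)$ step of deciding whether the intersection lies on $L^{\delta}$ or $R^{\delta}$, which the paper's proof leaves implicit.
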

\begin{proof}
The point $\bar f(x,0)$ lies below $B^{\delta}$ and the point $\bar f(x,1)$ lies above $B^{\delta}$. Since $\bar f$ is continuous, and ${\bar f}(\vv)$ can be computed in $O(n)$ time, binary search allows to compute the vertical edge of $\bigL(x)$ whose image intersects with $B^{\delta}$ in $O(n\log N)$ time.~
\end{proof}


The next lemma ensures that an invariant is kept over the second {\bf while} loop. The proof of this lemma is actually very close to that of the end of Lemma~\ref{lem:average}.

\begin{lemma}\label{lem:secondwhile}
At each iteration of the second {\bf while} loop of Lines \ref{line:firstwhile:second} -- \ref{line:11}, 
the intersections of $\bar f(\partial(X\times Y))$ with $L^{\delta}$ and with $R^{\delta}$ are both nonempty.
\end{lemma}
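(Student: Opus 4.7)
The plan is to prove the invariant by induction on the iteration count of the second {\bf while} loop. For the base case, at entry to Line~\ref{line:firstwhile:second} the interval $Y = [y_1, y_2]$ from Lines~\ref{line:7}--\ref{line:8} contains both $S_1$ and $S_2$, and the vertical edges $e_1 \subseteq S_1$ and $e_2 \subseteq S_2$ lie on the vertical sides of the strip $X \times [0,1]$ and hence on $\partial(X \times Y)$; since $\bar f(e_1) \cap L^{\delta} \neq \emptyset$ and $\bar f(e_2) \cap R^{\delta} \neq \emptyset$ by Lines~\ref{line:5}--\ref{line:6}, the invariant holds.

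For the inductive step, set $V = X \times Y$, $V_1 = X \times [\min Y, \med Y]$, $V_2 = X \times [\med Y, \max Y]$, and $h = X \times \{\med Y\}$. The argument splits on the positions of $\bar f(\min X, \med Y)$ and $\bar f(\max X, \med Y)$ relative to $B^{\delta}$, each of which lies strictly above or strictly below by Lemma~\ref{lem:perturb-rel}. If both lie strictly above (resp.\ below), vertical monotonicity on the two vertical basic lines bounding $V$ (Lemma~\ref{lem:monotone}), combined with the piecewise affinity of $\bar f$ on $\T$, forces every vertex of $V_2$ (resp.\ $V_1$) to have first coordinate above (resp.\ below) $\tfrac{1}{3}-\delta$; hence $\bar f(V_2) \cap B^{\delta} = \emptyset$ (resp.\ $\bar f(V_1) \cap B^{\delta} = \emptyset$), and since $\partial V \subseteq \partial V_1 \cup \partial V_2$, the two intersections of $\bar f(\partial V)$ with $L^{\delta}$ and $R^{\delta}$ furnished by the hypothesis must survive on the opposite half, which therefore inherits the invariant.

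In the remaining case, one endpoint of $h$ lies strictly above and the other strictly below $B^{\delta}$, so by affinity $\bar f(h)$ crosses $B^{\delta}$ at a unique point $r \in h \subseteq \partial V_1 \cap \partial V_2$. When the algorithm enters the {\bf if} branch, its test at Line~\ref{line:if} certifies the invariant for $Y := [\min Y, \med Y]$ directly. When it enters the {\bf else} branch, after possibly swapping the roles of $L^{\delta}$ and $R^{\delta}$, I may assume $\bar f(\partial V_1) \cap R^{\delta} = \emptyset$; then $\bar f(r)$ must land in $L^{\delta}$ (otherwise $r$ itself would provide an $R^{\delta}$-intersection on $\partial V_1$), while the $R^{\delta}$-intersection of $\bar f(\partial V)$ supplied by the hypothesis cannot sit on $\partial V \cap V_1 \subseteq \partial V_1$ and thus sits on $\partial V \cap V_2 \subseteq \partial V_2$, so $V_2$ inherits its $L^{\delta}$-intersection from $r$ and its $R^{\delta}$-intersection from the hypothesis. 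The main delicacy is matching the algorithm's literal test over ``edges incident to the corners of $V_1$'' to the clean condition ``$\bar f(\partial V_1)$ meets both $L^{\delta}$ and $R^{\delta}$'' that drives the case analysis; this rests on Lemma~\ref{lem:usquare:odd}, which confines any $B^{\delta}$-crossing on each vertical side of $V_1$ to a single edge pinned down by the first stage, so that the corner-adjacent edges together with the two horizontal edges of $V_1$ already detect every possible $B^{\delta}$-crossing of $\bar f(\partial V_1)$.
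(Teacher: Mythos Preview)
Your proof is correct but takes a different route from the paper. The paper strengthens the invariant to concern only the six corner-incident boundary edges $F$ of $X\times Y$, and first establishes (via Lemmas~\ref{lem:usquare:odd} and~\ref{lem:square}) that every horizontal edge strictly inside the initial rectangle has image crossing $B^{\delta}$; the inductive step is then immediate, since the middle edge $e_0$ always contributes a $B^{\delta}$-crossing and hence one of $F^{\text{top}}\cup\{e_0\}$, $F^{\text{bottom}}\cup\{e_0\}$ meets both $L^{\delta}$ and $R^{\delta}$. You instead keep the stated invariant about the full boundary, run a case split on the endpoints of $h$, and rely on a separate ``bridge'' (your last paragraph) to equate the algorithm's corner-edge test with the full-boundary condition. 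Your bridge is correct---since $e_1,e_2$ sit at the extreme heights $[y_1,y_1+\tfrac1N]$ and $[y_2-\tfrac1N,y_2]$ of the initial $Y$, any vertical $B^{\delta}$-crossing that lies inside a sub-rectangle $X\times Y'\subseteq X\times[y_1,y_2]$ is automatically on a corner-adjacent edge---but note two consequences you did not spell out: first, this same fact makes your Case~1 vacuous (for every $\med Y$ strictly between $y_1$ and $y_2$, one endpoint of $h$ is above $B^{\delta}$ and the other below), so only the mixed case ever arises; second, in Case~1 as written you show the invariant holds for the ``opposite half'' but still need the bridge to conclude the algorithm actually selects that half. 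The paper's approach sidesteps both the vacuous case and the bridge by tracking $F$ directly.
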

\begin{proof}
Denote by $F$ the set of edges incident with the corner vertices of $X\times Y$. Notice that except at the end of the last iteration, $F$ contains exactly six edges.
We prove by induction that both $\bar f(F)\cap L^{\delta}$ and $\bar f(F)\cap R^{\delta}$ are nonempty. At the first iteration, this is true because $F$ contains all the vertical edges of $S_1$ and $S_2$. We turn now to the subsequent iterations.

To deal with these iterations, we need to establish that all edges inside $X\times Y$, just before the first iteration, have their image by $\bar f$ intersecting with $B^{\delta}$. Before the first iteration of the while loop, the images of the top and bottom edges in $F$ do not intersect with $B^{\delta}$ since ${\bar f}$ is affine on each edge, the corner vertices of the top edge have their images lying above $B^{\delta}$, and the corner vertices of the bottom edge have their images lying below $B^{\delta}$. The image of each basic line intersects exactly once with $B^{\delta}$ (Lemma~\ref{lem:usquare:odd}) and exactly two edges of each basic square in $X\times Y$ have their image by $\bar f$ intersecting with $B^{\delta}$ (Lemma~\ref{lem:square}). Hence, all the horizontal edges inside $X \times Y$ have their images intersecting with $B^{\delta}$. This property is obviously kept along all iterations.

Now, consider an arbitrary iteration. Denote by $F^{\text{top}}$ the three top edges in $F$, and by $F^{\text{bottom}}$ the three bottom edges in $F$. By induction, $\bar f(F^{\text{top}} )\cup \bar f(F^{\text{bottom}})$ intersects with both $L^{\delta}$ and $R^{\delta}$. Let $e_0$ be the horizontal edge with the second coordinate equal to $\med Y$. Before any iteration (even the last one), $e_0$ is inside $X \times Y$ and thus $\bar f(e_0)$ has a nonempty intersection with $B^{\delta}$. If it misses one of $L^{\delta}$ and $R^{\delta}$, one of $\bar f(F^{\text{top}})$ and $\bar f(F^{\text{bottom}})$ (at least) does not miss it. So, in any case, $\bar f(F^{\text{top}}\cup\{e_0\})$ or $\bar f(F^{\text{bottom}}\cup\{e_0\})$ intersects with both $L^{\delta}$ and $R^{\delta}$.
~
\end{proof}

\smallskip
We are ready to prove Theorem~\ref{thm:two-layers:group-birth:FPTAS}.

\begin{proof}[Proof of Theorem~\ref{thm:two-layers:group-birth:FPTAS}.]
We start by proving the correctness of the algorithm. In the first {\bf while} loop of Lines \ref{line:1} -- \ref{line:4}, Lemmas~\ref{lem:sym} and~\ref{lem:usquare:odd} together ensure that, for the initial interval $X=[0,1]$, the intersections of $\bar f(\bigL(\min X)) \cup \bar f(\bigL(\max X))$ with $L^{\delta}$ and with $R^{\delta}$ are both nonempty. By Lemma~\ref{lem:usquare:odd} applied on $\bar f(\bigL(\med X))$, this property is kept for each iteration.
 This means that one can find $S_1$ and $S_2$ satisfying the conditions in Lines~\ref{line:5} and \ref{line:6}. If $S_1=S_2$, Algorithm~\ref{alg:FPTAS} clearly computes a desired square by Lemma~\ref{lem:square:LR}.
Otherwise, Lemmas~\ref{lem:square:LR} and~\ref{lem:secondwhile} applied at the end of the algorithm ensure that the rectangle $X \times Y$ is a basic square whose image by ${\bar f}$ contains $\omega^{\delta}$ and therefore $\omega$ by Lemma~\ref{lem:perturb-tri}. By Lemma~\ref{lem:approximate-basicrectangle}, any vertex of such a square induces a desired approximate division. 

It remains to analyze the running time of Algorithm~\ref{alg:FPTAS}.
The \textbf{while} loop in Lines~\ref{line:1} -- \ref{line:4} makes $O(\log N)$ iterations and Line~\ref{line:2} can be implemented in time $O(n\log N)$ by Lemma~\ref{lem:degree-vertical-line}. Thus, Lines~\ref{line:1} -- \ref{line:4} can be implemented in $O(n\log^2 N)$. 
Again, by Lemma~\ref{lem:degree-vertical-line}, Lines~\ref{line:5} and \ref{line:6} can be implemented in time $O(n\log N)$. %
Line \ref{line:10} can be implemented in $O(n)$ time, because this can be checked by computing ${\bar f}(\vv)$ of each vertex $\vv$ that belongs to an edge incident to the corner vertices of $X \times [\min Y, \med Y]$. Further, the \textbf{while} loop in Lines~\ref{line:firstwhile:second} -- \ref{line:endwhile:second} makes $O(\log N)$ iterations. 
Thus, the algorithm computes a basic square $X \times Y$ of size $\frac 1N \times \frac 1N$ in $O(n\log^2 N)$ time.
\end{proof}

\subsection{FPTAS for a one-layered cake}
In this section, we provide a proof sketch for Theorem~\ref{thm:one-layer:group-birth:FPTAS}. 
Again, we assume that the $n$-th agent is the birthday agent. 
We keep the same tools as in Section~\ref{subsec:preliminaries}: the unit square, the triangulation $\T$, and the function ${\bar f}$. We also take the same $N$ as in Section~\ref{sec:two:FPTAS}. With monotonicity of preferences, the one-layered case is equivalent to the two-layered problem where the bottom layer is empty, which means that all agents weakly prefer the top layer to the bottom. This will enable us to establish that any point $(a_1,a_2,a_3) \in \Delta^2$ lies in the image of $\bar f$, i.e., $\bar f$ is surjective. This surjectivity will allow eventually, via standard arguments, to show that any choice for $k_1,k_2,k_3$ is achievable. 

\begin{proof}[Proof sketch of Theorem~\ref{thm:one-layer:group-birth:FPTAS}.] Let $\omega=\frac{1}{n-1}(k_1-\frac{1}{3},k_2-\frac{1}{3},k_3-\frac{1}{3}) \in \Delta^2$. We define $B=\{\, (z_1,z_2,z_3) \mid z_1 = \omega_1, (z_1,z_2,z_3) \in \Delta^2\,\}$. Again, $L$ and $R$ are obtained by splitting $B$ by $\omega$. 
Like Lemmas~\ref{lem:perturb-rel} and~\ref{lem:perturb-tri}, we choose a sufficiently small $\delta>0$ to obtain a perturbed version $B^{\delta}$ of $B$ so that no vertex in $\T$ has its image by ${\bar f}$ lying on $B^{\delta}$ and so that if the middle point $\omega^{\delta}$ of $B^{\delta}$ belongs to $\bar f(\tau)$ for some triangle $\tau$ in $\T$, then $\omega$ also belongs to $\bar f(\tau)$. Define $L^{\delta}$ and $R^{\delta}$, analogously as in Section~\ref{subsec:two:existence}, by splitting $B^{\delta}$ by $\omega^{\delta}$.

We have $\bar f_1(x,1)=1$ for every $x=\frac{k}{N}$ with $k \in \{0,1,\ldots,N\}$ since every agent in $[n-1]$ answers $\mathcal{A}_1(x,1)$ as a preferred piece. We have $\bar f_1(x,0)=0$ for every $x=\frac{k}{N}$ with $k \in \{0,1,\ldots,N\}$; indeed, such a piece is never chosen by the fact that $\mathcal{A}_1(x,0)$ is of zero-length and by the tie-breaking rule of Section~\ref{subsec:preliminaries}. (When the valuation functions are monotone and Lipschitz, the hungry assumption is satisfied; see Section~\ref{sec:prelim}.) Similarly, we have $\bar f_2(0,y)=0$ (resp. $\bar f_3(1,y)=0$) for every $y=\frac{k}{N}$ with $k \in \{0,1,\ldots,N\}$. (From this, we could already conclude via standard techniques from topology that $\bar f$ is surjective, but we provide an algorithmic proof in the sequel.)
 
As we have just seen, ${\bar f}(x,1)$ and ${\bar f}(x,0)$ lie above and below $B^{\delta}$ for each $x=\frac{k}{N}$ with $k \in \{0,1,\ldots,N\}$, respectively. This implies, together with the vertical monotonicity of ${\bar f}$ (Lemma~\ref{lem:monotone}), that ${\bar f}(\bigL(x))$ intersects with $B^{\delta}$ at a single point for every $x=\frac{k}{N}$ with $k \in \{0,1,\ldots,N\}$. Thus, we obtain a statement corresponding to Lemma~\ref{lem:usquare:odd}. 

Also the images ${\bar f}(\bigL(0))$ and ${\bar f}(\bigL(1))$ of the sides are symmetric with respect to $z_2=z_3$ (Lemma~\ref{lem:sym}) 
because ${\bar f}(\bigL(0))$ (resp. ${\bar f}(\bigL(1))$) coincides with the boundaries of $\Delta^2$ where the third (resp. the second) coordinate is $0$. It is easy to see that Lemmas~\ref{lem:square} and~\ref{lem:square:LR} remain true for the new $B^{\delta}$. Algorithm~\ref{alg:FPTAS} still applies, Lemmas~\ref{lem:degree-vertical-line} and~\ref{lem:secondwhile} remain true, and the proof of Theorem~\ref{thm:two-layers:group-birth:FPTAS} is still valid. Thus, we are able to find a basic square whose image by $\bar f$ contains $\omega$ in $O(n\log^2 N)$.


Let $(x^*,y^*)$ be a point in this square mapped by $\bar f$ on $\omega$. To see that such a square induces a desired multi-division, define $w_{ij}={\bar f}^{(i)}_{j}(x^*,y^*)$ for every non-birthday agent $i \in [n-1]$ and a layered piece $j \in \{1,2,3\}$. By definition of ${\bar f}$ and the fact that ${\bar f}_j(x^*,y^*) = \omega_j= \frac 1 {n-1}(k_j - \frac 1 3)$ for all $j=1,2,3$, we have 
$$
\sum_{i=1}^{n-1} w_{ij} = \sum_{i=1}^{n-1}{\bar f}^{(i)}_j(x^*,y^*)=(n-1){\bar f}_j(x^*,y^*) = k_j-\frac {1}{3}
$$
for all $j=1,2,3$. Further, $\sum_{j=1}^3 w_{ij}=\sum_{j=1}^3 {\bar f}^{(i)}_j(x^*,y^*) = 1$, for all $i \in [n-1]$ because ${\bar f}^{(i)}$ has its image in $\Delta^2$. 

Now, consider the bipartite graph $H=([n-1],\{1,2,3\};E)$ where the edge $ij$ exists precisely when ${\bar f}^{(i)}_{j}(x^*,y^*)>0$. 
Applying Lemma~\ref{lem:TUM} to $H$ with $a_j=k_j -\frac{1}{3}$ for $j \in \{1,2,3\}$, for every $j^* \in \{1,2,3\}$, there exists an assignment $\pi_{j^*} \colon [n] \rightarrow \{1,2,3\}$ such that 
\begin{itemize}
    \item $\pi_{j^*}(n)=j^*$, 
    \item for each $i \in [n-1]$, the vertex $\pi_{j^*}(i)$ is a neighbor of $i$ in $H$, 
    \item $|\pi^{-1}_{j^*}(n)| = k_{j^*}$, and
    \item for each $j \in \{1,2,3\}$, we have $|\pi^{-1}_{j^*}(j)| \in \{k_j -1, k_j \}$.
\end{itemize}
The last two conditions imply that $|\pi^{-1}_{j^*}(j)| = k_j$ for each $j \in \{1,2,3\}$ since $\sum_{j \in \{1,2,3\}}|\pi^{-1}_{j^*}(j)|=n=\sum_{j \in \{1,2,3\}} k_j$. Then, similar to the proof of Lemma~\ref{lem:approximate-basicrectangle}, for any vertex $\vv \in V(\T)$ that belongs to the basic square containing $(x^*,y^*)$, we have a multi-division $\calA(\vv)$ such that for every $j^* \in \{1,2,3\}$, it admits a desired assignment $\pi_{j^*}$ where   
\begin{itemize}
    \item $\pi_{j^*}(n)=j^*$,
    \item for each $i\in [n-1]$, $v_i(\calA_{\pi_{j^*}(i)}(\vv))+\varepsilon \ge \max_{j \in \{1,2,3\}}v_i(\calA_{j}(\vv))$, and
    \item for each $j\in \{1,2,3\}$, $|\pi^{-1}_{j^*}(j)|=k_j$.~\qedhere
\end{itemize} 
\end{proof}


\section{Proportional multi-division}\label{sec:prop}
In this section, we discuss proportionality. Consider an instance of the multi-layered cake-cutting problem with $m$ layers and $n$ agents, $m\le n$. Let $q$ be an integer such that $m\le q\le n$. A multi-division into $q$ layered pieces is {\em proportional} if there exists a surjective assignment $\pi$ of the agents to the pieces such that each agent $i$ is assigned a layered piece of value at least her {\em proportional fair share}, that is, $\frac{1}{q}$ of $i$'s valuation for the entire layered cake.

A valuation function $v_i$ is {\em additive} if for any pair of layered pieces $\calL$ and $\calL'$ whose interiors are disjoint, we have $v_i(\calL \cup \calL')=v_i(\calL)+v_i(\calL')$, where the union $\calL \cup \calL'$ of two layered pieces $\calL=(L_{\ell})_{\ell \in [m]}$ and $\calL'=(L'_{\ell})_{\ell \in [m]}$ is defined to be $(L_{\ell}  \cup L'_{\ell})_{\ell \in [m]}$. Under additive valuations, it is easy to see that envy-freeness implies proportionality.


\begin{proposition}\label{prop:EFprop}
Consider an instance of the multi-layered cake-cutting problem with $m$ layers and $n$ agents, $m\le n$, with additive valuations. Any envy-free multi-division into $q$ layered pieces with the corresponding assignment is proportional. 
\end{proposition}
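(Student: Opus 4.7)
The plan is to carry out the standard averaging argument adapted to the multi-layered, group setting. Fix an envy-free multi-division $\calA=(\calA_1,\ldots,\calA_q)$ and a surjective assignment $\pi\colon[n]\rightarrow[q]$ witnessing envy-freeness. Under additive valuations, the induced choice function returns precisely the maximizers of $v_i$, so $\calA_{\pi(i)}\in c_i(\calA)$ translates into the inequality $v_i(\calA_{\pi(i)})\ge v_i(\calA_j)$ for every $i\in[n]$ and every $j\in[q]$.

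The first step is to observe that additivity extends from pairs to the whole partition. By definition, a multi-division is a $q$-tuple of layered pieces whose interiors are pairwise disjoint (endpoints may be shared but are of measure zero in every layer), and whose union covers the entire $m$-layered cake. An immediate induction on $q$ using the two-piece additivity axiom then yields
\[
\sum_{j=1}^{q} v_i(\calA_j) \;=\; v_i(\calL^{\mathrm{full}}),
\]
where $\calL^{\mathrm{full}}$ denotes the entire $m$-layered cake (viewed as a layered piece).

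The second step is the averaging itself. Summing the envy-freeness inequality $v_i(\calA_{\pi(i)})\ge v_i(\calA_j)$ over all $j\in[q]$ gives $q\cdot v_i(\calA_{\pi(i)})\ge \sum_{j=1}^{q} v_i(\calA_j) = v_i(\calL^{\mathrm{full}})$, i.e., $v_i(\calA_{\pi(i)})\ge \tfrac{1}{q}v_i(\calL^{\mathrm{full}})$. Since this holds for every agent $i\in[n]$, the assignment $\pi$ is surjective by hypothesis, and each agent thus receives at least her proportional fair share, the multi-division is proportional in the sense defined above.

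There is no real obstacle here: the only subtlety worth flagging is the justification that additivity pairwise (as stated in the definition) implies additivity over the whole partition; this is handled by the simple induction mentioned above, together with the observation that the shared endpoints do not spoil the disjoint-interior assumption required at each inductive step.
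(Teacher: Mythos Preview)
Your proof is correct and follows essentially the same averaging argument as the paper: from envy-freeness you get $v_i(\calA_{\pi(i)})\ge v_i(\calA_j)$ for every $j$, sum over $j$, and use additivity to conclude proportionality. Your version is slightly more explicit about extending pairwise additivity to the full partition, but the approach is identical.
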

\begin{proof}
Take any envy-free multi-division $\mathcal{A}=(\mathcal{A}_1,\mathcal{A}_2,\ldots,\mathcal{A}_q)$ and the corresponding assignment $\pi\colon[n] \rightarrow [q]$. Consider any agent $i$. By envy-freeness, $v_i(\calA_{\pi(i)}) \ge v_i(\calA_{j})$ for all $j \in [q]$. Summing these inequalities, we have $q \cdot v_i(\mathcal{A}_{\pi(i)}) \ge \sum^q_{j=1}v_i(\mathcal{A}_j)$, which by additivity, implies that the valuation of the piece assigned to $i$ is greater or equal to the proportional fair share for $i$. 
\end{proof}

Proposition \ref{prop:EFprop} and Theorem \ref{thm:group-birth} imply that there exists a proportional multi-division into $q$ layered pieces that is feasible and contiguous and satisfies the property that the number of agents assigned to each layered piece differs by at most one when $q$ is a prime power, $m \le q \le n$, and agents have additive valuations. It turns out that such a division exists even when $q$ is not a prime power.

\begin{theorem}\label{thm:proport}
Consider an instance of the multi-layered cake-cutting problem with $m$ layers and $n$ agents, $m\le n$, with additive valuations. Let $q$ be an integer such that $m\le q\le n$. Then there exist a feasible and contiguous multi-division into $q$ layered pieces and an assignment of the agents to the layered pieces so that
\begin{itemize}
    \item each agent is assigned to a layered piece of value at least her proportional fair share.
    \item the number of agents assigned to each layered piece differs by at most one.
\end{itemize}
\end{theorem}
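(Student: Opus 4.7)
I will prove Theorem~\ref{thm:proport} by strong induction on $q$. The base case is $q$ a prime power: applying Theorem~\ref{thm:group-birth} with any fixed agent as the birthday agent (who then takes the piece she is assigned) yields a feasible and contiguous multi-division into $q$ layered pieces together with a surjective assignment of all $n$ agents in which each piece is received by $\lfloor n/q \rfloor$ or $\lceil n/q \rceil$ agents and every agent obtains one of her preferred pieces. Proposition~\ref{prop:EFprop} then guarantees this envy-free division is proportional.

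For the inductive step when $q$ is not a prime power and $q \ge 2m$, my plan is a moving long-knife reduction in the style of the Even--Paz protocol. Set $q_1 = m$ and $q_2 = q - m \ge m$; note that $q_1$ need not be a prime power since the induction hypothesis applies to any $q' < q$ with $m \le q' \le n'$. Pick integers $n_1, n_2 \ge 1$ summing to $n$ with $n_1 \in \{\lfloor n q_1/q \rfloor, \lceil n q_1/q \rceil\}$ and $n_2 \in \{\lfloor n q_2/q \rfloor, \lceil n q_2/q \rceil\}$; such a choice always exists, and will make the near-balanced assignment sizes of the two sub-problems combine to near-balanced sizes in the full instance. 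By additivity and continuity, $v_i$ of the left sub-cake is a continuous nondecreasing function of the cut position $t$, going from $0$ at $t=0$ to $V_i$ at $t=1$, so there is a unique $t_i$ at which it equals $q_1 V_i / q$. Sort the agents by $t_i$, let $A$ be the $n_1$ agents with smallest $t_i$ and $B$ the remaining $n_2$, and pick any $t^* \in [t_{(n_1)}, t_{(n_1+1)}]$. Then every agent in $A$ values the $m$-layered left sub-cake $[0, t^*]$ at least $q_1 V_i / q$, and every agent in $B$ values the right sub-cake at least $q_2 V_j / q$; applying the induction hypothesis to each sub-cake and concatenating the two sub-divisions along $[0, t^*]$ and $[t^*, 1]$ produces a feasible and contiguous multi-division of the original cake in which an agent of $A$ receives value at least $v_i([0, t^*])/q_1 \ge (q_1 V_i/q)/q_1 = V_i/q$, and symmetrically for $B$.

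The principal obstacle is the edge case in which $q$ is not a prime power and $m \le q < 2m$, since then no split $q = q_1 + q_2$ with both $q_1, q_2 \ge m$ is possible. My proposed approach here is a \emph{horizontal} reduction decreasing both $m$ and $q$ by one: find a layer $\ell^*$ and a subset $A_1$ of $\lfloor n/q \rfloor$ or $\lceil n/q \rceil$ agents such that every agent in $A_1$ values $\ell^*$ at least $V_i/q$ and every agent outside $A_1$ values it at most $V_i/q$, then give the full layer $\ell^*$ as a single layered piece to $A_1$ and recurse on the $(m-1)$-layered sub-cake with $n - |A_1|$ agents and $q - 1$ pieces. Existence of such a pair $(\ell^*, A_1)$ is the key technical challenge: a pigeonhole argument guarantees a layer that is ``good'' for at least $\lceil n/m \rceil \ge \lceil n/q \rceil$ agents, but matching the target group size exactly may require a short-knife refinement giving away only a suitable prefix $[0, t^*]$ of $\ell^*$ whose length is tuned by a continuous argument. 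Carrying this refinement through so that feasibility, contiguity, and the near-balanced assignment all propagate through the recursion is the most delicate part of the argument.
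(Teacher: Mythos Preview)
Your Even--Paz style vertical split for the case $q \ge 2m$ is sound (the near-balanced sizes do propagate, since $\lfloor n_i/q_i\rfloor \ge \lfloor n/q\rfloor$ and $\lceil n_i/q_i\rceil \le \lceil n/q\rceil$ follow from $n_i \in \{\lfloor nq_i/q\rfloor,\lceil nq_i/q\rceil\}$). But that case ultimately recurses on $q_1 = m$, which lands you squarely in the edge case $m \le q < 2m$, so the whole argument hinges on that edge case---and your plan there does not work.

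The ``peel off a full layer'' step requires a layer $\ell^*$ and a set $A_1$ of size $\lfloor n/q\rfloor$ or $\lceil n/q\rceil$ with every agent in $A_1$ valuing $\ell^*$ at least $V_i/q$ and every agent outside $A_1$ valuing it at most $V_i/q$; the second condition is forced, since after removing $\ell^*$ the remaining agents can only be guaranteed a $\tfrac{1}{q-1}$ fraction of the remaining cake. This pair need not exist. Take $m = q = 6$, $n = 6$, and let every agent value layer~1 at $V_i/2$ and each of layers~2--6 at $V_i/10$. No layer works: layer~1 is valued strictly above $V_i/6$ by everyone, so no outside agent can exist; layers~2--6 are valued strictly below $V_i/6$ by everyone, so $A_1$ must be empty. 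Your proposed short-knife refinement (give only a prefix $[0,t^*]$ of $\ell^*$) does not rescue this: the remainder $[t^*,1]$ of layer $\ell^*$ must still be allocated, so you no longer have an $(m-1)$-layer sub-instance, and if you keep it as an $m$-layer sub-instance with one shortened layer, the induction hypothesis (stated for layers that are all $[0,1]$) does not apply and, more seriously, feasibility constraints between the shortened layer and the others are distorted under any rescaling.

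The paper sidesteps this obstruction with a different recursion: pick a prime power $q'$ dividing $q$, \emph{merge} the $m$ layers into $q'$ super-layers, apply Theorem~\ref{thm:group-birth} to this $q'$-layer cake to get $q'$ groups, and then for each group \emph{concatenate} its $q'$ allocated intervals (coming from distinct super-layers, hence non-overlapping) into a new $\lceil m/q'\rceil$-layer cake on which one recurses with $q/q'$ pieces. The merge/concatenate trick is what lets the argument stay within the class of standard multi-layered cakes at every step; your vertical-cut/horizontal-peel decomposition does not.
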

\begin{proof}
Given an instance of the multi-layered cake-cutting problem with $m$ layers and $n$ agents, $m\le n$, recursively repeat the following steps until $q$ reduces to a prime number:
\begin{itemize}[leftmargin=40pt, rightmargin=2pt]
    \item[\emph{Step 1.}] Choose any prime power $q'$ in the prime decomposition of $q$. Construct a new cake $C'$ of $q'$ layers. Each layer $\ell'$ of the new cake is obtained by merging $c_{\ell'}$ consecutive layers of the original instance where $c_{\ell'} \in \{\lfloor m/{q'}\rfloor, \lceil m/{q'}\rceil\}$ for each $\ell' \in [q']$ and $\sum^{q'}_{\ell'=1}c_{\ell'}=m$. Define a new additive valuation of each agent $i \in [n]$ for the new instance. Namely, for a layered piece $\mathcal{L'}=(L'_{\ell'})_{\ell' \in [q']}$ of the new instance, 
    $i$ assigns to each $L'_{\ell'}$ the sum of values of the pieces that correspond to $L'_{\ell'}$ in $(\sum^{\ell'-1}_{k=1}c_{k}+1)$-th, $\ldots$, $(\sum^{\ell'}_{k=1}c_{k})$-th layers of the original instance; and $i$ assigns the sum of such values to $\mathcal{L'}$. 

    \item[\emph{Step 2.}] Apply Theorem~\ref{thm:group-birth} to this cake $C'$ with the whole set of agents, the number of layers being $q'$, and the number of groups being $q'$, and obtain an envy-free multi-division $\mathcal{A}'=(\mathcal{A}'_1,\mathcal{A}'_2,\ldots,\mathcal{A}'_{q'})$ among $q'$ groups of agents and an assignment $\pi' \colon [n] \rightarrow [q']$ satisfying the conditions in Theorem~\ref{thm:group-birth}. See Figure~\ref{fig:prop} for an illustration of the merge and division $\mathcal{A}'$ when $m=5$, $n=13$, and $q=6$.

    \item[\emph{Step 3.}] 
    For each $j' \in [q']$, create a new multi-layered cake $C_{j'}$ by concatenating the allocated layered pieces in $\mathcal{A}'_{j'}=(L'_{\ell'})_{\ell' \in [q']}$ into a multi-layered cake of $\lceil m/q'\rceil$ layers: 
    Each $\ell$-th layer of the multi-layered cake $C_{j'}$ is obtained by concatenating the $\ell$-th layers of the original pieces corresponding to each $L'_{\ell'}$ for $\ell' \in [q']$; 
    in case $c_{\ell'}<\ell$, complete $L'_{\ell'}$ with empty layers with valuation $0$ for all agents so that we can still consider that all layers of $C_{j'}$ are $[0,1]$ intervals. See Figure~\ref{fig:prop} for an illustration of the concatenation. 
    For each $j' \in [q']$, apply induction (i.e., Theorem~\ref{thm:proport}) on $C_{j'}$ with the agent set ${\pi'}^{-1}(j')$ and the number of groups equal to $q/q'$. By collecting the multi-divisions and assignments for all $j' \in [q']$, obtain the final multi-division $\mathcal{A}$ and the final assignment $\pi \colon [n] \rightarrow [q]$ for the original instance. 
\end{itemize}
 
We show that $\mathcal{A}$ and $\pi$ form a desired solution by an induction on the size of the prime decomposition of $q$. If $q$ is a prime power, then by Proposition \ref{prop:EFprop}, this is the case. Suppose that $q=d  q'$ where $q'$ is a prime power in the prime decomposition of $q$. 
Clearly, $\mathcal{A}$ is contiguous and feasible by the induction hypothesis and by the fact that a new multi-layered cake $C_{j'}$ in Step $3$ is obtained by concatenating contiguous pieces of $m$ distinct layers of the original instance. 
Further, since $\mathcal{A}'$ obtained in Step 2 is envy-free, every agent $i$ has value at least $\frac{1}{q'} \alpha_i$ for the assigned layered piece $\mathcal{A}'_{\pi'(i)}$ in $C'$, where $\alpha_i$ denotes $i$'s value for the entire original layered cake (Proposition~\ref{prop:EFprop}); also, by induction hypothesis, she obtains a piece of value at least $\frac{1}{d}$ of $i$'s valuation for $\mathcal{A}'_{\pi'(i)}$. Thus, the piece $\mathcal{A}_{\pi(i)}$ assigned to $i$ has a value at least her proportional fair share. 
Finally, we show that the number of agents assigned to each layered piece $\mathcal{A}_{j}$ is $\lfloor n/q \rfloor$ or $\lceil n/q \rceil$. 
By Theorem~\ref{thm:group-birth}, $|{\pi'}^{-1}(j')| \in \{\lfloor {n}/{q'} \rfloor, \lceil {n}/{q'} \rceil\}$ for each $j' \in [q']$. By the induction hypothesis, the number of agents $|\pi^{-1}(j)|$ assigned to a layered piece $\mathcal{A}_{j}$ is of the form 
$\lfloor {|{\pi'}^{-1}(j')|}/{d} \rfloor$ or $\lceil {|{\pi'}^{-1}(j')|}/{d} \rceil\}$ for some $j'\in [q']$. Thus, we have
\[
\Bigl \lfloor \frac 1d \Bigl \lfloor \frac n {q'} \Bigr \rfloor \Bigr \rfloor \le |\pi^{-1}(j)| \le \Bigl \lceil \frac 1d \Bigl \lceil \frac {n}{q'} \Bigr \rceil \Bigr  \rceil \, .
\]
Now, since $q=d q'$ and since
\[
\Bigl \lfloor \frac 1c \Bigl \lfloor \frac ab \Bigr \rfloor \Bigr \rfloor = \Bigl \lfloor \frac{a}{bc} \Bigr\rfloor
~~\mbox{and}~~
\Bigl \lceil \frac 1c \Bigl \lceil \frac ab \Bigr \rceil \Bigr  \rceil = \Bigl \lceil \frac{a}{bc} \Bigr \rceil
\]
hold for all integers $a,b,c$, we have $|{\pi}^{-1}(j)| \in \{\lfloor {n}/{q} \rfloor, \lceil {n}/{q} \rceil\}$ for each $j \in [q]$.
\end{proof}


\begin{figure}[hbt]
\centering
\begin{tikzpicture}[scale=0.5, transform shape]
        \begin{scope}[xshift=-14cm,yshift=0.5cm]
            \draw[thick] (2,0) rectangle (8,2);
            \draw[dotted,thick] (2,1) -- (8,1); 
            
            \draw[thick] (2,-2) rectangle (8,0);
            \draw[dotted,thick] (2,-1) -- (8,-1); 
            
            \draw[thick] (2,-3) rectangle (8,-2); 
            \draw[ultra thick,->] (8.5,-0.5) -- (9.5,-0.5);

            \node[draw] at (5,-4) {\Large 1. Merge layers};
        \end{scope}   

        \begin{scope}[xshift=-6cm,yshift=0.5cm]
            \draw[thick] (2,0) rectangle (4,2);
            \draw[thick,fill=red!30] (4,0) rectangle (6,2);
            \draw[thick,fill=blue!30] (6,0) rectangle (8,2);
            \draw[dotted,thick] (2,1) -- (8,1); 
            
            \draw[thick,fill=red!30] (2,-2) rectangle (4,0);
            \draw[thick,fill=blue!30] (4,-2) rectangle (6,0);
            \draw[thick] (6,-2) rectangle (8,0);
            \draw[dotted,thick] (2,-1) -- (8,-1); 
            
            \draw[thick,fill=blue!30] (2,-3) rectangle (4,-2);
            \draw[thick] (4,-3) rectangle (6,-2);
            \draw[thick,fill=red!30] (6,-3) rectangle (8,-2);  

            \node[draw] at (5,-4) {\Large 2. Obtain an envy-free division};
        \end{scope} 

        \begin{scope}[xshift=2cm,yshift=0.5cm]
             \draw[thick,fill=red!30] (4,0) rectangle (6,2);
            \draw[dotted,thick] (4,1) -- (6,1); 
            
            \draw[thick,fill=red!30] (2,-2) rectangle (4,0);
            \draw[dotted,thick] (2,-1) -- (4,-1); 
        
            \draw[thick,fill=red!30] (6,-3) rectangle (8,-2); 
            \draw[ultra thick,->] (8,-0.5) -- (9,-0.5);

            \node at (5,1) {\Large $L'_1$};
            \node at (3,-1) {\Large $L'_2$};
            \node at (7,-2.5) {\Large $L'_3$};
            
            \node[draw] at (9,-4) {\Large 3. Concatenate each allocated piece and apply induction};
        \end{scope}
         \begin{scope}[xshift=10cm,yshift=0.5cm]

            \path[fill=red!30] (2,-1) rectangle (4,0);
            \path[fill=red!30] (4,-1) rectangle (5,0);
            \path[fill=red!30] (5,-2) rectangle (6,-1);

            \draw[thick] (2,-2) rectangle (4,0);
            \draw[thick] (4,-2) rectangle (6,0);
            \draw[thick] (6,-1) rectangle (8,0); 
            \draw[dotted,thick] (2,-1) -- (4,-1); 
            \draw[dotted,thick] (4,-1) -- (6,-1);

            \draw[thick,dotted] (5,0.5) -- (5,-2.5);
        \end{scope}
\end{tikzpicture}
\caption{Illustration of the proof of Theorem~\ref{thm:proport} when $m=5$, $n=13$, and $q=6~(=2 \cdot 3)$. Here, we take $q'=3$.}
\label{fig:prop}
\end{figure}
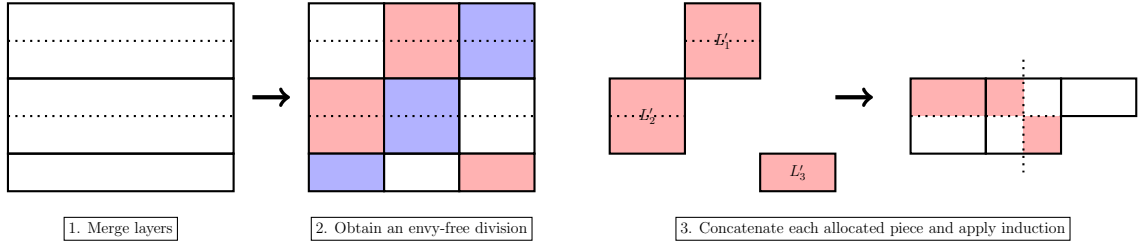

\section{Concluding remarks}\label{sec:concluding}

Theorem~\ref{thm:group-birth} states the existence of an envy-free multi-division of a multi-layered cake using $q-1$ long knives for $n$ agents when $q$ is a prime power. The proof relies crucially on Volovikov's theorem, a powerful result from equivariant topology. We discuss below a further implication of our result and some limitations of our approach.


\paragraph{Envy-free cake division with pieces of equal size.} Theorem~\ref{thm:group-birth} has the following interesting consequence to the standard cake-cutting problem. It implies in particular that when $q$ is a prime power there always exists an envy-free division with pieces of equal size, meaning that each agent receives a piece of an equal total length. This may be relevant when agents' bundles need to fulfill some restrictions, e.g., the employees of companies may have maximum weekly working hours.

In the statement, we use the expression ``group-birthday'' to refer to the properties stated in Theorem~\ref{thm:group-birth} for the assignment.

\begin{proposition}
Consider an instance of the one-layered cake-cutting problem with $n$ agents, with closed preferences. Let $q$ be a nonnegative integer nongreater than $n$. If $q$ is a prime power, then there exists an envy-free division into $q$ pieces with a ``group-birthday'' assignment such that
\begin{itemize}
    \item each piece is formed by $q$ subintervals.
    \item the lengths of the subintervals are the same for each piece.
\end{itemize}
\end{proposition}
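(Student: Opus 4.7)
The plan is to reduce the proposition to Theorem~\ref{thm:group-birth} applied to an auxiliary $q$-layered cake. I would start by splitting the original cake $[0,1]$ into $q$ equal segments $S_\ell = [(\ell-1)/q, \ell/q]$, identifying each segment with a full layer $[0,1]$ via the affine bijection $\phi_\ell(t) = (\ell-1)/q + t/q$. Every multi-division $\mathcal{A} = (\mathcal{A}_1,\ldots,\mathcal{A}_q)$ of the resulting $q$-layered cake then corresponds to a division $A = (A_1,\ldots,A_q)$ of the original cake by $A_j = \bigcup_{\ell \in [q]} \phi_\ell(L_{j,\ell})$, where $L_{j,\ell}$ denotes the $\ell$-th layer piece of $\mathcal{A}_j$. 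Each closed choice function $c_i$ on divisions of the original cake lifts naturally to a choice function $c'_i(\mathcal{A}) = \{\mathcal{A}_j : A_j \in c_i(A)\}$ on multi-divisions; the routine checks for permutation-invariance and closedness use only that each $\phi_\ell$ is an affine bijection with constant Jacobian $1/q$, so $\mu(A_j\triangle A'_j)=\frac{1}{q}\mu(\mathcal{A}_j\triangle\mathcal{A}'_j)$.

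Next, I would apply Theorem~\ref{thm:group-birth} to this $q$-layered cake with $m = q$ (valid since $q \le n$ and $q$ is a prime power), yielding a feasible and contiguous multi-division $\mathcal{A}^*$ obtained with $q-1$ long knives and satisfying the group-birthday property. Translating back, the corresponding division $A^*$ of the original cake inherits envy-freeness and the group-birthday property under the original $c_i$.

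The main step is to check the structural bullets. With $q-1$ long knives cutting every layer at common positions $0 = x_0 \le x_1 \le \cdots \le x_{q-1} \le x_q = 1$, each layer is partitioned into $q$ slots of lengths $w_j = x_j - x_{j-1}$, and feasibility forces every piece $\mathcal{A}^*_g$ to contain exactly one slot per layer. Pulled back through the $\phi_\ell$'s, $A^*_g$ is a union of $q$ subintervals, one per segment $S_\ell$, so the first bullet is immediate. The second bullet is the only nonroutine part, and what I expect to be the main thing to unpack: it rests on the Latin-square structure of the bundle-name encoding from Section~\ref{subsec:encode-chess}. Since $m = q = |G|$, the injective map $h \colon [m] \to G$ is automatically a bijection, so for each fixed $g$ the indices $j_\ell(g) = \rho^{-1}(\eta^{-1}(g - h(\ell)))$ form a permutation of $[q]$ as $\ell$ ranges over $[q]$. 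Consequently every piece $A^*_g$ has subintervals of lengths $w_1/q, w_2/q, \ldots, w_q/q$ (in some order), which establishes the second bullet.
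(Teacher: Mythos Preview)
Your approach is correct and matches the paper's: split $[0,1]$ into $q$ equal segments, stack them as a $q$-layered cake, apply Theorem~\ref{thm:group-birth} with $m=q$, and read off the two structural properties from the long-knife encoding of Section~\ref{subsec:encode-chess}. The paper's proof says exactly this in two sentences, deferring the structural check to ``a straightforward analysis of the proof of Theorem~\ref{thm:group-birth}''; you have spelled out that analysis, and in particular your observation that $h$ is a bijection when $m=q$, making $\ell\mapsto j_\ell(g)=\rho^{-1}(\eta^{-1}(g-h(\ell)))$ a permutation of $[q]$, is precisely the Latin-square fact the paper has in mind.

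One small slip: the claim ``feasibility forces every piece $\mathcal{A}^*_g$ to contain exactly one slot per layer'' is not true from feasibility, contiguity, and completeness alone (a bundle could take two adjacent columns in one layer and none in another without overlapping). What actually gives one slot per layer is the bundle-name encoding itself---the same formula $j_\ell(g)$ you use for the second bullet also shows that each bundle receives exactly one cell in each layer. Just cite the encoding for the first bullet as well and the argument is clean.
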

\begin{proof}
We divide the one-layered cake into $q$ intervals of length $1/q$, and arrange them into $q$ layers to create an instance of a multi-layered cake-cutting. By a straightforward analysis of the proof of Theorem~\ref{thm:group-birth}, we see that the envy-free multi-division satisfies the stated properties when converted to the original single-layered cake-cutting problem.
~
\end{proof}

 We note that our result shares a similar flavor as Theorem 6.14 of Joji{\'c} et al. \cite{jojic2019splitting}, which ensures the existence of an envy-free division such that each agent gets a piece of the same length under various measures.


\paragraph{Limitation of the approach based on Volovikov's theorem.}
For a number $q$ not equal to a prime power, it is known that Volovikov's theorem does not hold. Further, for such a $q$, as already noted in the introduction,
there is an example of a one-layered cake-cutting instance with non-hungry choice functions for which no envy-free division among $q$ agents exists~\cite{Avvakumov_2020,panina2021envyfree}. These examples show some limitations of the approach based on Volovikov's theorem, but do not prohibit the existence of an envy-free division in the valuation function model. 
Indeed, Avvakumov and Karasev \cite{avvakumov2020equipartition} showed that an envy-free division among any number of agents still exists when agents have identical valuations that are not necessarily hungry; it does not seem that this result can be obtained from Volovikov's theorem. See also the last paragraph of Section 1 in \cite{avvakumov2020equipartition}. 
In the context of multi-layered cake-cutting, the examples of \cite{Avvakumov_2020,panina2021envyfree} imply that an envy-free multi-division using $q-1$ long knives may not exist under the choice function model. However, it is still an open question whether or not a counterexample exists for some natural valuation functions. 

\paragraph{Computational complexity.}
While we prove the existence of an envy-free multi-division using $q-1$ long knives, we do not settle the precise complexity class to which the problem belongs. On a related note, the computational problem of the BSS theorem, which is a special case of the Volovikov's theorem, has been recently shown to be PPA-$p$-complete~\cite{FHSZ_soda2021}. 
Due to the relation that PPA-$p$ $=$ PPA-$p^k$ (see Proposition $2.2$ of~\cite{FHSZ_soda2021}), it would be quite surprising if our problem belongs to PPA-$p^k$, which would then imply that our existential result can be proven via the BSS theorem, a less powerful statement than that by Volovikov. Hence, we expect that a new complexity class, encompassing the Volovikov theorem, should probably be introduced, though the challenge lies in the fact that there is no constructive proof of Volovikov's theorem. 

\section*{Acknowledgments}
This work was partially supported by JST PRESTO Grant Numbers JPMJPR20C1. 

\bibliographystyle{plain}


\newpage
\appendix
\section{Some basic notions from combinatorial topology}\label{sec:basics}
\subsection{Abstract and geometric simplicial complexes} Given a finite set $V$, a collection $\K$ of subsets of $V$ is an {\em abstract simplicial complex} if whenever $\sigma\in\K$ and $\tau\subseteq \sigma$, then $\tau\in\K$. The elements of $\K$ are the {\em simplices} of the abstract simplicial complex. The elements of $V$ occurring in at least one simplex of $\K$ are the {\em vertices} of $\K$ and their set is denoted by $V(\K)$.

A finite collection $\Gamma$ of geometric simplices (convex hulls of affinely independent points) is a {\em geometric simplicial complex} if the following two conditions are satisfied:
\begin{itemize}
    \item the faces of every $\sigma$ in $\Gamma$ are all in $\Gamma$.
    \item for every pair $\sigma,\sigma'$ of simplices in $\Gamma$, the intersection of $\sigma$ and $\sigma'$ is a face of both $\sigma$ and $\sigma'$.
\end{itemize}
The empty set is considered as a face of any simplex. The {\em underlying space} of a geometric simplicial complex $\Gamma$ is the set of all points contained in at least one simplex of $\Gamma$. It is denoted by $\|\Gamma\|$ and we have thus $\|\Gamma\| = \bigcup_{\sigma\in \Gamma}\sigma$.


There is a strong relation between abstract and geometric simplicial complexes. The vertex sets of the simplices of a geometric simplicial complex form an abstract simplicial complex. The former is then a {\em geometric realization} of the latter. On the other hand, any abstract simplicial complex admits a geometric realization by taking generic representatives of its vertices in a sufficiently large Euclidean space. The equivalence between abstract and geometric simplicial complexes makes that one often switches from one point of view to the other without further mention, and the same notation $\K$ may be used for an abstract simplicial complex and its geometric realization.

\subsection{Simplicial maps and triangulations} Given two abstract simplicial complexes $\K_1$ and $\K_2$, a map $\lambda\colon V(\K_1)\rightarrow V(\K_2)$ is a {\em simplicial map} of $\K_1$ into $\K_2$ if $\lambda(\sigma)\in\K_2$ for every $\sigma\in\K_1$. A bijective simplicial map $\lambda$ of $\K_1$ into $\K_2$ is an {\em isomorphism} if its inverse map is also a simplicial map. It is an {\em automorphism} if moreover $\K_1=\K_2$. 

Any such simplicial map $\lambda$ induces a natural continuous map $\|\Gamma_1\|\rightarrow\|\Gamma_2\|$, where $\Gamma_1$ and $\Gamma_2$ are geometric realizations of respectively $\K_1$ and $\K_2$, as we explain now. The {\em affine extension} of $\lambda$, denoted 
by ${\bar \lambda}$, is the map $\|\Gamma_1\|\rightarrow\|\Gamma_2\|$ obtained by extending the original $\lambda$ affinely to the relative interiors of the simplices of $\Gamma_1$: given a simplex $\sigma$ in $\Gamma_1$ with vertices $\vv_1,\ldots,\vv_k$, a point $\x=\sum_{i=1}^k\alpha_i\vv_i$ in the relative interior of $\sigma$ with $\alpha_1,\ldots\alpha_k\ge 0$, and $\sum_{i=1}^k\alpha_i=1$, we define ${\bar \lambda}(\x)$ as the point $\sum_{i=1}^k\alpha_i\lambda(\vv_i)$. It can be proved that for every such simplicial map $\lambda$, the map ${\bar \lambda}$ is continuous. If $\lambda$ is an isomorphism, then ${\bar \lambda}$ is a homeomorphism.

A {\em triangulation} of a topological space $X$ is a simplicial complex (abstract or geometric) whose underlying space is homeomorphic to $X$. By a natural extension, a triangulation of a simplicial complex is defined as a triangulation of its underlying space.

\subsection{Group acting on simplicial complexes and topological spaces}

Let $(G,\cdot)$ be a finite group, whose neutral element is denoted by $e$.

An {\em action} of $G$ on an abstract simplicial complex $\K$ is a collection $(\varphi_g)_{g\in G}$ of simplicial automorphisms $\varphi_g$ of $\K$ such that
\begin{itemize}
    \item[$\bullet$]  $\varphi_e = \id_\K$ (where $\id_\K$ is the simplicial identity map of $\K$), and
    \item[$\bullet$]  $\varphi_g\circ\varphi_h = \varphi_{g\cdot h}$ for all $g,h\in G$.
\end{itemize}

Similarly, an {\em action} of $G$ on a topological space $X$ is a collection $(\psi_g)_{g\in G}$ of homeomorphisms $\psi_g\colon X\rightarrow X$ such that
\begin{itemize}
    \item[$\bullet$] $\psi_{e} = \id_X$, and
    \item[$\bullet$]  $\psi_g\circ\psi_h = \psi_{g\cdot h}$ for all $g,h\in G$.
\end{itemize}
 For a point $x\in X$, the set $\{\psi_g(x)\colon g \in G\}$ is the {\em orbit} of the point $x$ under the $G$-action.
 
 Note that if $(\varphi_g)_{g\in G}$ is an action of $G$ on an abstract simplicial complex $\K$ with geometric realization $\Gamma$, then $({\bar \varphi}_g)_{g\in G}$ is an action of $G$ on $\|\Gamma\|$. This latter action is considered without further mention when the action has only been introduced for an abstract simplicial complex and when the underlying space of a geometric realization is then considered.
 
Sometimes, when there is no ambiguity, $\varphi_g(v)$ and $\varphi_g(x)$ are simply denoted by $g\cdot v$ and $g\cdot x$.

\subsection{Connectivity}

A topological space $X$ is $d$-connected if every continuous map ${\bar f}\colon S^k\rightarrow X$ with $k\in\{-1,0,\ldots,d\}$ can be extended to a continuous map $\bar f\colon B^{k+1}\rightarrow X$. Here, $S^k$ denotes the $k$-dimensional sphere and $B^k$ the $k$-dimensional ball. In this context, $S^{-1}$ is interpreted as $\varnothing$ and $B^0$ as a single point. Therefore, $(-1)$-connected means nonempty.

\end{document}